\newcommand{\n}{\noindent}
\newcommand{\Z}{\mathbb{Z}}
\newcommand{\N}{\mathbb{N}}
\newcommand{\R}{\mathbb{R}}
\newcommand{\tr}{\text{tr}}
\newcommand{\x}{\times}
\numberwithin{equation}{section}
\newtheorem{thm}{Theorem}[section]
\newtheorem{lem}[thm]{Lemma}
\newtheorem{cor}[thm]{Corollary}
\newtheorem{prop}[thm]{Proposition}
\theoremstyle{definition}
\newtheorem{defn}[thm]{Definition}
\newtheorem{ex}[thm]{Example}
\newtheorem{rem}[thm]{Remark}
\newtheorem*{Note}{Note}
\begin{document}
\allowdisplaybreaks

\newcommand{\arXivNumber}{2106.07129}

\renewcommand{\PaperNumber}{063}

\FirstPageHeading

\ShortArticleName{A Path-Counting Analysis of Phase Shifts in Box-Ball Systems}

\ArticleName{A Path-Counting Analysis of Phase Shifts\\ in Box-Ball Systems}

\Author{Nicholas M. ERCOLANI and Jonathan RAMALHEIRA-TSU}

\AuthorNameForHeading{N.M.~Ercolani and J.~Ramalheira-Tsu}

\Address{Department of Mathematics, University of Arizona, USA}
\Email{\href{mailto:ercolani@math.arizona.edu}{ercolani@math.arizona.edu}, \href{mailto:jramalheiratsu@arizona.edu}{jramalheiratsu@arizona.edu}}
\URLaddress{\url{http://www.math.arizona.edu/~ercolani},\newline
\hspace*{10.5mm}\url{http://www.math.arizona.edu/~jramalheiratsu}}

\ArticleDates{Received April 08, 2022, in final form August 20, 2022; Published online August 25, 2022}

\Abstract{In this paper, we perform a detailed analysis of the phase shift phenomenon of the classical soliton cellular automaton known as the box-ball system, ultimately resulting in a statement and proof of a formula describing this phase shift. This phenomenon has been observed since the nineties, when the system was first introduced by Takahashi and Satsuma, but no explicit global description was made beyond its observation. By using the Gessel--Viennot--Lindstr\"om lemma and path-counting arguments, we present here a novel proof of the classical phase shift formula for the continuous-time Toda lattice, as discovered by Moser, and use this proof to derive a discrete-time Toda lattice analogue of the phase shift phenomenon. By carefully analysing the connection between the box-ball system and the discrete-time Toda lattice, through the mechanism of tropicalisation/dequantisation, we translate this discrete-time Toda lattice phase shift formula into our new formula for the box-ball system phase shift.}

\Keywords{soliton phase shifts; box-ball system; ultradiscretization; Gessel--Viennot--Lind\-str\"om lemma}

\Classification{17B80; 37J70; 37K10}

\vspace{-2mm}

\section{Introduction}

In classical solitary wave theory there are three well-known signature features of the inherent nonlinearity of these waves that are inter-related. The first is that in long time, forward and backward, an exact\footnote{\textit{Exact} here means an $n$-soliton wave-form that propagates without ``radiating'' any dispersive oscillations.} solitary wave asymptotically separates into distinct localized wave-forms, referred to as {\it masses}, that maintain their form as they propagate further (both forward or backward). The second feature is that in long time these individual masses are ordered by their peak amplitudes, travelling with individual speeds asymptotically proportional to the square root of the amplitude. This is referred to as the {\it sorting property}. Finally, at intermediate times, when these individual masses interact, they pass through one another with their wave-forms unchanged but experiencing a relative {\it phase shift}. We illustrate this in the case of the KdV equation~\cite{bib:dj} with a space-time plot in Figure~\ref{kdvphase}.

\begin{figure}[h!] \centering
 \includegraphics[width=0.43\textwidth]{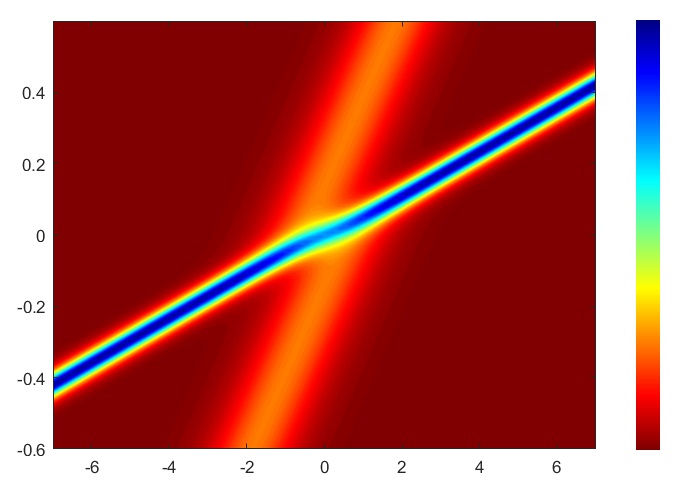}
 \caption{A 2-soliton solution to the KdV equation, $U(x,t)$, with $x$ along the horizontal axis and $t$ along the vertical. The gradient from red to blue represents small values to large. We see two coherent waves: one with a higher peak (blue) and another with a lower peak (orange). The larger wave, initially behind, collides with the smaller, and emerges after the collision in front. Note also the phase shift which corresponds to the ``breaks'' in the linear evolution of the peaks.} \label{kdvphase}
\end{figure}

The sorting and phase shift properties are characteristic of {\it integrable} evolution equations exhibiting coherent states of nonlinear wave type. In the next two sections of this paper, we will discuss two other examples of these phenomena: the finite Toda lattice and the box-ball system (BBS). They turn out to be related to one another through a type of semi-classical limit known as {\it tropicalization} or {\it ultra-discretization} as has been well-documented \cite{bib:l,bib:lmrs,bib:v}.

The precise sorting and phase shift formulas for the Toda lattice have been known for some time going back to Moser~\cite{bib:moser}. It is also the case that early work by Tokihiro et al.~\cite{bib:tns} derives results related to the explicit BBS phase shift formula that we will present. A comparison to the work of~\cite{bib:tns} is given later in Section~\ref{subsection:tokihiro} which discusses this and other related work. In this paper we give a new proof of the classical phase shift formula for Toda within a representation theoretic framework closely related to Kostant's generalized Toda systems~\cite{bib:kostant}.
However, rather than basing the proof on representation theoretic arguments, as Kostant does, our approach is based on a path counting construction. The potential advantages of this are, we feel, two-fold. First, Kostant's extension of the phase shift formula depends on somewhat complicated determinantal formulas, whereas what we present is completely geometric and combinatorial thus making generalizations more natural. Secondly, we found, subsequent to our initial derivations, that the directed graphs we employ for our path counting are, in fact, those used by Postnikov~\cite{bib:mr} to coordinatize the totally positive Grassmannians and their tropicalization. This reinforces our belief that this is the right geometric picture for our second main result which is to derive the explicit, general phase shift formula for BBS by direct tropical limit of the classical Toda formula.\looseness=-1

\subsection{Outline}
In the next two sections we will present the necessary background on the box-ball system and the Toda lattice to be able to explicitly describe the phase shift phenomena in these systems. Sections~\ref{crystal} and~\ref{repntheorysec} provide the necessary representation theoretic background that underlies our arguments. These are presented in such a way as to make it clear how our analysis would extend to the setting of generalized Toda lattices as introduced by Kostant~\cite{bib:kostant2}. In Section~\ref{todaphaseshiftsec} we present our novel derivation of the classical phase shift formula for both the classical (Theorem~\ref{classicaltodaphase}) and discrete-time (Theorem~\ref{discretetodaphase}) Toda systems in terms of the Gessel--Viennot--Lindstr\"om (GVL) lemma. Finally, in Section \ref{bbsphasesec}, we use this to deduce the phase shift formula for BBS (Theorem~\ref{maintheorembbsphaseshiftlim}). The appendices expand on details for the path-counting construction of our GVL formul{\ae}.\looseness=-1

\section{Phase shift in box-ball dynamics}
\subsection{Box-ball systems}
A cellular automaton is a special type of discrete dynamical system with both discrete time steps and a discrete (in fact finite) number of states. Of particular interest is the box-ball system (BBS) which was introduced in 1990 by Takahashi and Satsuma~\cite{bib:ts}.

\begin{defn}\label{bbsoneszeroesdefn}
The (basic) box-ball system consists of a one-dimensional infinite array of boxes with a finite number of the boxes filled with balls, and no more than one ball in each box (see, for example, Figure \ref{firstbbsexfordef}).

More formally, the phase space of this system, which we denote by $\text{BBS}$, can be identified with the space of binary sequences
$\{0,1\}^\Z$, with all but finitely many entries equal to zero, so that $1$'s correspond to filled boxes and $0$'s to empty boxes.
\end{defn}

\begin{figure}[h!]
\centering
\tikz[scale=0.6]{
\foreach \x in {0,1,2,3,4,5,6,7,8,9,10,11,12,13,14,15}
{\draw[fill=white] (\x,3) -- (\x+1,3) -- (\x+1,4) -- (\x,4) -- cycle;			
}
\foreach \x in {1,2,3,7,10,11,13}
{\draw[fill=white] (\x,3) -- (\x+1,3) -- (\x+1,4) -- (\x,4) -- cycle;			
\fill[cyan] (\x+0.5,3.5) circle (0.25);
}
\foreach \x in {}
{\draw[fill=white] (\x,3) -- (\x+1,3) -- (\x+1,4) -- (\x,4) -- cycle;			
\fill[red] (\x+0.5,3.5) circle (0.25);
}
\foreach \x in {16}
{\draw[fill=white,white] (\x,3) -- (\x+2,3) -- (\x+2,4) -- (\x,4) -- cycle;
\draw[-] (\x,3) -- (\x,4);
\draw[-] (\x,3) -- (\x+2,3);
\draw[-] (\x,4) -- (\x+2,4);
\draw[-] (\x+1,3) -- (\x+1,4);
\node at (\x+1.5,3.5) {$\cdots$};
}
\foreach \x in {0}
{\draw[fill=white,white] (\x,3) -- (\x-2,3) -- (\x-2,4) -- (\x,4) -- cycle;
\draw[-] (\x,3) -- (\x,4);
\draw[-] (\x,3) -- (\x-2,3);
\draw[-] (\x,4) -- (\x-2,4);
\draw[-] (\x-1,3) -- (\x-1,4);
\node at (\x-1.5,3.5) {$\cdots$};
}
}
\caption{A box-ball state.}\label{firstbbsexfordef}
\end{figure}

\subsection{The box-ball evolution}\label{bbesubsec}
A simple evolution rule is provided for the box-ball dynamics:\index{Basic Box-Ball Evolution}\index{Box-Ball System}
\begin{enumerate}\itemsep=0pt
\item[(1)] Take the left-most ball that has not been moved and move it to the left-most empty box to its right.
\item[(2)] Repeat (1) until all balls have been moved precisely once.
\end{enumerate}

Since the algorithm requires one to know which balls have been moved, we can, without technically changing the algorithm, introduce a colour-coding based on whether balls have moved or not. Balls will be blue until they have moved, after which they will become red. When all balls are red, the colours should be reset to blue, ready for the next time step. Or, equivalently, a $0$-th step of colouring all balls blue should be prescribed. We will use the latter for a minor benefit in brevity. Below is an example of the evolution with this colour-coding, with each ball move separated into a sub-step:

\begin{figure}[h!]
\centering
\tikz[scale=0.6]{
\foreach \x in {0,1,2,3,4,5,6,7,8,9,10,11,12,13,14,15}
{\draw[fill=white] (\x,3) -- (\x+1,3) -- (\x+1,4) -- (\x,4) -- cycle;			
}
\foreach \x in {1,2,3,7,10,11,13}
{\draw[fill=white] (\x,3) -- (\x+1,3) -- (\x+1,4) -- (\x,4) -- cycle;			
\fill[cyan] (\x+0.5,3.5) circle (0.25);
}
\foreach \x in {}
{\draw[fill=white] (\x,3) -- (\x+1,3) -- (\x+1,4) -- (\x,4) -- cycle;			
\fill[red] (\x+0.5,3.5) circle (0.25);
}
\foreach \x in {16}
{\draw[fill=white,white] (\x,3) -- (\x+2,3) -- (\x+2,4) -- (\x,4) -- cycle;
\draw[-] (\x,3) -- (\x,4);
\draw[-] (\x,3) -- (\x+2,3);
\draw[-] (\x,4) -- (\x+2,4);
\draw[-] (\x+1,3) -- (\x+1,4);
\node at (\x+1.5,3.5) {$\cdots$};
}
\foreach \x in {0}
{\draw[fill=white,white] (\x,3) -- (\x-2,3) -- (\x-2,4) -- (\x,4) -- cycle;
\draw[-] (\x,3) -- (\x,4);
\draw[-] (\x,3) -- (\x-2,3);
\draw[-] (\x,4) -- (\x-2,4);
\draw[-] (\x-1,3) -- (\x-1,4);
\node at (\x-1.5,3.5) {$\cdots$};
}
}
\tikz[scale=0.6]{
\foreach \x in {0,1,2,3,4,5,6,7,8,9,10,11,12,13,14,15}
{\draw[fill=white] (\x,3) -- (\x+1,3) -- (\x+1,4) -- (\x,4) -- cycle;			
}
\foreach \x in {2,3,7,10,11,13}
{\draw[fill=white] (\x,3) -- (\x+1,3) -- (\x+1,4) -- (\x,4) -- cycle;			
\fill[cyan] (\x+0.5,3.5) circle (0.25);
}
\foreach \x in {4}
{\draw[fill=white] (\x,3) -- (\x+1,3) -- (\x+1,4) -- (\x,4) -- cycle;			
\fill[red] (\x+0.5,3.5) circle (0.25);
}
\foreach \x in {16}
{\draw[fill=white,white] (\x,3) -- (\x+2,3) -- (\x+2,4) -- (\x,4) -- cycle;
\draw[-] (\x,3) -- (\x,4);
\draw[-] (\x,3) -- (\x+2,3);
\draw[-] (\x,4) -- (\x+2,4);
\draw[-] (\x+1,3) -- (\x+1,4);
\node at (\x+1.5,3.5) {$\cdots$};
}
\foreach \x in {0}
{\draw[fill=white,white] (\x,3) -- (\x-2,3) -- (\x-2,4) -- (\x,4) -- cycle;
\draw[-] (\x,3) -- (\x,4);
\draw[-] (\x,3) -- (\x-2,3);
\draw[-] (\x,4) -- (\x-2,4);
\draw[-] (\x-1,3) -- (\x-1,4);
\node at (\x-1.5,3.5) {$\cdots$};
}
}
\tikz[scale=0.6]{
\foreach \x in {0,1,2,3,4,5,6,7,8,9,10,11,12,13,14,15}
{\draw[fill=white] (\x,3) -- (\x+1,3) -- (\x+1,4) -- (\x,4) -- cycle;			
}
\foreach \x in {3,7,10,11,13}
{\draw[fill=white] (\x,3) -- (\x+1,3) -- (\x+1,4) -- (\x,4) -- cycle;			
\fill[cyan] (\x+0.5,3.5) circle (0.25);
}
\foreach \x in {4,5}
{\draw[fill=white] (\x,3) -- (\x+1,3) -- (\x+1,4) -- (\x,4) -- cycle;			
\fill[red] (\x+0.5,3.5) circle (0.25);
}
\foreach \x in {16}
{\draw[fill=white,white] (\x,3) -- (\x+2,3) -- (\x+2,4) -- (\x,4) -- cycle;
\draw[-] (\x,3) -- (\x,4);
\draw[-] (\x,3) -- (\x+2,3);
\draw[-] (\x,4) -- (\x+2,4);
\draw[-] (\x+1,3) -- (\x+1,4);
\node at (\x+1.5,3.5) {$\cdots$};
}
\foreach \x in {0}
{\draw[fill=white,white] (\x,3) -- (\x-2,3) -- (\x-2,4) -- (\x,4) -- cycle;
\draw[-] (\x,3) -- (\x,4);
\draw[-] (\x,3) -- (\x-2,3);
\draw[-] (\x,4) -- (\x-2,4);
\draw[-] (\x-1,3) -- (\x-1,4);
\node at (\x-1.5,3.5) {$\cdots$};
}
}
\tikz[scale=0.6]{
\foreach \x in {0,1,2,3,4,5,6,7,8,9,10,11,12,13,14,15}
{\draw[fill=white] (\x,3) -- (\x+1,3) -- (\x+1,4) -- (\x,4) -- cycle;			
}
\foreach \x in {7,10,11,13}
{\draw[fill=white] (\x,3) -- (\x+1,3) -- (\x+1,4) -- (\x,4) -- cycle;			
\fill[cyan] (\x+0.5,3.5) circle (0.25);
}
\foreach \x in {4,5,6}
{\draw[fill=white] (\x,3) -- (\x+1,3) -- (\x+1,4) -- (\x,4) -- cycle;			
\fill[red] (\x+0.5,3.5) circle (0.25);
}
\foreach \x in {16}
{\draw[fill=white,white] (\x,3) -- (\x+2,3) -- (\x+2,4) -- (\x,4) -- cycle;
\draw[-] (\x,3) -- (\x,4);
\draw[-] (\x,3) -- (\x+2,3);
\draw[-] (\x,4) -- (\x+2,4);
\draw[-] (\x+1,3) -- (\x+1,4);
\node at (\x+1.5,3.5) {$\cdots$};
}
\foreach \x in {0}
{\draw[fill=white,white] (\x,3) -- (\x-2,3) -- (\x-2,4) -- (\x,4) -- cycle;
\draw[-] (\x,3) -- (\x,4);
\draw[-] (\x,3) -- (\x-2,3);
\draw[-] (\x,4) -- (\x-2,4);
\draw[-] (\x-1,3) -- (\x-1,4);
\node at (\x-1.5,3.5) {$\cdots$};
}
}
\tikz[scale=0.6]{
\foreach \x in {0,1,2,3,4,5,6,7,8,9,10,11,12,13,14,15}
{\draw[fill=white] (\x,3) -- (\x+1,3) -- (\x+1,4) -- (\x,4) -- cycle;			
}
\foreach \x in {10,11,13}
{\draw[fill=white] (\x,3) -- (\x+1,3) -- (\x+1,4) -- (\x,4) -- cycle;			
\fill[cyan] (\x+0.5,3.5) circle (0.25);
}
\foreach \x in {4,5,6,8}
{\draw[fill=white] (\x,3) -- (\x+1,3) -- (\x+1,4) -- (\x,4) -- cycle;			
\fill[red] (\x+0.5,3.5) circle (0.25);
}
\foreach \x in {16}
{\draw[fill=white,white] (\x,3) -- (\x+2,3) -- (\x+2,4) -- (\x,4) -- cycle;
\draw[-] (\x,3) -- (\x,4);
\draw[-] (\x,3) -- (\x+2,3);
\draw[-] (\x,4) -- (\x+2,4);
\draw[-] (\x+1,3) -- (\x+1,4);
\node at (\x+1.5,3.5) {$\cdots$};
}
\foreach \x in {0}
{\draw[fill=white,white] (\x,3) -- (\x-2,3) -- (\x-2,4) -- (\x,4) -- cycle;
\draw[-] (\x,3) -- (\x,4);
\draw[-] (\x,3) -- (\x-2,3);
\draw[-] (\x,4) -- (\x-2,4);
\draw[-] (\x-1,3) -- (\x-1,4);
\node at (\x-1.5,3.5) {$\cdots$};
}
}
\tikz[scale=0.6]{
\foreach \x in {0,1,2,3,4,5,6,7,8,9,10,11,12,13,14,15}
{\draw[fill=white] (\x,3) -- (\x+1,3) -- (\x+1,4) -- (\x,4) -- cycle;			
}
\foreach \x in {11,13}
{\draw[fill=white] (\x,3) -- (\x+1,3) -- (\x+1,4) -- (\x,4) -- cycle;			
\fill[cyan] (\x+0.5,3.5) circle (0.25);
}
\foreach \x in {4,5,6,8,12}
{\draw[fill=white] (\x,3) -- (\x+1,3) -- (\x+1,4) -- (\x,4) -- cycle;			
\fill[red] (\x+0.5,3.5) circle (0.25);
}
\foreach \x in {16}
{\draw[fill=white,white] (\x,3) -- (\x+2,3) -- (\x+2,4) -- (\x,4) -- cycle;
\draw[-] (\x,3) -- (\x,4);
\draw[-] (\x,3) -- (\x+2,3);
\draw[-] (\x,4) -- (\x+2,4);
\draw[-] (\x+1,3) -- (\x+1,4);
\node at (\x+1.5,3.5) {$\cdots$};
}
\foreach \x in {0}
{\draw[fill=white,white] (\x,3) -- (\x-2,3) -- (\x-2,4) -- (\x,4) -- cycle;
\draw[-] (\x,3) -- (\x,4);
\draw[-] (\x,3) -- (\x-2,3);
\draw[-] (\x,4) -- (\x-2,4);
\draw[-] (\x-1,3) -- (\x-1,4);
\node at (\x-1.5,3.5) {$\cdots$};
}
}
\tikz[scale=0.6]{
\foreach \x in {0,1,2,3,4,5,6,7,8,9,10,11,12,13,14,15}
{\draw[fill=white] (\x,3) -- (\x+1,3) -- (\x+1,4) -- (\x,4) -- cycle;			
}
\foreach \x in {13}
{\draw[fill=white] (\x,3) -- (\x+1,3) -- (\x+1,4) -- (\x,4) -- cycle;			
\fill[cyan] (\x+0.5,3.5) circle (0.25);
}
\foreach \x in {4,5,6,8,12,14}
{\draw[fill=white] (\x,3) -- (\x+1,3) -- (\x+1,4) -- (\x,4) -- cycle;			
\fill[red] (\x+0.5,3.5) circle (0.25);
}
\foreach \x in {16}
{\draw[fill=white,white] (\x,3) -- (\x+2,3) -- (\x+2,4) -- (\x,4) -- cycle;
\draw[-] (\x,3) -- (\x,4);
\draw[-] (\x,3) -- (\x+2,3);
\draw[-] (\x,4) -- (\x+2,4);
\draw[-] (\x+1,3) -- (\x+1,4);
\node at (\x+1.5,3.5) {$\cdots$};
}
\foreach \x in {0}
{\draw[fill=white,white] (\x,3) -- (\x-2,3) -- (\x-2,4) -- (\x,4) -- cycle;
\draw[-] (\x,3) -- (\x,4);
\draw[-] (\x,3) -- (\x-2,3);
\draw[-] (\x,4) -- (\x-2,4);
\draw[-] (\x-1,3) -- (\x-1,4);
\node at (\x-1.5,3.5) {$\cdots$};
}
}
\tikz[scale=0.6]{
\foreach \x in {0,1,2,3,4,5,6,7,8,9,10,11,12,13,14,15}
{\draw[fill=white] (\x,3) -- (\x+1,3) -- (\x+1,4) -- (\x,4) -- cycle;			
}
\foreach \x in {}
{\draw[fill=white] (\x,3) -- (\x+1,3) -- (\x+1,4) -- (\x,4) -- cycle;			
\fill[cyan] (\x+0.5,3.5) circle (0.25);
}
\foreach \x in {4,5,6,8,12,14,15}
{\draw[fill=white] (\x,3) -- (\x+1,3) -- (\x+1,4) -- (\x,4) -- cycle;			
\fill[red] (\x+0.5,3.5) circle (0.25);
}
\foreach \x in {16}
{\draw[fill=white,white] (\x,3) -- (\x+2,3) -- (\x+2,4) -- (\x,4) -- cycle;
\draw[-] (\x,3) -- (\x,4);
\draw[-] (\x,3) -- (\x+2,3);
\draw[-] (\x,4) -- (\x+2,4);
\draw[-] (\x+1,3) -- (\x+1,4);
\node at (\x+1.5,3.5) {$\cdots$};
}
\foreach \x in {0}
{\draw[fill=white,white] (\x,3) -- (\x-2,3) -- (\x-2,4) -- (\x,4) -- cycle;
\draw[-] (\x,3) -- (\x,4);
\draw[-] (\x,3) -- (\x-2,3);
\draw[-] (\x,4) -- (\x-2,4);
\draw[-] (\x-1,3) -- (\x-1,4);
\node at (\x-1.5,3.5) {$\cdots$};
}
}

\caption{A box-ball system time evolution (one time step).}\label{firstbbsexample}
\end{figure}	

\subsection{Soliton behaviour and the sorting property}

The box-ball system is sometimes referred to as a soliton cellular automaton. To appreciate this reference, we think of an entire box-ball configuration as being the soliton. In the classical setting, a soliton is thought of as being composed of masses (or pulses) that are nonlinearly related. In the box-ball setting, a ``mass'' corresponds to a consecutive sequence of balls. One may observe (see below) that such a block travels with velocity equal to the number of balls in it, so that larger blocks have velocity greater than smaller blocks. As with classical soliton masses, during the course of its evolution, the blocks may collide, and temporarily change their sizes. However, asymptotically in both forward and backward (discrete) time ($t$), the sizes of blocks comprising the soliton are the same. We will therefore refer to such a configuration as an \textit{$n$-soliton} if the total number of blocks asymptotically is $n$.

After blocks collide, they come out of the collision ordered with the longer blocks ahead of smaller blocks but having a \textit{phase shift} due to the nonlinearity. By phase shift here, we mean the difference between where the block ends up after the collision and where the block would have been if it were not for the collision.

In the following figure, we illustrate how the blocks become ordered after sufficiently many time evolutions. Once sorted, they travel with their respective velocities, never to collide again.

\begin{figure}[h!]
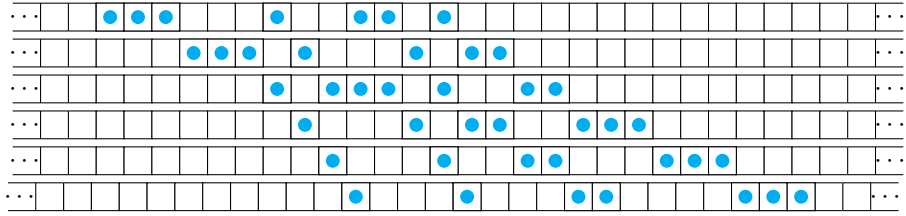
\centering
\tikz[scale=0.37]{
\foreach \x in {0,1,2,3,4,5,6,7,8,9,10,11,12,13,14,15,16,17,18,19,20,21,22,23,24,25,26,27}
{\draw[fill=white] (\x,3) -- (\x+1,3) -- (\x+1,4) -- (\x,4) -- cycle;			
}
\foreach \x in {1,2,3,7,10,11,13}
{\draw[fill=white] (\x,3) -- (\x+1,3) -- (\x+1,4) -- (\x,4) -- cycle;			
\fill[cyan] (\x+0.5,3.5) circle (0.25);
}
\foreach \x in {}
{\draw[fill=white] (\x,3) -- (\x+1,3) -- (\x+1,4) -- (\x,4) -- cycle;			
\fill[red] (\x+0.5,3.5) circle (0.25);
}
\foreach \x in {28}
{\draw[fill=white,white] (\x,3) -- (\x+2,3) -- (\x+2,4) -- (\x,4) -- cycle;
\draw[-] (\x,3) -- (\x,4);
\draw[-] (\x,3) -- (\x+2,3);
\draw[-] (\x,4) -- (\x+2,4);
\draw[-] (\x+1,3) -- (\x+1,4);
\node at (\x+1.6,3.5) {\small$\cdots$};
}
\foreach \x in {0}
{\draw[fill=white,white] (\x,3) -- (\x-2,3) -- (\x-2,4) -- (\x,4) -- cycle;
\draw[-] (\x,3) -- (\x,4);
\draw[-] (\x,3) -- (\x-2,3);
\draw[-] (\x,4) -- (\x-2,4);
\draw[-] (\x-1,3) -- (\x-1,4);
\node at (\x-1.5,3.5) {\small$\cdots$};
}
}
\tikz[scale=0.37]{
\foreach \x in {0,1,2,3,4,5,6,7,8,9,10,11,12,13,14,15,16,17,18,19,20,21,22,23,24,25,26,27}
{\draw[fill=white] (\x,3) -- (\x+1,3) -- (\x+1,4) -- (\x,4) -- cycle;			
}
\foreach \x in {4,5,6,8,12,14,15}
{\draw[fill=white] (\x,3) -- (\x+1,3) -- (\x+1,4) -- (\x,4) -- cycle;			
\fill[cyan] (\x+0.5,3.5) circle (0.25);
}
\foreach \x in {}
{\draw[fill=white] (\x,3) -- (\x+1,3) -- (\x+1,4) -- (\x,4) -- cycle;			
\fill[red] (\x+0.5,3.5) circle (0.25);
}
\foreach \x in {28}
{\draw[fill=white,white] (\x,3) -- (\x+2,3) -- (\x+2,4) -- (\x,4) -- cycle;
\draw[-] (\x,3) -- (\x,4);
\draw[-] (\x,3) -- (\x+2,3);
\draw[-] (\x,4) -- (\x+2,4);
\draw[-] (\x+1,3) -- (\x+1,4);
\node at (\x+1.6,3.5) {\small$\cdots$};
}
\foreach \x in {0}
{\draw[fill=white,white] (\x,3) -- (\x-2,3) -- (\x-2,4) -- (\x,4) -- cycle;
\draw[-] (\x,3) -- (\x,4);
\draw[-] (\x,3) -- (\x-2,3);
\draw[-] (\x,4) -- (\x-2,4);
\draw[-] (\x-1,3) -- (\x-1,4);
\node at (\x-1.5,3.5) {\small$\cdots$};
}
}
\tikz[scale=0.37]{
\foreach \x in {0,1,2,3,4,5,6,7,8,9,10,11,12,13,14,15,16,17,18,19,20,21,22,23,24,25,26,27}
{\draw[fill=white] (\x,3) -- (\x+1,3) -- (\x+1,4) -- (\x,4) -- cycle;			
}
\foreach \x in {7,9,10,11,13,16,17}
{\draw[fill=white] (\x,3) -- (\x+1,3) -- (\x+1,4) -- (\x,4) -- cycle;			
\fill[cyan] (\x+0.5,3.5) circle (0.25);
}
\foreach \x in {}
{\draw[fill=white] (\x,3) -- (\x+1,3) -- (\x+1,4) -- (\x,4) -- cycle;			
\fill[red] (\x+0.5,3.5) circle (0.25);
}
\foreach \x in {28}
{\draw[fill=white,white] (\x,3) -- (\x+2,3) -- (\x+2,4) -- (\x,4) -- cycle;
\draw[-] (\x,3) -- (\x,4);
\draw[-] (\x,3) -- (\x+2,3);
\draw[-] (\x,4) -- (\x+2,4);
\draw[-] (\x+1,3) -- (\x+1,4);
\node at (\x+1.6,3.5) {\small$\cdots$};
}
\foreach \x in {0}
{\draw[fill=white,white] (\x,3) -- (\x-2,3) -- (\x-2,4) -- (\x,4) -- cycle;
\draw[-] (\x,3) -- (\x,4);
\draw[-] (\x,3) -- (\x-2,3);
\draw[-] (\x,4) -- (\x-2,4);
\draw[-] (\x-1,3) -- (\x-1,4);
\node at (\x-1.5,3.5) {\small$\cdots$};
}
}
\tikz[scale=0.37]{
\foreach \x in {0,1,2,3,4,5,6,7,8,9,10,11,12,13,14,15,16,17,18,19,20,21,22,23,24,25,26,27}
{\draw[fill=white] (\x,3) -- (\x+1,3) -- (\x+1,4) -- (\x,4) -- cycle;			
}
\foreach \x in {8,12,14,15,18,19,20}
{\draw[fill=white] (\x,3) -- (\x+1,3) -- (\x+1,4) -- (\x,4) -- cycle;			
\fill[cyan] (\x+0.5,3.5) circle (0.25);
}
\foreach \x in {}
{\draw[fill=white] (\x,3) -- (\x+1,3) -- (\x+1,4) -- (\x,4) -- cycle;			
\fill[red] (\x+0.5,3.5) circle (0.25);
}
\foreach \x in {28}
{\draw[fill=white,white] (\x,3) -- (\x+2,3) -- (\x+2,4) -- (\x,4) -- cycle;
\draw[-] (\x,3) -- (\x,4);
\draw[-] (\x,3) -- (\x+2,3);
\draw[-] (\x,4) -- (\x+2,4);
\draw[-] (\x+1,3) -- (\x+1,4);
\node at (\x+1.6,3.5) {\small$\cdots$};
}
\foreach \x in {0}
{\draw[fill=white,white] (\x,3) -- (\x-2,3) -- (\x-2,4) -- (\x,4) -- cycle;
\draw[-] (\x,3) -- (\x,4);
\draw[-] (\x,3) -- (\x-2,3);
\draw[-] (\x,4) -- (\x-2,4);
\draw[-] (\x-1,3) -- (\x-1,4);
\node at (\x-1.5,3.5) {\small$\cdots$};
}
}
\tikz[scale=0.37]{
\foreach \x in {0,1,2,3,4,5,6,7,8,9,10,11,12,13,14,15,16,17,18,19,20,21,22,23,24,25,26,27}
{\draw[fill=white] (\x,3) -- (\x+1,3) -- (\x+1,4) -- (\x,4) -- cycle;			
}
\foreach \x in {9,13,16,17,21,22,23}
{\draw[fill=white] (\x,3) -- (\x+1,3) -- (\x+1,4) -- (\x,4) -- cycle;			
\fill[cyan] (\x+0.5,3.5) circle (0.25);
}
\foreach \x in {}
{\draw[fill=white] (\x,3) -- (\x+1,3) -- (\x+1,4) -- (\x,4) -- cycle;			
\fill[red] (\x+0.5,3.5) circle (0.25);
}
\foreach \x in {28}
{\draw[fill=white,white] (\x,3) -- (\x+2,3) -- (\x+2,4) -- (\x,4) -- cycle;
\draw[-] (\x,3) -- (\x,4);
\draw[-] (\x,3) -- (\x+2,3);
\draw[-] (\x,4) -- (\x+2,4);
\draw[-] (\x+1,3) -- (\x+1,4);
\node at (\x+1.6,3.5) {\small$\cdots$};
}
\foreach \x in {0}
{\draw[fill=white,white] (\x,3) -- (\x-2,3) -- (\x-2,4) -- (\x,4) -- cycle;
\draw[-] (\x,3) -- (\x,4);
\draw[-] (\x,3) -- (\x-2,3);
\draw[-] (\x,4) -- (\x-2,4);
\draw[-] (\x-1,3) -- (\x-1,4);
\node at (\x-1.5,3.5) {\small$\cdots$};
}
}
\tikz[scale=0.37]{
\foreach \x in {0,1,2,3,4,5,6,7,8,9,10,11,12,13,14,15,16,17,18,19,20,21,22,23,24,25,26,27}
{\draw[fill=white] (\x,3) -- (\x+1,3) -- (\x+1,4) -- (\x,4) -- cycle;			
}
\foreach \x in {10,14,18,19,24,25,26}
{\draw[fill=white] (\x,3) -- (\x+1,3) -- (\x+1,4) -- (\x,4) -- cycle;			
\fill[cyan] (\x+0.5,3.5) circle (0.25);
}
\foreach \x in {}
{\draw[fill=white] (\x,3) -- (\x+1,3) -- (\x+1,4) -- (\x,4) -- cycle;			
\fill[red] (\x+0.5,3.5) circle (0.25);
}
\foreach \x in {28}
{\draw[fill=white,white] (\x,3) -- (\x+2,3) -- (\x+2,4) -- (\x,4) -- cycle;
\draw[-] (\x,3) -- (\x,4);
\draw[-] (\x,3) -- (\x+2,3);
\draw[-] (\x,4) -- (\x+2,4);
\draw[-] (\x+1,3) -- (\x+1,4);
\node at (\x+1.6,3.5) {\small$\cdots$};
}
\foreach \x in {0}
{\draw[fill=white,white] (\x,3) -- (\x-2,3) -- (\x-2,4) -- (\x,4) -- cycle;
\draw[-] (\x,3) -- (\x,4);
\draw[-] (\x,3) -- (\x-2,3);
\draw[-] (\x,4) -- (\x-2,4);
\draw[-] (\x-1,3) -- (\x-1,4);
\node at (\x-1.5,3.5) {\small$\cdots$};
}
}
\caption{The sorting property of the box-ball system.}\label{sortingpropbbs}
\end{figure}

\subsubsection{Coordinates on the box-ball system}
Suppose at time $t$, one has $n$ blocks in the soliton. Let $Q_1^t, Q_2^t, \dots, Q_n^t$ denote the lengths of these blocks, taken from left to right. Let $W_1^t, W_2^t, \dots, W_{n-1}^t$ denote the lengths of the sets of empty boxes between the $n$ sets of filled boxes, again taken from left to right. Lastly, let $W_0^t$ and $W_n^t$ be formally defined to be $\infty$, reflecting the fact that the empty boxes continue infinitely in both directions.

The following theorem gives evolution equations for these coordinates. They can be found, for example, in~\cite{bib:tokihiro}.

\begin{thm}[\cite{bib:tokihiro}]\label{thmbbscoords}
The coordinates $\big(W_0^t,Q_1^t,W_1^t,\dots,Q_n^t,W_n^t\big)$ evolve under the box ball dynamics according to
\begin{gather}
W_0^{t+1}=W_n^{t+1}=\infty,\qquad
W_i^{t+1}=Q_{i+1}^t+W_i^t-Q_i^{t+1},\qquad i=1,\dots,n-1,\label{Witplusoneeqnbbs}\\
Q_i^{t+1}=\min\Bigg(W_i^{t},\sum_{j=1}^i Q_j^t-\sum_{j=1}^{i-1}
Q_j^{t+1}\Bigg),\qquad i=1,\dots,n.\nonumber
\end{gather}
\end{thm}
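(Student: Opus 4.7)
The plan is to establish both equations by reformulating the BBS move of Section~\ref{bbesubsec} as a single left-to-right sweep by a \emph{carrier}: the carrier enters from the far left empty and, at each box, either absorbs the ball (if the box is filled, leaving it empty at time $t+1$) or deposits one of its balls (if the box is empty and the carrier is nonempty, filling it at time $t+1$). Equivalence with the leftmost-unmoved-ball rule is a short induction on the sub-steps of the BBS rule: in both formulations each emitted ball lands in the leftmost empty box strictly to the right of the leftmost still-unplaced ball, and indistinguishability of balls makes the resulting configurations identical.

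Granting the carrier reformulation, let $c_i$ denote the carrier load on entry to the block $B_i$, so $c_1=0$. The sweep through $B_i$ empties all $Q_i^t$ of its sites and raises the load to $c_i+Q_i^t$; the sweep through $E_i$ then deposits $\min(c_i+Q_i^t,\,W_i^t)$ balls contiguously at its left end and exits with load
\begin{equation*}
c_{i+1} \;=\; \bigl(c_i+Q_i^t-W_i^t\bigr)^{+}.
\end{equation*}
Because the sweep empties $B_{i+1}$ before the next deposition occurs, each deposited stretch inside $E_i$ is flanked by empty buffers of length $\geq Q_i^t\geq 1$ on the left and $\geq Q_{i+1}^t\geq 1$ on the right; consequently the $n$ deposited stretches are precisely the $n$ new blocks, so
\begin{equation*}
Q_i^{t+1} \;=\; \min\bigl(c_i+Q_i^t,\,W_i^t\bigr),\qquad W_i^{t+1} \;=\; \bigl(W_i^t-Q_i^{t+1}\bigr)+Q_{i+1}^t,
\end{equation*}
the second formula being a direct spatial count of the empty sites between consecutive new blocks. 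The extreme gaps $W_0^{t+1}$ and $W_n^{t+1}$ remain infinite since only finitely many balls ever move.

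It remains to convert the local recursion for $c_i$ into the closed form appearing in the theorem. A one-line induction on $i$ establishes
\begin{equation*}
c_i \;=\; \sum_{j=1}^{i-1}Q_j^t-\sum_{j=1}^{i-1}Q_j^{t+1},
\end{equation*}
the base $i=1$ being $0=0$; for the inductive step, the elementary identity $x-\min(x,W_i^t)=(x-W_i^t)^{+}$ applied to $x=c_i+Q_i^t=\sum_{j=1}^{i}Q_j^t-\sum_{j=1}^{i-1}Q_j^{t+1}$ yields $c_{i+1}=x-\min(x,W_i^t)=\sum_{j=1}^{i}Q_j^t-\sum_{j=1}^{i}Q_j^{t+1}$. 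Substituting this expression for $c_i$ into $Q_i^{t+1}=\min(c_i+Q_i^t,W_i^t)$ reproduces exactly the formula in the theorem.

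The only step needing genuine care is the carrier--BBS equivalence together with the verification that the deposited stretches are genuinely \emph{separate} new blocks rather than pieces of a merged larger block. Both reduce to the observation that every $B_j$ is emptied by the sweep before the next deposition begins, producing an empty buffer of length at least one between consecutive deposited stretches; everything else is an elementary induction on $i$ and a direct spatial count.
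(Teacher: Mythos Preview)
The paper does not supply its own proof of Theorem~\ref{thmbbscoords}: it is stated with attribution to Tokihiro~\cite{bib:tokihiro} and used thereafter as a known result. So there is no ``paper's proof'' to compare against.

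Your argument via the carrier reformulation is correct and is in fact the standard route to these equations in the BBS literature. The key steps---that the sweep through $B_i$ followed by $E_i$ deposits a contiguous stretch of length $\min(c_i+Q_i^t,W_i^t)$ at the left end of $E_i$, that the emptying of $B_{i+1}$ guarantees a positive gap $W_i^{t+1}=(W_i^t-Q_i^{t+1})+Q_{i+1}^t\geq Q_{i+1}^t\geq 1$ between consecutive deposited stretches (so there are exactly $n$ new blocks), and the telescoping induction $c_i=\sum_{j<i}Q_j^t-\sum_{j<i}Q_j^{t+1}$ via $x-\min(x,W_i^t)=(x-W_i^t)^+$---are all sound. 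The one place you gesture rather than argue is the carrier--BBS equivalence itself; this is indeed a well-known fact (often attributed to Takahashi--Matsukidaira), but if you want the proof to be self-contained you should spell out the induction you allude to: at each sub-step the leftmost unmoved ball coincides with the leftmost ball not yet absorbed by the carrier, and the leftmost empty box to its right coincides with the next deposition site.
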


\begin{rem}It follows from Remark 2.6 in \cite{bib:er} that each $W_i^t>0$ for each $i$ and for all time. Furthermore, since there are always $n$ blocks, each $Q_i^t>0$ (by definition of a block).
\end{rem}

\begin{ex}Take the initial state in Figure \ref{firstbbsexample}:
\begin{figure}[h!]
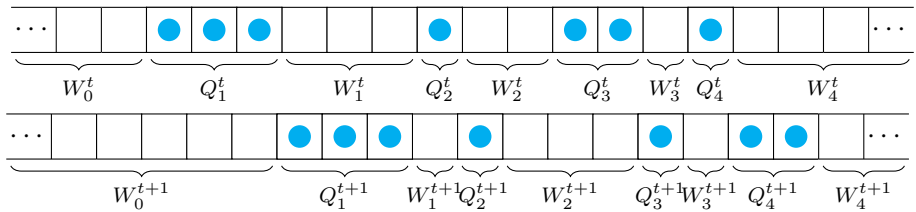
\centering
\tikz[scale=0.6]{
\foreach \x in {0,1,2,3,4,5,6,7,8,9,10,11,12,13,14,15}
{\draw[fill=white] (\x,3) -- (\x+1,3) -- (\x+1,4) -- (\x,4) -- cycle;			
}
\foreach \x in {1,2,3,7,10,11,13}
{\draw[fill=white] (\x,3) -- (\x+1,3) -- (\x+1,4) -- (\x,4) -- cycle;			
\fill[cyan] (\x+0.5,3.5) circle (0.25);
}
\foreach \x in {}
{\draw[fill=white] (\x,3) -- (\x+1,3) -- (\x+1,4) -- (\x,4) -- cycle;			
\fill[red] (\x+0.5,3.5) circle (0.25);
}
\foreach \x in {16}
{\draw[fill=white,white] (\x,3) -- (\x+2,3) -- (\x+2,4) -- (\x,4) -- cycle;
\draw[-] (\x,3) -- (\x,4);
\draw[-] (\x,3) -- (\x+2,3);
\draw[-] (\x,4) -- (\x+2,4);
\draw[-] (\x+1,3) -- (\x+1,4);
\node at (\x+1.5,3.5) {$\cdots$};
}
\foreach \x in {0}
{\draw[fill=white,white] (\x,3) -- (\x-2,3) -- (\x-2,4) -- (\x,4) -- cycle;
\draw[-] (\x,3) -- (\x,4);
\draw[-] (\x,3) -- (\x-2,3);
\draw[-] (\x,4) -- (\x-2,4);
\draw[-] (\x-1,3) -- (\x-1,4);
\node at (\x-1.5,3.5) {$\cdots$};
}
\draw [decorate,decoration={brace,amplitude=4pt}] (0.9,2.85) -- (-1.9,2.85) node [black,midway,yshift=-0.4cm] {\scriptsize{$W_0^t$}};
\draw [decorate,decoration={brace,amplitude=4pt}] (3.9,2.85) -- (1.1,2.85) node [black,midway,yshift=-0.4cm] {\scriptsize{$Q_1^t$}};
\draw [decorate,decoration={brace,amplitude=4pt}] (6.9,2.85) -- (4.1,2.85) node [black,midway,yshift=-0.4cm] {\scriptsize{$W_1^t$}};
\draw [decorate,decoration={brace,amplitude=4pt}] (7.9,2.85) -- (7.1,2.85) node [black,midway,yshift=-0.4cm] {\scriptsize{$Q_2^t$}};
\draw [decorate,decoration={brace,amplitude=4pt}] (9.9,2.85) -- (8.1,2.85) node [black,midway,yshift=-0.4cm] {\scriptsize{$W_2^t$}};
\draw [decorate,decoration={brace,amplitude=4pt}] (11.9,2.85) -- (10.1,2.85) node [black,midway,yshift=-0.4cm] {\scriptsize{$Q_3^t$}};
\draw [decorate,decoration={brace,amplitude=4pt}] (12.9,2.85) -- (12.1,2.85) node [black,midway,yshift=-0.4cm] {\scriptsize{$W_3^t$}};
\draw [decorate,decoration={brace,amplitude=4pt}] (13.9,2.85) -- (13.1,2.85) node [black,midway,yshift=-0.4cm] {\scriptsize{$Q_4^t$}};
\draw [decorate,decoration={brace,amplitude=4pt}] (17.9,2.85) -- (14.1,2.85) node [black,midway,yshift=-0.4cm] {\scriptsize{$W_4^t$}};
}
\tikz[scale=0.6]{
\foreach \x in {0,1,2,3,4,5,6,7,8,9,10,11,12,13,14,15}
{\draw[fill=white] (\x,3) -- (\x+1,3) -- (\x+1,4) -- (\x,4) -- cycle;			
}
\foreach \x in {4,5,6,8,12,14,15}
{\draw[fill=white] (\x,3) -- (\x+1,3) -- (\x+1,4) -- (\x,4) -- cycle;			
\fill[cyan] (\x+0.5,3.5) circle (0.25);
}
\foreach \x in {}
{\draw[fill=white] (\x,3) -- (\x+1,3) -- (\x+1,4) -- (\x,4) -- cycle;			
\fill[red] (\x+0.5,3.5) circle (0.25);
}
\foreach \x in {16}
{\draw[fill=white,white] (\x,3) -- (\x+2,3) -- (\x+2,4) -- (\x,4) -- cycle;
\draw[-] (\x,3) -- (\x,4);
\draw[-] (\x,3) -- (\x+2,3);
\draw[-] (\x,4) -- (\x+2,4);
\draw[-] (\x+1,3) -- (\x+1,4);
\node at (\x+1.5,3.5) {$\cdots$};
}
\foreach \x in {0}
{\draw[fill=white,white] (\x,3) -- (\x-2,3) -- (\x-2,4) -- (\x,4) -- cycle;
\draw[-] (\x,3) -- (\x,4);
\draw[-] (\x,3) -- (\x-2,3);
\draw[-] (\x,4) -- (\x-2,4);
\draw[-] (\x-1,3) -- (\x-1,4);
\node at (\x-1.5,3.5) {$\cdots$};
}
\draw [decorate,decoration={brace,amplitude=4pt}] (3.9,2.85) -- (-1.9,2.85) node [black,midway,yshift=-0.4cm] {\scriptsize{$W_0^{t+1}$}};
\draw [decorate,decoration={brace,amplitude=4pt}] (6.9,2.85) -- (4.1,2.85) node [black,midway,yshift=-0.4cm] {\scriptsize{$Q_1^{t+1}$}};
\draw [decorate,decoration={brace,amplitude=4pt}] (7.9,2.85) -- (7.1,2.85) node [black,midway,yshift=-0.4cm] {\scriptsize{$W_1^{t+1}$}};
\draw [decorate,decoration={brace,amplitude=4pt}] (8.9,2.85) -- (8.1,2.85) node [black,midway,yshift=-0.4cm] {\scriptsize{$\,\,Q_2^{t+1}$}};
\draw [decorate,decoration={brace,amplitude=4pt}] (11.9,2.85) -- (9.1,2.85) node [black,midway,yshift=-0.4cm] {\scriptsize{$W_2^{t+1}$}};
\draw [decorate,decoration={brace,amplitude=4pt}] (12.9,2.85) -- (12.1,2.85) node [black,midway,yshift=-0.4cm] {\scriptsize{$Q_3^{t+1}$}};
\draw [decorate,decoration={brace,amplitude=4pt}] (13.9,2.85) -- (13.1,2.85) node [black,midway,yshift=-0.4cm] {\scriptsize{$\,\,W_3^{t+1}$}};
\draw [decorate,decoration={brace,amplitude=4pt}] (15.9,2.85) -- (14.1,2.85) node [black,midway,yshift=-0.4cm] {\scriptsize{$Q_4^{t+1}$}};
\draw [decorate,decoration={brace,amplitude=4pt}] (17.9,2.85) -- (16.1,2.85) node [black,midway,yshift=-0.4cm] {\scriptsize{$W_4^{t+1}$}};
}
\caption{The box-ball coordinates on a box-ball system and its time evolution.}
\end{figure}

Under the time evolution, the coordinates evolve as
\begin{equation*}
(\infty,3,3,1,2,2,1,1,\infty)\mapsto (\infty,3,1,1,3,1,1,2,\infty).
\end{equation*}
\end{ex}

\subsection{Phase shift phenomenon for the BBS}
We have seen how, as $t\to +\infty$, the blocks sort themselves by increasing length. The same holds in reverse time: as $t\to-\infty$, the blocks are ordered by decreasing lengths. The following example demonstrates why one cannot simply just count block lengths (the third state does not show the soliton structure of blocks of length~1 and~3):

\begin{figure}[h!]
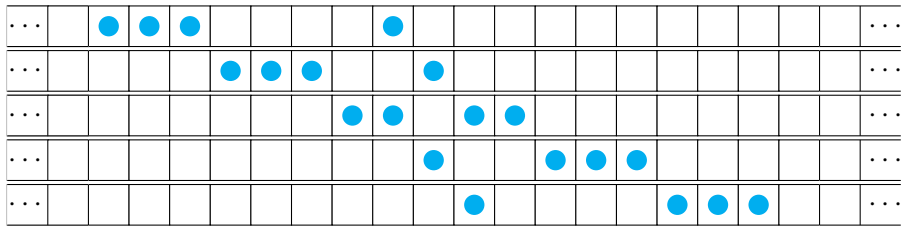

\centering
\tikz[scale=0.54]{
\foreach \y in {3}{
\foreach \x in {-1,0,1,2,3,4,5,6,7,8,9,10,11,12,13,14,15,16,17,18}
{\draw[fill=white] (\x,\y) -- (\x+1,\y) -- (\x+1,\y+1) -- (\x,\y+1) -- cycle;			
}
\foreach \x in {1,2,3,8}
{
\fill[cyan] (\x+0.5,\y+0.5) circle (0.25);
}
\foreach \x in {19}
{\draw[fill=white,white] (\x,\y) -- (\x+2,\y) -- (\x+2,\y+1) -- (\x,\y+1) -- cycle;
\draw[-] (\x,\y) -- (\x,\y+1);
\draw[-] (\x,\y) -- (\x+2,\y);
\draw[-] (\x,\y+1) -- (\x+2,\y+1);
\draw[-] (\x+1,\y) -- (\x+1,\y+1);
\node at (\x+1.5,\y+0.5) {$~\cdots$};
}
\foreach \x in {1}
{\draw[fill=white,white] (\x,\y) -- (\x-2,\y) -- (\x-2,\y+1) -- (\x,\y+1) -- cycle;
\draw[-] (\x,\y) -- (\x,\y+1);
\draw[-] (\x,\y) -- (\x-2,\y);
\draw[-] (\x,\y+1) -- (\x-2,\y+1);
\draw[-] (\x-1,\y+1) -- (\x-1,\y+1);
\draw[-] (\x-1,\y+1) -- (\x-1,\y);
\node at (\x-1.5,\y+0.5) {$\cdots$};
}}
\foreach \y in {1.9}{
\foreach \x in {-1,0,1,2,3,4,5,6,7,8,9,10,11,12,13,14,15,16,17,18}
{\draw[fill=white] (\x,\y) -- (\x+1,\y) -- (\x+1,\y+1) -- (\x,\y+1) -- cycle;			
}
\foreach \x in {4,5,6,9}
{
\fill[cyan] (\x+0.5,\y+0.5) circle (0.25);
}
\foreach \x in {19}
{\draw[fill=white,white] (\x,\y) -- (\x+2,\y) -- (\x+2,\y+1) -- (\x,\y+1) -- cycle;
\draw[-] (\x,\y) -- (\x,\y+1);
\draw[-] (\x,\y) -- (\x+2,\y);
\draw[-] (\x,\y+1) -- (\x+2,\y+1);
\draw[-] (\x+1,\y) -- (\x+1,\y+1);
\node at (\x+1.5,\y+0.5) {$~\cdots$};
}
\foreach \x in {1}
{\draw[fill=white,white] (\x,\y) -- (\x-2,\y) -- (\x-2,\y+1) -- (\x,\y+1) -- cycle;
\draw[-] (\x,\y) -- (\x,\y+1);
\draw[-] (\x,\y) -- (\x-2,\y);
\draw[-] (\x,\y+1) -- (\x-2,\y+1);
\draw[-] (\x-1,\y+1) -- (\x-1,\y+1);
\draw[-] (\x-1,\y+1) -- (\x-1,\y);
\node at (\x-1.5,\y+0.5) {$\cdots$};
}}
\foreach \y in {0.8}{
\foreach \x in {-1,0,1,2,3,4,5,6,7,8,9,10,11,12,13,14,15,16,17,18}
{\draw[fill=white] (\x,\y) -- (\x+1,\y) -- (\x+1,\y+1) -- (\x,\y+1) -- cycle;			
}
\foreach \x in {7,8,10,11}
{
\fill[cyan] (\x+0.5,\y+0.5) circle (0.25);
}
\foreach \x in {19}
{\draw[fill=white,white] (\x,\y) -- (\x+2,\y) -- (\x+2,\y+1) -- (\x,\y+1) -- cycle;
\draw[-] (\x,\y) -- (\x,\y+1);
\draw[-] (\x,\y) -- (\x+2,\y);
\draw[-] (\x,\y+1) -- (\x+2,\y+1);
\draw[-] (\x+1,\y) -- (\x+1,\y+1);
\node at (\x+1.5,\y+0.5) {$~\cdots$};
}
\foreach \x in {1}
{\draw[fill=white,white] (\x,\y) -- (\x-2,\y) -- (\x-2,\y+1) -- (\x,\y+1) -- cycle;
\draw[-] (\x,\y) -- (\x,\y+1);
\draw[-] (\x,\y) -- (\x-2,\y);
\draw[-] (\x,\y+1) -- (\x-2,\y+1);
\draw[-] (\x-1,\y+1) -- (\x-1,\y+1);
\draw[-] (\x-1,\y+1) -- (\x-1,\y);
\node at (\x-1.5,\y+0.5) {$\cdots$};
}}
\foreach \y in {-0.3}{
\foreach \x in {-1,0,1,2,3,4,5,6,7,8,9,10,11,12,13,14,15,16,17,18}
{\draw[fill=white] (\x,\y) -- (\x+1,\y) -- (\x+1,\y+1) -- (\x,\y+1) -- cycle;			
}
\foreach \x in {9,12,13,14}
{
\fill[cyan] (\x+0.5,\y+0.5) circle (0.25);
}
\foreach \x in {19}
{\draw[fill=white,white] (\x,\y) -- (\x+2,\y) -- (\x+2,\y+1) -- (\x,\y+1) -- cycle;
\draw[-] (\x,\y) -- (\x,\y+1);
\draw[-] (\x,\y) -- (\x+2,\y);
\draw[-] (\x,\y+1) -- (\x+2,\y+1);
\draw[-] (\x+1,\y) -- (\x+1,\y+1);
\node at (\x+1.5,\y+0.5) {$~\cdots$};
}
\foreach \x in {1}
{\draw[fill=white,white] (\x,\y) -- (\x-2,\y) -- (\x-2,\y+1) -- (\x,\y+1) -- cycle;
\draw[-] (\x,\y) -- (\x,\y+1);
\draw[-] (\x,\y) -- (\x-2,\y);
\draw[-] (\x,\y+1) -- (\x-2,\y+1);
\draw[-] (\x-1,\y+1) -- (\x-1,\y+1);
\draw[-] (\x-1,\y+1) -- (\x-1,\y);
\node at (\x-1.5,\y+0.5) {$\cdots$};
}}
\foreach \y in {-1.4}{
\foreach \x in {-1,0,1,2,3,4,5,6,7,8,9,10,11,12,13,14,15,16,17,18}
{\draw[fill=white] (\x,\y) -- (\x+1,\y) -- (\x+1,\y+1) -- (\x,\y+1) -- cycle;			
}
\foreach \x in {10,15,16,17}
{
\fill[cyan] (\x+0.5,\y+0.5) circle (0.25);
}
\foreach \x in {19}
{\draw[fill=white,white] (\x,\y) -- (\x+2,\y) -- (\x+2,\y+1) -- (\x,\y+1) -- cycle;
\draw[-] (\x,\y) -- (\x,\y+1);
\draw[-] (\x,\y) -- (\x+2,\y);
\draw[-] (\x,\y+1) -- (\x+2,\y+1);
\draw[-] (\x+1,\y) -- (\x+1,\y+1);
\node at (\x+1.5,\y+0.5) {$~\cdots$};
}
\foreach \x in {1}
{\draw[fill=white,white] (\x,\y) -- (\x-2,\y) -- (\x-2,\y+1) -- (\x,\y+1) -- cycle;
\draw[-] (\x,\y) -- (\x,\y+1);
\draw[-] (\x,\y) -- (\x-2,\y);
\draw[-] (\x,\y+1) -- (\x-2,\y+1);
\draw[-] (\x-1,\y+1) -- (\x-1,\y+1);
\draw[-] (\x-1,\y+1) -- (\x-1,\y);
\node at (\x-1.5,\y+0.5) {$\cdots$};
}}
}
\caption{A phase shift interaction between two colliding blocks.}\label{figcolsdhufhdfa}
\end{figure}	

\begin{rem}\label{remarkaboutspacingforsoliton}
In the above example, we can discern the asymptotic ordering in the first, second, fourth and fifth rows, simply by counting the numbers of balls in each block of adjacent balls. The middle row (the third) could be misleading, since it reveals a $(2,2)$ structure for the blocks. If two blocks are spaced far enough apart, then no such obfuscation occurs.
\end{rem}

Barring this intricacy (i.e., when there is enough space between consecutive blocks), one can take two blocks, evolve sufficiently many times according to the box-ball evolution, and compare the position of the blocks to where they would have been if it had not have been for the collision.

In the figure below, we replicate Figure~\ref{figcolsdhufhdfa}. However, we use green balls to keep track of where the block of three balls would have been without the collision, and magenta balls to keep track of where the block of one ball would have been.

\begin{figure}[h!]
\centering
\tikz[scale=0.51]{
\foreach \x in {0,1,2,3,4,5,6,7,8,9,10,11,12,13,14,15,16,17,18}
{\draw[fill=white] (\x,3) -- (\x+1,3) -- (\x+1,4) -- (\x,4) -- cycle;			
}
\foreach \x in {1,2,3}
{			
\fill[green] (\x+0.5,3.5) circle (0.35);
}
\foreach \x in {8}
{			
\fill[magenta] (\x+0.5,3.5) circle (0.25);
}
\foreach \x in {1,2,3,8}
{			
\fill[cyan] (\x+0.5,3.5) circle (0.15);
}
\foreach \x in {19}
{\draw[fill=white,white] (\x,3) -- (\x+2,3) -- (\x+2,4) -- (\x,4) -- cycle;
\draw[-] (\x,3) -- (\x,4);
\draw[-] (\x,3) -- (\x+2,3);
\draw[-] (\x,4) -- (\x+2,4);
\draw[-] (\x+1,3) -- (\x+1,4);
\node at (\x+1.5,3.5) {$\cdots$};
}
\foreach \x in {0}
{\draw[fill=white,white] (\x,3) -- (\x-2,3) -- (\x-2,4) -- (\x,4) -- cycle;
\draw[-] (\x,3) -- (\x,4);
\draw[-] (\x,3) -- (\x-2,3);
\draw[-] (\x,4) -- (\x-2,4);
\draw[-] (\x-1,3) -- (\x-1,4);
\node at (\x-1.5,3.5) {$\cdots$};
}
}
\tikz[scale=0.51]{
\foreach \x in {0,1,2,3,4,5,6,7,8,9,10,11,12,13,14,15,16,17,18}
{\draw[fill=white] (\x,3) -- (\x+1,3) -- (\x+1,4) -- (\x,4) -- cycle;			
}
\foreach \x in {4,5,6}
{			
\fill[green] (\x+0.5,3.5) circle (0.35);
}
\foreach \x in {9}
{			
\fill[magenta] (\x+0.5,3.5) circle (0.25);
}
\foreach \x in {4,5,6,9}
{			
\fill[cyan] (\x+0.5,3.5) circle (0.15);
}
\foreach \x in {19}
{\draw[fill=white,white] (\x,3) -- (\x+2,3) -- (\x+2,4) -- (\x,4) -- cycle;
\draw[-] (\x,3) -- (\x,4);
\draw[-] (\x,3) -- (\x+2,3);
\draw[-] (\x,4) -- (\x+2,4);
\draw[-] (\x+1,3) -- (\x+1,4);
\node at (\x+1.5,3.5) {$\cdots$};
}
\foreach \x in {0}
{\draw[fill=white,white] (\x,3) -- (\x-2,3) -- (\x-2,4) -- (\x,4) -- cycle;
\draw[-] (\x,3) -- (\x,4);
\draw[-] (\x,3) -- (\x-2,3);
\draw[-] (\x,4) -- (\x-2,4);
\draw[-] (\x-1,3) -- (\x-1,4);
\node at (\x-1.5,3.5) {$\cdots$};
}
}
\tikz[scale=0.51]{
\foreach \x in {0,1,2,3,4,5,6,7,8,9,10,11,12,13,14,15,16,17,18}
{\draw[fill=white] (\x,3) -- (\x+1,3) -- (\x+1,4) -- (\x,4) -- cycle;			
}
\foreach \x in {7,8,9}
{			
\fill[green] (\x+0.5,3.5) circle (0.35);
}
\foreach \x in {10}
{			
\fill[magenta] (\x+0.5,3.5) circle (0.25);
}
\foreach \x in {7,8,10,11}
{			
\fill[cyan] (\x+0.5,3.5) circle (0.15);
}
\foreach \x in {19}
{\draw[fill=white,white] (\x,3) -- (\x+2,3) -- (\x+2,4) -- (\x,4) -- cycle;
\draw[-] (\x,3) -- (\x,4);
\draw[-] (\x,3) -- (\x+2,3);
\draw[-] (\x,4) -- (\x+2,4);
\draw[-] (\x+1,3) -- (\x+1,4);
\node at (\x+1.5,3.5) {$\cdots$};
}
\foreach \x in {0}
{\draw[fill=white,white] (\x,3) -- (\x-2,3) -- (\x-2,4) -- (\x,4) -- cycle;
\draw[-] (\x,3) -- (\x,4);
\draw[-] (\x,3) -- (\x-2,3);
\draw[-] (\x,4) -- (\x-2,4);
\draw[-] (\x-1,3) -- (\x-1,4);
\node at (\x-1.5,3.5) {$\cdots$};
}
}
\tikz[scale=0.51]{
\foreach \x in {0,1,2,3,4,5,6,7,8,9,10,11,12,13,14,15,16,17,18}
{\draw[fill=white] (\x,3) -- (\x+1,3) -- (\x+1,4) -- (\x,4) -- cycle;			
}
\foreach \x in {10,11,12}
{			
\fill[green] (\x+0.5,3.5) circle (0.35);
}
\foreach \x in {11}
{			
\fill[magenta] (\x+0.5,3.5) circle (0.25);
}
\foreach \x in {9,12,13,14}
{			
\fill[cyan] (\x+0.5,3.5) circle (0.15);
}
\foreach \x in {19}
{\draw[fill=white,white] (\x,3) -- (\x+2,3) -- (\x+2,4) -- (\x,4) -- cycle;
\draw[-] (\x,3) -- (\x,4);
\draw[-] (\x,3) -- (\x+2,3);
\draw[-] (\x,4) -- (\x+2,4);
\draw[-] (\x+1,3) -- (\x+1,4);
\node at (\x+1.5,3.5) {$\cdots$};
}
\foreach \x in {0}
{\draw[fill=white,white] (\x,3) -- (\x-2,3) -- (\x-2,4) -- (\x,4) -- cycle;
\draw[-] (\x,3) -- (\x,4);
\draw[-] (\x,3) -- (\x-2,3);
\draw[-] (\x,4) -- (\x-2,4);
\draw[-] (\x-1,3) -- (\x-1,4);
\node at (\x-1.5,3.5) {$\cdots$};
}
}
\tikz[scale=0.51]{
\foreach \x in {0,1,2,3,4,5,6,7,8,9,10,11,12,13,14,15,16,17,18}
{\draw[fill=white] (\x,3) -- (\x+1,3) -- (\x+1,4) -- (\x,4) -- cycle;			
}
\foreach \x in {13,14,15}
{			
\fill[green] (\x+0.5,3.5) circle (0.35);
}
\foreach \x in {12}
{			
\fill[magenta] (\x+0.5,3.5) circle (0.25);
}
\foreach \x in {10,15,16,17}
{			
\fill[cyan] (\x+0.5,3.5) circle (0.15);
}
\foreach \x in {19}
{\draw[fill=white,white] (\x,3) -- (\x+2,3) -- (\x+2,4) -- (\x,4) -- cycle;
\draw[-] (\x,3) -- (\x,4);
\draw[-] (\x,3) -- (\x+2,3);
\draw[-] (\x,4) -- (\x+2,4);
\draw[-] (\x+1,3) -- (\x+1,4);
\node at (\x+1.5,3.5) {$\cdots$};
}
\foreach \x in {0}
{\draw[fill=white,white] (\x,3) -- (\x-2,3) -- (\x-2,4) -- (\x,4) -- cycle;
\draw[-] (\x,3) -- (\x,4);
\draw[-] (\x,3) -- (\x-2,3);
\draw[-] (\x,4) -- (\x-2,4);
\draw[-] (\x-1,3) -- (\x-1,4);
\node at (\x-1.5,3.5) {$\cdots$};
}
}
\end{figure}	

In the above, we look at the position of the block of three blue balls and notice that it is shifted two spaces to the right of where it would have been if it were not for the collision. Similarly, we look at the isolated blue ball and see that it is shifted left two positions of where it would have been without the collision. This shifting of blocks from their \textit{would-be} positions is the phenomenon of phase shifting described above.

\section{The Toda lattice} \label{TODA}

The Toda lattice \cite{bib:toda} is a dynamical system on $\mathbb{R}^{2n}$, with coordinates $(p_1,\dots,p_n,q_1,\dots,q_n)$. The system is Hamiltonian with respect to the standard symplectic structure on~$\mathbb{R}^{2n}$ with Hamiltonian
\begin{equation*}
H(p_1,\dots,p_n,q_1,\dots,q_n)=\dfrac{1}{2}\sum_{j=1}^n p_j^2 + \sum_{j=1}^{n-1}{\rm e}^{q_j-q_{j+1}}.
\end{equation*}

The Toda lattice equations are then given by
\begin{gather}
\dot{q}_j =p_j,\qquad j=1,\dots,n,\label{eqoneofhamiltod}\\
\dot{p}_j = \begin{cases}
-{\rm e}^{q_1-q_2} & \text{if }j=1,\\
{\rm e}^{q_{j-1}-q_j}-{\rm e}^{q_{j}-q_{j+1}} & \text{if }1<j<n,\\
{\rm e}^{q_{n-1}-q_n} & \text{if }j=n.
\end{cases}\label{eqtwoofhamiltod}
\end{gather}

In this representation, boundary conditions of $q_0=-\infty$ and $q_{n+1}=\infty$ have been imposed, which, formally, result in ${\rm e}^{q_0-q_1}={\rm e}^{q_n-q_{n+1}}=0$. These boundary conditions are chosen to truncate the lattice to a finite system.

\subsection{Flaschka's transformation and isospectrality}
Flaschka's transformation \cite{bib:fl} makes the variable replacement
\[
(p_1,\dots,p_n,q_1,\dots,q_n)\mapsto (a_1,\dots,a_n,b_1,\dots,b_{n-1})
\] given by setting $a_j=-p_j$ for $j=1,\dots,n$, and $b_j={\rm e}^{q_j-q_{j+1}}$ for $j=1,\dots,n-1$.
This is clearly a surjection of~$\mathbb{R}^{2n}$ onto the open subset of
$\mathbb{R}^{2n-1}$ with $b_i > 0$. It is evident that uniformly translating the position variables, $(q_1, \dots, q_n) \to (q_1 + c, \dots, q_n + c)$ leaves the equations~\mbox{(\ref{eqoneofhamiltod})--(\ref{eqtwoofhamiltod})} invariant. Fixing a center of mass for this particle system amounts to a particular cross-section of the fibration that the Flaschka transformation presents.

In Flaschka variables, the Toda lattice assumes the following simple form:
\begin{gather}
\dot{b}_j=(a_{j+1}-a_j)b_j,\qquad j=1,\dots,n-1,\label{flaschkatodaaeqn}\\
\dot{a}_j= \begin{cases}
b_1 & \text{if }j=1,\\
b_j-b_{j-1} & \text{if }1<j<n,\\
-b_{n-1} & \text{if }j=n.
\end{cases}\label{flaschkatodabeqn}
\end{gather}
One can arrange the variables neatly into a tridiagonal Hessenberg matrix
\begin{equation*} 
X \doteq \left[\begin{array}{cccc} a_1 & 1\\ b_1 & a_2 & \ddots\\ &\ddots & \ddots & 1\\ &&b_{n-1}&a_n\end{array}\right].
\end{equation*}

It is immediate from (\ref{flaschkatodabeqn}) that $\tr X$ is a constant of motion. This is conservation of momentum corresponding to the translation symmetry mentioned above.

Equations \eqref{flaschkatodaaeqn} and \eqref{flaschkatodabeqn} amount to the following matrix differential equation:
\begin{equation*}
\left[\begin{matrix} a_1 & 1\\ b_1 & a_2 & \ddots\\ &\ddots & \ddots & 1\\ &&b_{n-1}&a_n\end{matrix}\right]^\bullet
 =
\left[\def\arraystretch{2.0}\begin{matrix} b_1 &0 \\ (a_2-a_1)b_1 & b_2-b_1 & \ddots\\ &\ddots & \ddots & 0\\ &&(a_n-a_{n-1})b_{n-1}&-b_{n-1}\end{matrix}\right].
\end{equation*}

This matrix form of the Toda equations has the form of a {\it Lax equation},
\begin{equation} \label{Lax}
\dfrac{{\rm d}}{{\rm d}t}X=[X,\pi_-(X)],\qquad X(0)=X_0,
\end{equation}
where $\pi_-(X)$ denotes the projection of $X$ into its strictly lower part:
\begin{equation*} 
\pi_-(X)= \left[\begin{matrix} 0 & \\ b_1 & 0 & \\ &\ddots & \ddots & \\ &&b_{n-1}&0\end{matrix}\right].
\end{equation*}

It is a straightforward application of the product rule using (\ref{Lax}) to see that
\begin{equation*} 
\frac{{\rm d}}{{\rm d}t} \tr X^k = \tr \frac{{\rm d}}{{\rm d}t} X^k = \tr \big[ X^k , \pi_- (X) \big] = 0.
\end{equation*}
This implies the so-called {\it isospectrality} of the Toda lattice: the eigenvalues of $X$ remain invariant under the Toda flow.

\subsection{Sorting for Toda} \label{Sort}

In Section~\ref{crystal}, we will see that for $t \to \infty$, the Toda solution $X(t)$ limits to a matrix of the form
\begin{equation} \label{epslam}
\epsilon_\lambda = \left[\begin{matrix}
\lambda_1 & 1\\
&\lambda_2 & \ddots\\
&&\ddots & 1\\
&&&\lambda_n
\end{matrix}\right],
\end{equation}
where $ \lambda_1 > \dots > \lambda_n $ are the eigenvalues of $X_0$ which we will assume to be distinct in the remainder of this paper (by isospectrality these are the eigenvalues of $X(t)$ for all $t$). It is immediate from (\ref{Lax}) that $\epsilon_\lambda$ is a fixed point of the Toda flow as are all the other $n!$ matrices of this form with the eigenvalues permuted along the diagonal (and these are the only fixed points). We will also see in Section~\ref{crystal} that as $t \to - \infty$, $X(t)$ limits to the fixed point associated to the longest permutation (which reverses the order of the eigenvalues along the diagonal). Finally, from the Flaschka representation, one sees that the asymptotic velocities of the original Toda variables, in forward and backward time, are the eigenvalues of $X(t)$. So, what we have just described is just the {\it sorting property} of the Toda lattice. Precisely, the asymptotic dynamics is
\begin{gather} \label{asymptotesf}
q_k(t) = \alpha_k^+ t + \beta_k^+ + O\big({\rm e}^{-\delta t}\big),\\ \label{asymptotesb}
q_k(-t) = -\alpha_k^- t + \beta_k^- + O\big({\rm e}^{-\delta t}\big)
\end{gather}
as $t\to +\infty$, where
\[
\alpha_k^+ = \lim_{t\to +\infty} p_k(t) .
\]
To relate this to the Hessenberg form of the Toda lattice, recall that
\[
a_k(t) = -p_k(t) ,\qquad b_k(t) = {\rm e}^{q_k(t)-q_{k+1}(t)}.
\]

So, since the limiting form in forward time of $X(t)$ is given by (\ref{epslam}) (and by its reverse permutation in backward time), one has that
 \begin{gather*}
 \alpha_k^+ = - \lim\limits_{t\to +\infty}a_k(t) = -\lambda_k,\\
 \alpha_k^- = - \lim\limits_{t\to -\infty}a_k(t) = -\lambda_{n-k+1}
 \end{gather*}
 from which we note the symmetry $\alpha_k^-=\alpha_{n-k+1}^+$. Since
 $\lambda_1>\lambda_2>\cdots>\lambda_n$,
 $q_k(t) - q_{k+1}(t) \sim -(\lambda_k - \lambda_{k+1}) t$ as $t \to + \infty$. So, it is indeed the case that the particles spread apart linearly and sort, from which it follows that $b_k(t) \to 0$ exponentially, consistent with the stated limiting form~(\ref{epslam}) of $X(t)$. A similar statement obtains asymptotically in reverse time.

\subsection[Symes's discrete-time matrix dynamics and the discrete-time Toda lattice]{Symes's discrete-time matrix dynamics\\ and the discrete-time Toda lattice}\label{subsecsymesdnt}

In 1980, Symes \cite{bib:symes} proposed a discrete time flow based on matrix factorization that commutes with the continuous time Toda flow described previously. This factorization mirrors that described in Theorem~\ref{factorisationthmbkg}, more commonly referred to as Gaussian elimination or the $LU$-decomposition.

\begin{Note}
In this section and what follows, we continue to use $t$ to denote discrete time, i.e., $t\in \Z$ (as well as for continuous time, $t\in\R$). The temporal domain intended should be clear from context.
\end{Note}

We inductively define Symes's dynamics as a two-step discrete evolution on Hessenberg matrices. If at (discrete) time $t$ we have a matrix $X(t)$, we obtain $X(t+1)$ as follows:
\begin{enumerate}\itemsep=0pt
\item Perform Gaussian elimination to factor $X(t)=L(t)R(t)$, with $L(t)$ lower unipotent and~$R(t)$ upper triangular.
\item Permute the factors to define $X(t+1) = R(t)L(t)$.
\end{enumerate}

\begin{rem}
By construction, one has
\begin{equation} \label{symeseqn}
X(t+1)=R(t)L(t) = \big(L(t)^{-1}X(t)\big)L(t) = L(t)^{-1}X(t)L(t).
\end{equation}
Thus, this discrete evolution is given by conjugating a matrix by its lower unipotent factor. Since the spectrum of a matrix is invariant under conjugation, it follows that the eigenvalues are constants of motion for this discrete evolution; i.e., this discrete flow is isospectral.

Furthermore, if $X(t)$ is tridiagonal, then one can show that $L(t)$ is lower bi-diagonal with ones on its diagonal and~$R(t)$ is upper bi-diagonal with ones on its superdiagonal. Therefore, the product $X(t+1)=R(t)L(t)$ is itself once again a tridiagonal Hessenberg matrix; i.e., this flow preserves the Toda lattice phase space.
\end{rem}

It is of course not always the case that a given matrix has an $LU$-factorization in which the coefficients of the factors do not become singular. It is possible to continue the dynamics through these singularities, as we will see in Section~\ref{crystal}. For the phase shift behavior that we study, we are only interested in the long-time dynamics; for~$t$ of sufficiently large magnitude, $LU$-factorization always works.

Symes \cite{bib:symes} observed that this discrete evolution extends to a continuous evolution with Lax equation of the same form as the Toda lattice Lax equation (equation \eqref{Lax}), but with $\pi_-(X)$ replaced by $\pi_-(\log X)$:
\begin{equation}
\dfrac{{\rm d}}{{\rm d}t}X=[X,\pi_-(\log X)].\label{luloglaxen}
\end{equation}

\begin{rem} \label{rem:symes} In our applications here we will always take $X$ to have positive eigenvalues so that $\log X$ may be uniquely defined in terms of the principal branch of the logarithm along the positive real axis. From this it follows that this continuous flow is well-defined and commutes with the original Toda flow~\cite{bib:dlt, bib:watkins}. The Hamiltonian for~(\ref{luloglaxen}) is $H_{LU} = \operatorname{Tr}(X \log X - X)$. This is discussed further in Section~\ref{crystal}.
\end{rem}

To write the Symes discrete-time evolution out explicitly, let
\begin{equation}\label{ltrtdtoda}
L(t)=\arraycolsep=3.1pt\def\arraystretch{1.5}\left[\begin{matrix}
1\\
V_1^t&1\\&\ddots&\ddots\\
&&V_{n-1}^t&1
\end{matrix}\right],\qquad \text{and}\qquad R(t)=\arraycolsep=4.4pt\def\arraystretch{1.3}\left[\begin{matrix}
I_1^t & 1\\
&I_2^t&\ddots\\
&&\ddots&1\\
&&&I_n^t
\end{matrix}\right],
\end{equation}
then Symes's discrete-time evolution produces what has come to be known as the finite discrete-time Toda lattice:

\begin{defn}\label{defnbiinfdisctodafinite}
The finite discrete-time Toda lattice is the system
\begin{equation*}
\begin{cases}
I_i^{t+1}=I_i^t+V_i^t-V_{i-1}^{t+1},& i=1,\dots,n,\\
V_i^{t+1}=\dfrac{I_{i+1}^tV_i^t}{I_i^{t+1}},& i=1,\dots,n-1,\\
V_0^t=V_n^t = 0,&
\end{cases}
\end{equation*}
which is expressible as
\begin{equation*}
L(t+1)R(t+1)=R(t)L(t).
\end{equation*}
\end{defn}

\subsection{Maslov tropicalisation}
Tropical mathematics (see, for example, \cite{bib:l,bib:lmrs,bib:v}) is the study of the min-plus (or the max-plus) semiring, which we will now define. In this section, we follow the presentation given by Maslov~\cite{bib:lmrs}. The structure of the semiring $(\R_{\geq 0},+,\x)$ is carried over to the set $\R\cup\{\infty\}$ by a~family of bijections $D_\hbar$, for $\hbar>0$, given by
\begin{equation*}
D_\hbar(x)= \begin{cases}
-\hbar\ln x & \text{if }x\neq 0,\\
-\infty & \text{if }x= 0.
\end{cases}
\end{equation*}

This induces a family of semirings, parametrised by $\hbar>0$, $(\R\cup\{\infty\},\oplus_\hbar,\otimes_\hbar)$ with operations given by
\begin{gather*}
a\oplus_\hbar b =D_\hbar\big(D_\hbar^{-1}(a)+D_\hbar^{-1}(b)\big)
= \begin{cases}
-\hbar\ln\big({\rm e}^{-a/\hbar}+{\rm e}^{-b/\hbar}\big) & \text{if }a,b\neq \infty,\\
\min(a,b) & \text{otherwise},
\end{cases}\\
a\otimes_\hbar b =D_\hbar\big(D_\hbar^{-1}(a)D_\hbar^{-1}(b)\big) =a+b.
\end{gather*}

In the limit, $\hbar\to 0$, Maslov `dequantises' $(\R_{\geq 0},+,\x)$ to obtain the tropical semiring $(\R\cup\{\infty\},\allowbreak\min,+)$, where its addition is the usual $\min$ operation and its multiplication operation is usual addition, hence the name ``min-plus semiring''.

Maslov views this construction as an analogue of the correspondence principle from quantum mechanics, with $(\R_{\geq 0},+,\x)$ as the quantum object and $(\R\cup\{\infty\},\min,+)$ as its classical counterpart.

\begin{rem}
The limiting process through $D_\hbar$ allows tropicalisation to be carried out precisely on equations by pulling the analogous tropical variables back, mapping the resulting equation back to~$\R\cup\{\infty\}$, and then taking the limit as $\hbar\to 0$. This can be seen in action in equations~\eqref{tropvarsforbbstoda}--\eqref{wnoughtnboundaries}, and again in Section~\ref{bbsphasesec}.
\end{rem}

\subsection{The ultradiscrete Toda lattice (udToda)}\label{sec:udtoda}
Following~\cite{bib:tokihiro}, we perform the process of ultradiscretisation on the discrete-time Toda lattice to see that this results in the box-ball system.

Recall the discrete-time Toda lattice:
\begin{equation*}
\begin{cases}
V_0^t=V_n^t=0,& \\
I_i^{t+1}=I_i^t+V_i^t-V_{i-1}^{t+1}, & i=1,\dots,n,\\
V_i^{t+1}I_i^{t+1}=I_{i+1}^tV_i^t, & i=1,\dots,n-1.
\end{cases}
\end{equation*}

It can be shown that the discrete-time Toda lattice is equivalent to the following system
\begin{equation*}
\begin{cases}
V_0^t=V_n^t=0,& \\
I_i^{t+1}=V_i^t+\dfrac{I_i^t\cdots I_{1}^t}{I_{i-1}^{t+1}\cdots I_{1}^{t+1}}, & i=1,\dots,n,\\
V_i^{t+1}I_i^{t+1}=I_{i+1}^tV_i^t, & i=1,\dots,n-1.
\end{cases}
\end{equation*}
Making the change of variables
\begin{equation}
I_i^t={\rm e}^{-\frac{1}{\hbar}Q_i^t(\hbar)},\qquad
V_i^t={\rm e}^{-\frac{1}{\hbar}W_i^t(\hbar)},\label{tropvarsforbbstoda}
\end{equation}
one obtains
\begin{gather}
W_i^{t+1}(\hbar) =Q_{i+1}^t(\hbar)+W_i^t(\hbar)-Q_i^{t+1}(\hbar), \qquad i=1,\dots,n-1,\\
Q_i^{t+1}(\hbar) =-\hbar\log\Big({\rm e}^{-\frac{1}{\hbar}W_i^{t}(\hbar)}+{\rm e}^{-\frac{1}{\hbar}\left(\sum_{j=1}^i Q_j^t(\hbar)-\sum_{j=1}^{i-1} Q_j^{t+1}(\hbar)\right)}\Big), \qquad i=1,\dots,n,\\
W_0^t(\hbar) =W_n^t(\hbar)=\infty.
\end{gather}
Finally, assuming the limits
\begin{alignat}{3}
&W_i^t:=\lim\limits_{\hbar\to 0^+} W_i^t(\hbar),\qquad&&
W_i^{t+1}:=\lim\limits_{\hbar\to 0^+} W_i^{t+1}(\hbar),&\nonumber\\
&Q_i^t:=\lim\limits_{\hbar\to 0^+} Q_i^t(\hbar),\qquad&&
Q_i^{t+1}:=\lim\limits_{\hbar\to 0^+} Q_i^{t+1}(\hbar),&
\end{alignat}
exist, one obtains
\begin{gather}
W_i^{t+1} =Q_{i+1}^t+W_i^t-Q_i^{t+1},\qquad i=1,\dots,n-1,\\
Q_i^{t+1} =\min\Bigg(W_i^{t},\sum_{j=1}^i Q_j^t-\sum_{j=1}^{i-1} Q_j^{t+1}\Bigg),\qquad i=1,\dots,n,\\
W_0^t =W_n^t=\infty,\label{wnoughtnboundaries}
\end{gather}
which are known as the ultradiscrete Toda (udToda) equations.

\subsection{Centre of mass}\label{sec:centreofmass}
By the isospectrality of dToda (\ref{symeseqn}), the trace is also a constant of motion for dToda, in fact, it is a \textit{Casimir} of the system. A quick computation of the diagonal entries of $L(t)R(t)$ in~\eqref{ltrtdtoda} reveals that the corresponding tropicalisation of
\[
\operatorname{tr} X(t)=\sum_{i=1}^n I_i^t+\sum_{i=1}^{n-1}V_i^t
\]
 is
\[
 \min\big(Q_1^t,\dots,Q_n^t,W_1^t,\dots,W_{n-1}^t\big).
\]
However, with the box-ball interpretation of the ultradiscretised dToda equations, the balls are seen to be moving right. In this interpretation, one can assign site numbers to the boxes (indexing them by~$\Z$) and the sum of the site numbers of the first ball in each block can be seen to grow linearly with rate $M=\sum_{i=1}^n Q_i^t$ which is another constant of motion (the tropical analogue of the determinant of $X(t)=L(t)R(t)$). This observation is key to the main result of this paper and is stated precisely and proved later in Lemma~\ref{lemma:sumsitesgrowsbyM}.

\section{The principal embedding} \label{crystal}
 \subsection{Kostant's theorem and the principal embedding}

The matrix reformulation of the Toda lattice in Flaschka's variables led to an elegant Lie-theoretic method for its solution that extended to the solving of {\it Lax equations} more generally \cite{bib:adler,bib:kostant,bib:symes78}). This is typically referred to as the Adler--Kostant--Symes (AKS) factorization theorem. Prior to stating that theorem we introduce some notation related to Lie theoretic factorizations.

We consider the Lie algebra decomposition of $n \times n$ matrices
\begin{gather*} 
 \frak{g} = \frak{gl}(n, \mathbb{R}) = \frak{n}_- \oplus \frak{b}_+,
 \end{gather*}
 where $\frak{n}_-$ is the lower triangular nilradical subalgebra and $\frak{b}_+$ is a maximal solvable subalgebra, referred to as a {\it Borel subalgebra}:
 \begin{gather*}
 \frak{n}_- = \left(\begin{matrix}
0 & & & &\\
*& 0 & & &\\
\vdots & \ddots & \ddots & \ddots &\\
\vdots & & \ddots & \ddots & \\
* & \dots & \dots & * & 0
\end{matrix} \right), \qquad
\frak{b}_+ = \left(\begin{matrix}
* & * &\dots &\dots &*\\
& * & * & &\vdots\\
& & \ddots & \ddots &\vdots\\
& & & \ddots & *\\
& & & & *
\end{matrix} \right).
\end{gather*}
We will also use $\frak{b}_-$, the transpose of $\frak{b}_+$. Employing the regular nilpotent element,
\begin{gather*}
\epsilon = \left(\begin{matrix}
0 & 1 & & &\\
& 0 & 1 & &\\
& & \ddots & \ddots &\\
& & & \ddots & 1\\
& & & & 0
\end{matrix} \right)
\end{gather*}
one defines an extended Toda phase space
\begin{gather*}
\epsilon + \frak{b}_- = \left(\begin{matrix}
* & 1 & & &\\
*& * & 1 & &\\
\vdots & \ddots & \ddots & \ddots &\\
\vdots & & \ddots & \ddots & 1\\
* & \dots & \dots & * & *
\end{matrix} \right),
\end{gather*}
 (which is the space of all lower Hessenberg matrices) on which the Toda Lax equation~(\ref{Lax}) as well the discrete time Toda equation~(\ref{symeseqn}) and their extensions will be defined. To describe explicit solutions one introduces the natural projections
\begin{gather*}
 \pi_- \colon \ \frak{g} \to \frak{n}_-, \qquad
 \Pi_- \colon \ G \to N_-, \\
 \pi_+ \colon \ \frak{g} \to \frak{b}_+, \qquad
 \Pi_+ \colon \ G \to B_+,
\end{gather*}
where $G = {\rm GL}(n, \mathbb{R})$, $N_-$ is the lower unipotent matrices, is the exponential group of the algebra~$\frak{n}_-$ and~$B_+$ is the invertible upper triangular matrices, is the exponential group of the algebra~$\frak{b}_+$. $\Pi_\pm$ are defined on the open dense subset of~$G$ where there is an~$LU$-factorization.

\begin{thm}[the factorization theorem, \cite{bib:adler,bib:kostant,bib:symes78}]\label{factorisationthmbkg}
To solve
\begin{equation} \label{Lax2}
\dfrac{{\rm d}}{{\rm d}t}X=[X,\pi_-(X)],\qquad X(0)=X_0,
\end{equation}
factor ${\rm e}^{X_0 t}=\Pi_-\big({\rm e}^{X_0t}\big)\Pi_+\big({\rm e}^{X_0t}\big)$, if possible $($locally it is$)$. Then, the solution is given by
\begin{equation}\label{fulltodasolnfactorsconj}
X(t)=\Pi_-^{-1}\big({\rm e}^{X_0t}\big)X_0\Pi_-\big({\rm e}^{X_0t}\big).
\end{equation}
\end{thm}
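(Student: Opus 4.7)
The plan is to define $X(t)$ by the right-hand side of the claimed formula, verify the initial condition, and then show by direct differentiation that it satisfies the Lax equation \eqref{Lax2}; uniqueness of solutions to this ODE then identifies it as the sought solution. Since the set of invertible matrices admitting an $LU$-factorization is open and contains $I = {\rm e}^{X_0\cdot 0}$, the factors $\Pi_\pm\big({\rm e}^{X_0 t}\big)$ are well-defined for $t$ in a neighborhood of~$0$, and at $t=0$ the trivial decomposition $I = I\cdot I$ gives $\Pi_-(I) = I$, so that $X(0) = X_0$.

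For the evolution, I will abbreviate $L(t) = \Pi_-\big({\rm e}^{X_0 t}\big)$ and $B(t) = \Pi_+\big({\rm e}^{X_0 t}\big)$, so that ${\rm e}^{X_0 t} = L(t)B(t)$ and $X(t) = L(t)^{-1} X_0 L(t)$. Differentiating the factorization, then left-multiplying by $L^{-1}$ and right-multiplying by $B^{-1}$, yields
\begin{equation*}
X(t) \;=\; L^{-1} X_0 L \;=\; L^{-1}\dot L + \dot B B^{-1}.
\end{equation*}
Since $L(t)$ is a smooth curve in the Lie group $N_-$, the logarithmic derivative $L^{-1}\dot L$ lies in the Lie algebra $\frak{n}_-$; similarly $\dot B B^{-1} \in \frak{b}_+$. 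The uniqueness of the splitting $\frak{g} = \frak{n}_- \oplus \frak{b}_+$ then forces the identification $\pi_-(X) = L^{-1}\dot L$.

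With this identification, I differentiate $X = L^{-1} X_0 L$ by the product rule, using $\frac{d}{dt}L^{-1} = -L^{-1}\dot L L^{-1}$ and the fact that $L^{-1}X_0 L = X$:
\begin{equation*}
\dot X \;=\; -L^{-1}\dot L L^{-1} X_0 L + L^{-1} X_0 \dot L \;=\; -\bigl(L^{-1}\dot L\bigr)X + X\bigl(L^{-1}\dot L\bigr) \;=\; [X,\pi_-(X)],
\end{equation*}
which is exactly \eqref{Lax2}. Together with $X(0)=X_0$ and Picard--Lindel\"of, this identifies $X(t)$ as the unique solution.

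The only step requiring more than algebraic manipulation is the identification $L^{-1}\dot L \in \frak{n}_-$ (and $\dot B B^{-1}\in\frak{b}_+$), which rests on the fact that $N_-$ and $B_+$ are connected Lie subgroups of ${\rm GL}(n,\mathbb{R})$ with Lie algebras $\frak{n}_-$ and $\frak{b}_+$, respectively. Beyond this, the argument is a crisp exercise in the product rule; the main conceptual content is that the direct-sum decomposition $\frak{g}=\frak{n}_-\oplus\frak{b}_+$ is precisely what converts the group-level factorization ${\rm e}^{X_0 t}=LB$ into the algebra-level projection $\pi_-$ appearing in the Lax equation.
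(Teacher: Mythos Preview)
Your argument is correct and is essentially the standard proof of the Adler--Kostant--Symes factorization theorem: differentiate the factorization ${\rm e}^{X_0 t}=LB$, use the direct-sum splitting $\frak g=\frak n_-\oplus\frak b_+$ to identify $\pi_-(X)=L^{-1}\dot L$, and then differentiate the conjugation $X=L^{-1}X_0L$ to recover the Lax equation. Note, however, that the paper itself does not supply a proof of this theorem; it is stated as background and attributed to \cite{bib:adler,bib:kostant,bib:symes78}, so there is no ``paper's own proof'' to compare against beyond observing that your derivation is the classical one found in those references.
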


The dynamical system (\ref{Lax2}) on $\epsilon + \frak{b}_-$ is referred to as the {\it full Kostant--Toda lattice} and, per the factorization theorem, its solution is locally given by~(\ref{fulltodasolnfactorsconj}). The tridiagonal Hessenberg matrices are an invariant subspace for the full Toda flow and so basic facts about the original dynamical systems,~(\ref{Lax}) and~(\ref{symeseqn}), discussed in Section~\ref{TODA} are directly recovered from what we do in this section.

This factorization result can be extended to discrete-time Toda. To avoid
some technical complications, in what follows, we fix the spectrum $\lambda=(\lambda_1,\dots,\lambda_n)$ and assume that $\lambda_1 >\dots >\lambda_n > 0$. Under this assumption one has directly the following extension.

\begin{lem}[\cite{bib:dlt,bib:symes78, bib:watkins}]
To solve the discrete-time Toda lattice with initial condition \mbox{$X(0)\!=\!X_0$}, factor ${\rm e}^{t\log X_0}=X_0^t=\Pi_-\big(X_0^t\big)\Pi_+\big(X_0^t\big)$, if possible. Then, the solution is given by
\begin{equation*}
X(t)=\Pi_-^{-1}\big(X_0^t\big)X_0\Pi_-\big(X_0^t\big),
\end{equation*}
for all $t\in\N\cup\{0\}$.
\end{lem}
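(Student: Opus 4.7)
The plan is to combine the one-step Symes conjugation relation \eqref{symeseqn} with a telescoping argument and then identify the accumulated unipotent factor with $\Pi_-(X_0^t)$ via uniqueness of $LU$-factorization.

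First, set $L_s := L(s)$ and $R_s := R(s)$ for the factors produced by the Symes dynamics, so that $X(s) = L_s R_s$ and $X(s+1) = R_s L_s = L_s^{-1} X(s) L_s$ as in~\eqref{symeseqn}. Iterating this one-step conjugation from $s = 0$ to $s = t-1$ gives, by telescoping,
\begin{equation*}
X(t) = \mathcal{L}_t^{-1}\, X_0\, \mathcal{L}_t, \qquad \text{where} \qquad \mathcal{L}_t := L_0 L_1 \cdots L_{t-1}.
\end{equation*}
Rewriting this as $X_0 \mathcal{L}_t = \mathcal{L}_t X(t)$ is the key identity I will feed into the next step. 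Note that $\mathcal{L}_t \in N_-$ since $N_-$ is closed under multiplication.

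Next, I would prove by induction on $t \geq 1$ the factorization
\begin{equation*}
X_0^t = \mathcal{L}_t \mathcal{R}_t, \qquad \text{where} \qquad \mathcal{R}_t := R_{t-1} R_{t-2} \cdots R_0.
\end{equation*}
The base case $t=1$ is just $X_0 = L_0 R_0$. For the inductive step, assuming $X_0^{t-1} = \mathcal{L}_{t-1} \mathcal{R}_{t-1}$, compute
\begin{equation*}
X_0^{t} = X_0 \, \mathcal{L}_{t-1} \mathcal{R}_{t-1} = \mathcal{L}_{t-1} X(t-1) \mathcal{R}_{t-1} = \mathcal{L}_{t-1} L_{t-1} R_{t-1} \mathcal{R}_{t-1} = \mathcal{L}_t \mathcal{R}_t,
\end{equation*}
where the second equality uses the intertwining identity $X_0 \mathcal{L}_{t-1} = \mathcal{L}_{t-1} X(t-1)$ from the first paragraph. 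Since each $R_s \in B_+$, the product $\mathcal{R}_t$ is in $B_+$ as well.

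Having produced a decomposition $X_0^t = \mathcal{L}_t \mathcal{R}_t$ with $\mathcal{L}_t \in N_-$ and $\mathcal{R}_t \in B_+$, uniqueness of $LU$-factorization (valid on the open dense subset of $G$ where $\Pi_\pm$ are defined, and into which $X_0^t$ falls under the standing positivity assumption $\lambda_1 > \cdots > \lambda_n > 0$ together with the hypothesis that the factorization exists) forces
\begin{equation*}
\mathcal{L}_t = \Pi_-\bigl(X_0^t\bigr), \qquad \mathcal{R}_t = \Pi_+\bigl(X_0^t\bigr).
\end{equation*}
Substituting $\mathcal{L}_t = \Pi_-(X_0^t)$ into the conjugation formula from the first paragraph yields
\begin{equation*}
X(t) = \Pi_-^{-1}\bigl(X_0^t\bigr)\, X_0\, \Pi_-\bigl(X_0^t\bigr),
\end{equation*}
as claimed.

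The only delicate point is the existence of the intermediate factorizations: the inductive argument needs each $L_s R_s$ decomposition to exist for $s = 0, \dots, t-1$, i.e., $\Pi_\pm$ must be well-defined at each intermediate $X(s)$. This is the discrete-time analogue of the obstruction noted before the lemma; as already emphasised in the remark following \eqref{symeseqn}, it can be bypassed for $|t|$ sufficiently large, which is exactly the regime needed for the phase shift analysis. Apart from that caveat, the proof is a clean bookkeeping exercise once the telescoping identity is in hand, and it parallels the continuous-time AKS argument of Theorem~\ref{factorisationthmbkg} almost line for line, with $X_0^t$ playing the role of $e^{X_0 t}$ under the semigroup homomorphism $t \mapsto X_0^t$.
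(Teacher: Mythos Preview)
The paper does not actually prove this lemma; it is stated with citations to \cite{bib:dlt,bib:symes78,bib:watkins} and no argument is given. Your proof is the standard one found in those references: telescope the one-step conjugation $X(s+1)=L_s^{-1}X(s)L_s$ to get $X(t)=\mathcal{L}_t^{-1}X_0\mathcal{L}_t$, establish $X_0^t=\mathcal{L}_t\mathcal{R}_t$ by induction, and invoke uniqueness of $LU$-factorization. The induction step is correct, and your caveat about needing the intermediate factorizations $X(s)=L_sR_s$ to exist is exactly the right thing to flag---strictly speaking the lemma's hypothesis ``factor $X_0^t$ if possible'' guarantees only the factorization at time $t$, not at all intermediate steps, so the argument as written proves the result under the slightly stronger assumption that each $X(s)$ for $0\le s\le t-1$ admits an $LU$-decomposition. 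This is harmless for the paper's purposes (only large $|t|$ matters for the phase shift), and in any case the stronger conclusion can be recovered by the flag-manifold completion discussed immediately after the lemma.
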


 These factorizations provide the basis for our analysis of the phase shift formulas in both continuous and discrete time Toda. This is facilitated by a key result due to Kostant that provides a natural embedding of the Toda dynamics into a flag manifold that plays a role analogous to action-angle variables in classical integrable systems theory.

\begin{thm}[\cite{bib:kostant}] \label{thm:kostant}For each $X \in \epsilon + \frak{b}_-$ there exists a~unique lower unipotent $L \in N_-$ such that
$
X = L \epsilon_\lambda L^{-1}$,
where $\epsilon_\lambda$ is as defined by \eqref{epslam}.
\end{thm}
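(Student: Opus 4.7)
The plan is to prove uniqueness first via a centralizer argument, and then existence by induction on $n$, peeling off one eigenvalue at a time using the Hessenberg structure of $\epsilon + \frak{b}_-$.

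For uniqueness, suppose $X = L\epsilon_\lambda L^{-1} = L'\epsilon_\lambda (L')^{-1}$ with $L, L' \in N_-$. Then $M := (L')^{-1}L \in N_-$ commutes with $\epsilon_\lambda$. Since $\epsilon_\lambda$ has distinct eigenvalues, its centralizer in $\frak{gl}(n,\R)$ is the $n$-dimensional algebra $\R[\epsilon_\lambda]$ of polynomials in $\epsilon_\lambda$, and because $\epsilon_\lambda$ is upper triangular, every such polynomial is upper triangular. Thus $M$ is simultaneously lower unipotent and upper triangular, forcing $M = I$ and hence $L = L'$.

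For existence I would induct on $n$; the case $n = 1$ is trivial. The crucial ingredient is that for any $X \in \epsilon + \frak{b}_-$, every eigenvector $v$ of $X$ has $v_1 \neq 0$. Indeed, if $v_1 = 0$ and $k \geq 2$ is the smallest index with $v_k \neq 0$, then the lower Hessenberg structure gives $(Xv)_{k-1} = v_k \neq 0$ (via the superdiagonal $1$), while the eigenvector equation demands $(Xv)_{k-1} = \mu v_{k-1} = 0$, a contradiction. Consequently there is a unique eigenvector $v_1$ of $X$ for $\lambda_1$ normalized by $(v_1)_1 = 1$.

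Let $L_1 \in N_-$ have first column $v_1$ and remaining columns $e_2, \dots, e_n$, so that $L_1^{-1} X L_1$ has first column $\lambda_1 e_1$. A short direct computation shows that conjugation by any element of $N_-$ preserves $\epsilon + \frak{b}_-$: the superdiagonal stays identically $1$ and the entries strictly above it stay $0$ because both $L$ and $L^{-1}$ are lower unipotent. Combining this with the first-column information, $L_1^{-1} X L_1$ has the block form
\[
L_1^{-1} X L_1 = \begin{pmatrix} \lambda_1 & e_1^T \\ 0 & X' \end{pmatrix},
\]
where the $(n-1) \times (n-1)$ matrix $X'$ lies in the corresponding smaller $\epsilon + \frak{b}_-$ and, by block triangularity, has the distinct spectrum $\{\lambda_2, \dots, \lambda_n\}$. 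Letting $\tilde\epsilon$ denote the analogue of $\epsilon_\lambda$ of size $n-1$ with diagonal $(\lambda_2, \dots, \lambda_n)$ and superdiagonal $1$'s, the inductive hypothesis produces a unique lower unipotent $L'$ with $X' = L' \tilde\epsilon (L')^{-1}$. Put $L_2 := \operatorname{diag}(1, L') \in N_-$; since the first row of the lower unipotent $L'$ equals $e_1^T$, a short block computation yields $L_2^{-1} L_1^{-1} X L_1 L_2 = \epsilon_\lambda$, so $L := L_1 L_2$ is the required conjugator in $N_-$.

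The chief subtlety is the eigenvector-non-vanishing lemma in the first component: this is what allows the normalization of $v_1$ and launches the dimensional reduction. Everything else is either the short verification that $N_-$-conjugation stabilizes $\epsilon + \frak{b}_-$, or a direct block-matrix identification combined with the inductive hypothesis.
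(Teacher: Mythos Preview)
Your proof is correct. The uniqueness argument via the centralizer of $\epsilon_\lambda$ is clean and complete, and the inductive existence argument is sound: the non-vanishing of the first component of any eigenvector is exactly the leverage needed, your verification that $N_-$-conjugation preserves $\epsilon+\frak{b}_-$ is right (both $\epsilon N$ and $N\epsilon$ land in $\frak{b}_-$ for strictly lower $N$), and the block computation with $L_2=\operatorname{diag}(1,L')$ goes through because the first row of a lower unipotent matrix is $e_1^{\rm T}$, so $e_1^{\rm T}L'=e_1^{\rm T}$.

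As for comparison: the paper does not supply its own proof of this theorem; it is stated with attribution to Kostant and used as a black box. Kostant's original argument is Lie-theoretic and works uniformly for all semisimple Lie algebras (it is essentially a statement about the principal nilpotent element and the transversality of $\epsilon+\frak{b}_-$ to $N_-$-orbits). Your argument is the elementary ${\rm GL}(n)$-specific version, exploiting the concrete Hessenberg shape via the first-component lemma; it is more hands-on and perfectly adequate here, though it does not immediately generalize to other root systems the way Kostant's does.
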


Set
\[
\mathcal{F}_\lambda = \{ X \in \epsilon + \frak{b}_- \colon \sigma(X) = \lambda_1 > \cdots > \lambda_n\},
\]
the isospectral manifold, and define the {\it principal embedding} into a compact, homogeneous space, known as the {\it flag manifold}, to be
\begin{align*}
\kappa_\lambda \colon \ \mathcal{F}_\lambda &\to G/B_+,\\
X &\mapsto L^{-1} \mod B_+.
\end{align*}
This mapping simultaneously linearizes and completes the Toda flows~\cite{bib:efh,bib:efs}:

If $X_0 = L_0 \epsilon_\lambda L_0^{-1}$ then
\begin{gather*}
X(t) = \Pi_-^{-1}\big({\rm e}^{t X_0}\big)X_0 \Pi_-\big({\rm e}^{t X_0}\big) = \Pi_-^{-1}\big({\rm e}^{t X_0}\big)L_0\epsilon_\lambda L_0^{-1} \Pi_-\big({\rm e}^{t X_0}\big),\\
\kappa_\lambda(X(t)) = L_0^{-1} \Pi_-\big({\rm e}^{t X_0}\big) \mod B_+ \\
\hphantom{\kappa_\lambda(X(t))}{} = L_0^{-1} \big({\rm e}^{t X_0}\big) \mod B_+\\
\hphantom{\kappa_\lambda(X(t))}{}= {\rm e}^{\epsilon_\lambda t} L_0^{-1} \mod B_+.
\end{gather*}
Thus, one sees that under the principal embedding, the Toda flow maps to a linear semigroup action by ${\rm e}^{\epsilon_\lambda t}$ on $G/B_+$ through the principal image of the initial value,
$\kappa_\lambda(X_0) = L_0^{-1}$ which exists for all time.

In Remark~\ref{rem:symes} we pointed out that the continuous Toda flow commutes with the discrete time Toda evolution. In fact, there is a hierarchy of continuous flows commuting with dToda and with one another given by Lax equations of the form
\begin{gather} \label{hierarchy}
\frac{{\rm d}}{{\rm d}t_m} X = \big[ X, \pi_- X^m\big].
\end{gather}
On $\mathcal{F}_\lambda$ the first $n-1$ of these flows are locally independent and their respective images under the principal embedding has the form
\[
{\rm e}^{\epsilon^m_\lambda t_m} L_0^{-1} \mod B_+
\]
and together they generate a torus action on $G/B_+$~\cite{bib:efs} (see Section~\ref{subsec:eigenvectorstorus}).

\subsection[The fundamental element epsilon\_lambda]{The fundamental element $\boldsymbol{\epsilon_\lambda}$}

As mentioned earlier, we are going to see that the Toda flow limits to $\epsilon_\lambda$ (\ref{epslam}) in large forward time. This element is distinguished by the fact that it lies in the intersection
$(\epsilon + \frak{b}_-)_\lambda \cap \frak{b_+}$, and in that regard it is essentially unique. (There are $n!$ elements in this intersection corresponding to permutations of the distinct eigenvalues. In fact in large backwards time the Toda flow limits to the element corresponding to the longest permutation.) Since this element will play such a~central role in our initial calculations, we take a few moments to discuss some of its relevant features.

\subsubsection[Diagonalizing epsilon\_lambda]{Diagonalizing $\boldsymbol{\epsilon_\lambda}$}

 Because we assume $\lambda_i\neq \lambda_j$ for all $i\neq j$, we can of course diagonalise any matrix in $\mathcal{F}_\lambda$. The following result, which is an explicit form of Lemma~7 in~\cite{bib:efh}, describes a diagonalisation of $\epsilon_\lambda$.

\begin{lem}\label{uppertodiagefhexpl}
If $\lambda_1,\dots,\lambda_n$ are distinct, then one has $\epsilon_\lambda=U D_\lambda U^{-1}$, where $U=(u_{ij})$ is the upper triangular matrix given by
\[
u_{ij}=\prod_{k=1}^{i-1}(\lambda_j-\lambda_k),\qquad 1\leq i\leq j\leq n,
\]
and $D_\lambda=\epsilon_\lambda-\epsilon=\operatorname{diag}(\lambda_1,\dots,\lambda_n)$.
\end{lem}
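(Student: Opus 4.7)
The plan is to verify the eigenvector equation $\epsilon_\lambda U = U D_\lambda$ column by column, which reduces to a one-term recurrence that the stated product formula for $u_{ij}$ solves by inspection. Then I will check invertibility of $U$ to conclude the similarity $\epsilon_\lambda = U D_\lambda U^{-1}$.

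First I would unpack the statement: since $\epsilon_\lambda = D_\lambda + \epsilon$, its action on a column vector $v$ is given by $(\epsilon_\lambda v)_i = \lambda_i v_i + v_{i+1}$ for $i < n$ and $(\epsilon_\lambda v)_n = \lambda_n v_n$. Writing $U = [u^{(1)} \mid \cdots \mid u^{(n)}]$, the claim $\epsilon_\lambda U = U D_\lambda$ is equivalent to asserting that each column $u^{(j)}$ is a $\lambda_j$-eigenvector of $\epsilon_\lambda$. Componentwise, this becomes the scalar recurrence
\begin{equation*}
u_{i+1,j} = (\lambda_j - \lambda_i)\, u_{ij}, \qquad 1 \le i \le n-1,
\end{equation*}
together with the terminal condition $\lambda_n u_{nj} = \lambda_j u_{nj}$ from the last row.

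Next I would simply verify that the proposed formula $u_{ij} = \prod_{k=1}^{i-1}(\lambda_j - \lambda_k)$ (extended by $u_{ij} = 0$ for $i > j$, with the usual convention that the empty product gives $u_{1j} = 1$) satisfies this recurrence. Plugging in gives $u_{i+1,j}/u_{ij} = \lambda_j - \lambda_i$ whenever both entries are nonzero, which is immediate. The upper-triangular boundary $u_{j+1,j} = 0$ is consistent with the recurrence, because the factor $\lambda_j - \lambda_j$ forces $u_{j+1,j} = 0$; subsequent entries in column $j$ remain zero by induction on $i$. The terminal condition in row $n$ is automatic for $j = n$, and for $j < n$ column $u^{(j)}$ vanishes from row $j+1$ onward, so the last row condition is vacuous.

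Finally I would observe that $U$ is upper triangular with diagonal entries $u_{ii} = \prod_{k=1}^{i-1}(\lambda_i - \lambda_k)$, each of which is nonzero by the distinctness hypothesis $\lambda_i \ne \lambda_k$. Hence $\det U = \prod_{i} u_{ii} = \prod_{i < k}(\lambda_k - \lambda_i)$ (a Vandermonde-type product) is nonzero, so $U$ is invertible and $\epsilon_\lambda = U D_\lambda U^{-1}$ as asserted. I do not anticipate any real obstacle here: the only subtlety is keeping careful track of the upper-triangular support of $U$ so that the recurrence and the terminal row condition are reconciled at the "diagonal boundary" $i = j$.
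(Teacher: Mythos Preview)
Your proof is correct and follows essentially the same approach as the paper: both verify $\epsilon_\lambda U = U D_\lambda$ column by column via the componentwise eigenvector relation $\lambda_i u_{ij} + u_{i+1,j} = \lambda_j u_{ij}$, check the last-row case separately, and deduce invertibility of $U$ from its nonvanishing diagonal (Vandermonde-type) product. The only cosmetic difference is that you frame the verification as solving a one-step recurrence, whereas the paper plugs the product formula directly into $(\epsilon_\lambda u_j)_i$ and simplifies.
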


\begin{proof}
The matrix $U$ is clearly invertible since
\[\det(U)=\prod_{i=1}^n \prod _{k=1}^{i-1}(\lambda_i-\lambda_k)=\prod_{1\leq k<i\leq n}(\lambda_i-\lambda_k)\]
and as stated above, we have assumed $\lambda_i\neq \lambda_j$ for all $i\neq j$.

It just remains to show that $\epsilon_\lambda U = U D_\lambda$. Let $u_j=(u_{ij})_{1\leq i\leq n}$ be the $j$-th column of $U$, then for $i<n$:
\begin{align*}
(\epsilon_\lambda u_j)_i&=\sum_{k=1}^n (\epsilon_\lambda)_{ik}u_{kj}=\lambda_i u_{ij}+u_{i+1,j}
=\lambda_i\prod_{k=1}^{i-1}(\lambda_j-\lambda_k)+\prod_{k=1}^{i}(\lambda_j-\lambda_k)\\
&=(\lambda_i+\lambda_j-\lambda_i)\prod_{k=1}^{i-1}(\lambda_j-\lambda_k)
=\lambda_j \prod_{k=1}^{i-1}(\lambda_j-\lambda_k)=\lambda_j (u_j)_i.
\end{align*}
For $i=n$, we simply have
\begin{gather*}
(\epsilon_\lambda u_j)_n =\lambda_n (u_j)_n
 = \begin{cases} 0, & j<n,\\ \lambda_j (u_j)_n, & j=n\end{cases}
 =\lambda_j (u_j)_n.
\end{gather*}
Thus, $\epsilon_\lambda u_j=\lambda_j u_j$ for each $j$.
\end{proof}

\begin{rem}\label{rem:diagonaliseepsilon}
In the remainder of this paper, it is convenient to work with a unipotent matrix of eigenvectors for $\epsilon_\lambda$. Since $U$ is triangular, we achieve this by scaling each column of the matrix~$U$ by the reciprocal of the diagonal entry for that column. The result is that we can choose\looseness=-1
\[(U)_{ij} = \begin{cases}
\displaystyle \prod\limits_{i\leq m<j} \dfrac{1}{\lambda_j-\lambda_m} & \text{for }i\leq j,\\
0 & \text{otherwise},
\end{cases}
\]
and still maintain $\epsilon_\lambda = UD_\lambda U^{-1}$. From now on, $U$ shall refer to this upper unipotent matrix.
\end{rem}

\subsubsection{Eigenvectors and the torus embedding}\label{subsec:eigenvectorstorus}

In this section we describe a slight modification of the principal embedding which displays more clearly the linearized character of the Toda flows that these embeddings reveal. This is the homogeneous space analogue of the linearization in terms of action-angle variables familiar from classical integrable systems theory~\cite{bib:arnold}.

Combining Lemma \ref{uppertodiagefhexpl} with Theorem~\ref{thm:kostant} provides a diagonalization of $X \in \epsilon + \frak{b}_-$:
\begin{gather}\label{LUDiag}
X = LU D_\lambda U^{-1}L^{-1}.
\end{gather}
Cross-multiplying in two different ways this is equivalent to each of two different representations
\begin{gather} \label{leftevec}
\big(U^{-1}L^{-1}\big) X = D_\lambda \big(U^{-1}L^{-1}\big),\\ \label{rightevec}
 X (LU) = (LU) D_\lambda .
\end{gather}
The representation in (\ref{leftevec}) presents $\big(U^{-1}L^{-1}\big)$ as a matrix whose rows are the left eigenvectors of~$X$ while~(\ref{rightevec}) presents $(LU)$ as a matrix whose columns are the right eigenvectors of~$X$. Based on this one is led to consider an alternative flag manifold embedding,
\begin{align*}
\operatorname{tor}_\lambda \colon \ \mathcal{F}_\lambda &\to G/B_+,\\
X &\mapsto U^{-1}L^{-1} \mod B_+
\end{align*}
from $X$ to its matrix of left eigenvectors. As before one can track the Toda dynamics through this embedding
and find that
\begin{gather*}
\operatorname{tor}_\lambda(X(t)) ={\rm e}^{t D_\lambda}U^{-1}L^{-1} \mod B_+.
\end{gather*}
Also, as before, this can be extended to the commuting hierarchy of Toda flows (\ref{hierarchy}),
\begin{gather*}
\operatorname{tor}_\lambda(X(t_1, \dots, t_n))  =
\exp\left({\sum_{m=1}^{n-1} t_m D^m_\lambda}\right)U^{-1}L^{-1} \mod B_+.
\end{gather*}
 This amounts to a toric action
(left multiplication by $(\mathbb{R}^*)^n$) on the flag manifold, explaining why $\operatorname{tor}_\lambda$ is referred to as the \textit{torus embedding}. This is a replacement of the nonlinear Toda dynamics by a linear semigroup action on $G/B_+$.
\subsubsection{Other Lie algebras}

The language we have used in this section to describe the Toda phase space and dynamics (Borel and nilradical subalgebras, regular nilpotent elements) have immediate extensions to the setting of general real semisimple Lie algebras to define the so-called {\it generalized Toda lattices}~\cite{bib:kostant2}. The structure and results about the flag manifold embeddings was similarly carried out in \cite{bib:efs}. Kostant also introduces the analogue of $\epsilon_\lambda$ in Theorem~1.5 of~\cite{bib:kostant2} and its diagonalization (Lemma~3.52).

\subsection{Painlev\'e balances}

It was pointed out in Theorem \ref{factorisationthmbkg} that lower-upper factorization of ${\rm e}^{t X_0}$ does not necessarily hold for all values of $t$. When this happens some entry of $X(t)$ will become infinite. In fact these entries are meromorphic (which fact is referred to as the Painlev\'e property). We refer to~\cite{bib:fh} for further details about this. But as we saw in the previous section, the evolution under the torus embedding is just a multi-scaling action in the flag manifold, which is compact, and so the flows exist for all time.
A geometric understanding of what is happening is provided by the {\it Birkhoff cell decomposition} of~$G/B_+$. The flag manifold may be written as a disjoint union,
\begin{gather*}
G/B_+ = \bigcup_{\widehat{w} \in W} N_- \widehat{w} B_+/B_+.
\end{gather*}
In this representation, $W$ is the group of $n \times n$
permutation matrices. The $n!$ sets, $N_- \widehat{w} B_+/B_+$, are cells (homeomorphic to a Euclidean space) called Birkhoff cells. The cell $N_-B_+/B_+$ is open and dense in~$G/B_+$ and referred to as the ``big cell''. The other cells are lower dimensional with co-dimension equal to the {\it length of $\widehat{w}$} (corresponding to the minimal number of pivots required to carry out Gaussian elimination of a representative group element for a point in that cell). By Kostant's theorem~\ref{thm:kostant}, all elements of $\epsilon + \frak{b}_-$ embed into the big cell. Under the Toda flow, ``blow-up'' of~$X(t)$ occurs precisely when ${\rm e}^{t X_0}$ cannot be factored. However, under $\kappa_\lambda$, this flow simply continues to a smaller Birkhoff cell corresponding to the sequence of pivots required to perform Gaussian elimination on
${\rm e}^{t X_0}$. Thus the singularity is ``resolved'' and the flow may be continued to arbitrary times. A similar statement applies to the flows of the full Toda hierarchy. A~consequence of this observation is that the meromorphic singularity structure of the Toda lattice solutions may be explicitly described as follows.

\begin{prop}[\cite{bib:fh}]
Let $\Theta\subseteq\{1,\dots,n-1\}$ be nonempty. There is a Laurent series solution of equations \eqref{flaschkatodaaeqn}--\eqref{flaschkatodabeqn} having the form
\begin{gather*}
b_j(t)=-\frac{\sigma_j^\Theta}{t^2}+\text{$($Taylor$)$ near} \ t=0,\qquad \text{if }j\in\Theta,\\
b_j(t)=\text{$($Taylor$)$ near} \ t=0,\qquad \text{if }j\not\in\Theta.
\end{gather*}
This solution depends on $2(n-1)-|\Theta|$ free parameters.
\end{prop}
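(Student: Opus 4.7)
I would prove this by direct Painlev\'e analysis of the Flaschka system~\eqref{flaschkatodaaeqn}--\eqref{flaschkatodabeqn}, guided by the Birkhoff cell geometry developed in Section~\ref{crystal}. Substitute the ansatz $b_j(t) = -\sigma_j/t^2 + O(1/t)$ for $j\in\Theta$, $b_j(t) = O(1)$ for $j\notin\Theta$, and $a_j(t) = \alpha_j/t + O(1)$ into the Flaschka equations. The $b_j$-equation for $j\in\Theta$ forces $\alpha_j - \alpha_{j+1} = 2$, while the $a_j$-equation forces $\alpha_j = \sigma_j - \sigma_{j-1}$ (extending $\sigma_k := 0$ for $k\notin\Theta$, $\sigma_0 = \sigma_n = 0$). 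Combining these yields the discrete Poisson problem
\[
 2\sigma_j - \sigma_{j-1} - \sigma_{j+1} = 2 \quad (j\in\Theta), \qquad \sigma_k = 0 \quad (k\notin\Theta),
\]
whose Laplacian is positive-definite on $\Theta$, so the $\sigma_j^\Theta$ exist and are uniquely determined.

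Next I would establish the extension to a full Laurent series by substituting $b_j(t) = -\sigma_j^\Theta/t^2 + \sum_{k\geq -1} b_j^{(k)} t^k$ and similarly for $a_j$, and match orders. At each order $k$ the new coefficients satisfy a linear system whose coefficient matrix is the same Kovalevskaya matrix $\mathcal{K}_\Theta$ shifted by $k\cdot \mathrm{Id}$; free parameters enter precisely at the non-negative integer eigenvalues (resonances) of $\mathcal{K}_\Theta$. The exponent $-1$ is always present from time-translation but is \emph{frozen out} by our fixing the singularity at $t=0$. The remaining spectrum can be computed either by direct linear algebra on the tridiagonal Jacobi structure, or geometrically: via the principal embedding $\kappa_\lambda$, the ansatz of type $\Theta$ corresponds to initial flag data lying in the Birkhoff cell $N_-\widehat{w}_\Theta B_+/B_+$, where $\widehat{w}_\Theta$ is the product (in any reduced order) of the simple reflections $s_j$ with $j\in\Theta$, so $\ell(\widehat{w}_\Theta)=|\Theta|$. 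This cell has codimension $|\Theta|$ in $G/B_+$, hence the subfamily of tridiagonal solutions whose $\kappa_\lambda$-orbit hits this cell at $t=0$ has codimension $|\Theta|$ in the $(2n-2)$-dimensional locus of solutions with a singularity pinned at $t=0$, yielding $2(n-1)-|\Theta|$ free parameters.

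Finally, I would verify that the formal Laurent series actually converges: the flag-manifold orbit $\phi(t) = \mathrm{e}^{\epsilon_\lambda t}\phi_0$ with $\phi_0 \in N_-\widehat{w}_\Theta B_+/B_+$ is entire in $t$, and $\kappa_\lambda^{-1}$ is meromorphic on a neighborhood of $\phi_0$ in $G/B_+$ with pole divisor determined precisely by the failure of the LU pivots; reading off the entries of $\kappa_\lambda^{-1}(\phi(t))$ recovers the claimed Laurent expansion, with no compatibility obstruction at any non-negative resonance because the entries are \emph{a priori} regular rational functions in the factorization data.

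The main obstacle I expect is Step~2: identifying the Kovalevskaya spectrum and showing, either by direct eigenvalue computation on the tridiagonal $\mathcal{K}_\Theta$ or by a clean bijection between Birkhoff-cell coordinates and resonance parameters, that exactly $|\Theta|$ exponents beyond the frozen $-1$ are negative (forced by the balance conditions) while the remaining $2(n-1)-|\Theta|$ are non-negative integers at which compatibility holds automatically. The geometric route through $\kappa_\lambda$ makes this transparent in principle---the cell dimension pins down the parameter count---but converting it into the eigenvalue statement for the indicial matrix requires carefully matching the Bruhat stratification of $G/B_+$ to the order-by-order filtration of the Laurent ring, and this is where most of the bookkeeping lies.
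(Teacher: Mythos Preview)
The paper does not supply its own proof of this proposition: it is quoted verbatim from Flaschka--Haine \cite{bib:fh} with attribution, and no argument is given in the text beyond the surrounding discussion of Birkhoff cells. So there is no in-paper proof to compare your proposal against.

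That said, your outline is very much in the spirit of the cited source, which does organise the Painlev\'e balances via the Bruhat stratification of $G/B_+$. Your leading-order computation (Step~1) is correct, and the convergence argument via the analytic flag-manifold orbit (Step~3) is exactly how \cite{bib:fh} and \cite{bib:efh} bypass the need to check resonance compatibility term by term. The one place where your sketch needs tightening is the dimension count in Step~2. You write that the Birkhoff cell $N_-\widehat{w}_\Theta B_+/B_+$ has codimension $|\Theta|$ in $G/B_+$ and \emph{hence} the tridiagonal subfamily hitting it has codimension $|\Theta|$ in a $(2n-2)$-dimensional locus. But $G/B_+$ has dimension $\binom{n}{2}$, while the closure of the tridiagonal isospectral image under $\kappa_\lambda$ is only an $(n-1)$-dimensional toric subvariety; codimension in the ambient flag manifold does not pass to a proper subvariety without a transversality argument. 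Establishing that the tridiagonal locus meets each Birkhoff stratum transversally (equivalently, that the Kovalevskaya matrix $\mathcal{K}_\Theta$ restricted to the tridiagonal directions has the expected rank) is the actual content here, and it is done in \cite{bib:fh} rather than being automatic. Your identification of this as ``where most of the bookkeeping lies'' is accurate; just be aware that the codimension sentence as written is not yet a proof.
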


Each type of series in the proposition is called a Painlev\'{e} balance $\Theta$ of dimension equal to the number of free parameters.

\begin{cor}[\cite{bib:fh}]\label{lowestbalancedim}
When $\Theta=\{1,\dots,n-1\}$, which is called the lowest balance, its dimension is $2(n-1)-(n-1)=n-1$.
\end{cor}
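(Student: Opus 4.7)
The statement of Corollary~\ref{lowestbalancedim} follows as an immediate specialisation of the preceding Proposition, so the plan is really just a careful accounting, not a calculation. The Proposition asserts that to each nonempty subset $\Theta \subseteq \{1,\dots,n-1\}$ there corresponds a Painlev\'e balance, i.e.\ a Laurent-series solution branch of equations \eqref{flaschkatodaaeqn}--\eqref{flaschkatodabeqn} near $t=0$, whose free-parameter count is $2(n-1) - |\Theta|$. My strategy is simply to substitute the distinguished choice $\Theta = \{1,\dots,n-1\}$ into that formula and identify the resulting dimension.

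First I would observe that $\Theta = \{1,\dots,n-1\}$ has cardinality exactly $n-1$, being the full index set for the off-diagonal entries $b_1,\dots,b_{n-1}$ of the Flaschka matrix $X$. Consequently every one of the $b_j(t)$ exhibits a genuine pole at $t=0$ in this balance, so this branch is indeed as ``singular as possible'' among all balances described by the Proposition; this is the reason for the name \emph{lowest} balance, since making $|\Theta|$ as large as possible makes the number of free parameters $2(n-1) - |\Theta|$ as small as possible.

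Plugging $|\Theta| = n-1$ into the Proposition's count gives
\begin{equation*}
2(n-1) - |\Theta| \;=\; 2(n-1) - (n-1) \;=\; n-1,
\end{equation*}
which is the asserted dimension. The only ``content'' in the proof is thus the verification that $|\{1,\dots,n-1\}| = n-1$ together with an appeal to the cited Proposition; there is no real obstacle to overcome, the result being genuinely a corollary in the literal sense. If one wanted to make the geometric meaning more transparent, one could additionally remark that this $(n-1)$-dimensional balance is precisely the codimension-$(n-1)$ locus in the full $2(n-1)$-parameter phase space on which all lower-diagonal entries simultaneously blow up, matching the Birkhoff-cell picture described earlier in the section: the lowest balance corresponds to the smallest Birkhoff cell reached by the Toda flow under the principal embedding $\kappa_\lambda$.
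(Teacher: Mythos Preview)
Your proposal is correct and matches the paper's treatment: the paper gives no separate proof for this corollary, since it is an immediate specialisation of the preceding Proposition obtained by substituting $|\Theta|=n-1$ into the parameter count $2(n-1)-|\Theta|$. Your additional remarks about the ``lowest'' terminology and the Birkhoff-cell interpretation are accurate context but not needed for the argument itself.
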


\begin{cor}[\cite{bib:efh}]\label{lowestbalance}
Restricted to the isospectral manifold $\mathcal{F}_\lambda$, the Toda flow, ${\rm e}^{\epsilon_\lambda t}\widehat{w}_0$,
passes through $\widehat{w}_0 \mod B_+$ as $t \to 0$, where $\widehat{w}_0$ denotes the permutation matrix for the reverse permutation, i.e., the longest permutation
\end{cor}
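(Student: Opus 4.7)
The explicit equality at $t=0$ is immediate: ${\rm e}^{\epsilon_\lambda \cdot 0}\widehat{w}_0 = \widehat{w}_0$, so the curve $t\mapsto {\rm e}^{\epsilon_\lambda t}\widehat{w}_0 \mod B_+$ in $G/B_+$ sits at $\widehat{w}_0\mod B_+$ when $t=0$. The substantive content of the corollary is the identification of this curve with the principal-embedding image of a genuine Toda orbit in $\mathcal{F}_\lambda$, and the recognition of that orbit as the lowest-balance orbit distinguished by Corollary~\ref{lowestbalancedim}.

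My plan is to invert the principal embedding along the curve. Since the big cell $N_-B_+/B_+$ is open and dense in $G/B_+$, the element ${\rm e}^{\epsilon_\lambda t}\widehat{w}_0$ admits an $LU$-factorisation ${\rm e}^{\epsilon_\lambda t}\widehat{w}_0 = L(t)^{-1}R(t)$ with $L(t)^{-1}\in N_-$ and $R(t)\in B_+$, for all $t\ne 0$ in a punctured neighbourhood of $0$. Setting $X(t):=L(t)\epsilon_\lambda L(t)^{-1}$, Theorem~\ref{thm:kostant} places $X(t)\in\mathcal{F}_\lambda$ and yields $\kappa_\lambda(X(t)) = L(t)^{-1}\mod B_+ = {\rm e}^{\epsilon_\lambda t}\widehat{w}_0 \mod B_+$. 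So $X(t)$ is a Toda trajectory whose principal-embedding image is exactly the curve in the statement, and the convergence $\kappa_\lambda(X(t))\to \widehat{w}_0\mod B_+$ as $t\to 0$ is built into the formula.

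It remains to verify that $X(t)$ realises the lowest Painlev\'e balance at $t=0$. The geometric reason is that the limit point $\widehat{w}_0\mod B_+$ lies in the zero-dimensional Birkhoff cell $N_-\widehat{w}_0 B_+/B_+$, which has codimension $\ell(\widehat{w}_0)=n(n-1)/2$; equivalently, the $LU$-factorisation of ${\rm e}^{\epsilon_\lambda t}\widehat{w}_0$ is maximally degenerate at $t=0$, with every leading principal minor forced to vanish. Reading the Hessenberg subdiagonal entries $b_j(t)$ off the Schur-complement recursion for this Gauss decomposition, each $b_j$ inherits a pole at $t=0$; a Laurent expansion of ${\rm e}^{\epsilon_\lambda t}\widehat{w}_0 = \widehat{w}_0 + t\epsilon_\lambda\widehat{w}_0 + \frac{t^2}{2}\epsilon_\lambda^2\widehat{w}_0 + \cdots$ then pins down the leading order as $b_j(t)\sim -\sigma_j/t^2$, which is by definition the lowest balance.

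The main obstacle is the last step: the bookkeeping needed to extract the sharp $-\sigma_j/t^2$ asymptotics (rather than merely poles of unspecified order) from the Laurent expansion of ${\rm e}^{\epsilon_\lambda t}\widehat{w}_0$ and its Gauss decomposition. A cleaner conceptual route to the same conclusion is a dimension count: by Corollary~\ref{lowestbalancedim} the lowest balance has $n-1$ free parameters, which are absorbed by fixing the $n-1$ independent eigenvalues determining $\mathcal{F}_\lambda$, so within $\mathcal{F}_\lambda$ the lowest balance consists of a single Toda orbit. Since the zero-dimensional cell $\{\widehat{w}_0\mod B_+\}$ admits a unique orbit of the linearised flow ${\rm e}^{\epsilon_\lambda t}$ passing through it, that orbit must coincide with the lowest-balance orbit, and the corollary follows.
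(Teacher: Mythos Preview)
The paper does not supply its own proof of this corollary; it is imported from \cite{bib:efh} and stated without argument. So there is nothing to compare against directly, and your proposal should be judged on its own merits.

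For the corollary \emph{as literally stated}, your first two paragraphs do the job: the curve $t\mapsto {\rm e}^{\epsilon_\lambda t}\widehat w_0\bmod B_+$ is an orbit of the linearised flow on $G/B_+$, it lies in the big cell for small $t\ne 0$, and pulling back through $\kappa_\lambda$ yields a Toda trajectory $X(t)\in\mathcal F_\lambda$ whose image tends to $\widehat w_0\bmod B_+$ as $t\to 0$. One small point: density of the big cell by itself does not guarantee that ${\rm e}^{\epsilon_\lambda t}\widehat w_0$ lies in it for small $t\ne 0$; you need that the leading principal minors of ${\rm e}^{\epsilon_\lambda t}\widehat w_0$ are not \emph{identically} zero in $t$. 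Since these minors equal (up to sign) the top-right minors $\Delta_k^n({\rm e}^{\epsilon_\lambda t})$, which are nonzero polynomials/analytic functions of $t$, this is easily checked.

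Your third step, identifying this orbit with the lowest Painlev\'e balance, is not strictly part of the stated corollary, but is the implicit content of its placement after Corollary~\ref{lowestbalancedim}. Here your argument has a gap. The dimension count shows that the lowest balance on $\mathcal F_\lambda$ is at most a single orbit, and that the orbit through $\widehat w_0$ is unique; but it does not by itself force these two orbits to coincide. You still need to know that the orbit through $\widehat w_0$ has \emph{every} $b_j$ blowing up at $t=0$, and you concede this direct computation is the ``main obstacle''. In fact the computation is not hard and would close the gap cleanly: since $(e^{\epsilon_\lambda t})_{ij}\sim t^{j-i}/(j-i)!$ as $t\to 0$, the $k$-th principal minor $\tau_k(t)$ of $e^{\epsilon_\lambda t}\widehat w_0$ vanishes to order exactly $k(n-k)$, and then $(k-1)(n-k+1)+(k+1)(n-k-1)-2k(n-k)=-2$ gives $b_k(t)\sim c_k t^{-2}$ for every $k$, which is the lowest balance.
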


\subsection{Centering the tridiagonal flows}

In the remainder of this paper we will restrict attention to the tridiagonal flows we began with in Section~\ref{TODA} but largely view them in terms of their embedding in the flag manifold, $G/B_+$.

The phase shift formulas we describe do not depend on initial conditions. We have already seen this reflected in the fact (see Section \ref{Sort}) that the total momentum is a constant (in fact a Casimir) for the motion. Knowing this, it is helpful to choose an initial point that is appropriately ``centered" in the flag manifold. The centering we use is based on Corollary \ref{lowestbalance}. This centering, based on a different argument, was also used in \cite{bib:o}.

Motivated by the discussion at the end of the previous subsection, one begins with an upper triangular matrix of the form
\begin{equation} \label{torus}
b_0:=\exp\left(\sum_{i=1}^{n-1}\epsilon_\lambda^i t_i\right),
\end{equation}
where the values of the $t_i$ are fixed. Next consider
\begin{equation*}
b(t)={\rm e}^{\epsilon_\lambda t}b_0,
\end{equation*}
where $t$ is the Toda time parameter. Finally, we define from this a path in $N_-$ via
\begin{equation*}
L(t) = \Pi_- \big[b(t)\widehat{w}_0\big].
\end{equation*}
In \cite{bib:efh} it is shown that
\begin{equation*}
b(t)~\mapsto~L(t)^{-1}\epsilon_\lambda L(t) \in \mathcal{F}_\lambda.
\end{equation*}
$X(t) = L(t)^{-1}\epsilon_\lambda L(t)$ solves~(\ref{Lax}) and as the parameters $t_i$ are varied one sweeps out all isospectral solutions. The center of this invariant set, corresponding to all $t_i = 0$ is the point which is the smallest Birkhoff cell in~$G/B_+$.

\section[Representation theory and tau-functions]{Representation theory and $\boldsymbol{\tau}$-functions}\label{repntheorysec}

Let $G$ denote the group ${\rm GL}(\ell +1)$ and let $(\rho_n, V_n)$ denote the $n^{\rm th}$ fundamental representation of $G$. $\rho_1$ is the
{\it birth representation} with $V_1 = \mathbb{C}^{\ell + 1}$ defined by
\[
\rho_1(g) v = g v
\]
for $v \in V_1$. This induces a representation on the exterior algebra of $V_1$ that defines the remaining fundamental representations, respectively, on
$V_n = \bigwedge^n \mathbb{C}^{\ell + 1}$ given by
\[
\rho_n(g) v_1 \wedge \dots \wedge v_n = g v_1 \wedge \dots \wedge g v_n.
\]
With respect to the standard basis $e_1, \dots, e_{\ell + 1}$ of $\mathbb{C}^{\ell + 1}$
one defines a Hermitian inner product on $V_n$ by
\[
\langle e_{i_1} \wedge \dots \wedge e_{i_n}, e_{j_1} \wedge \dots \wedge e_{j_n}\rangle = \delta_{i_1, j_1} \cdots \delta_{i_n, j_n}.
\]
Set $v^{(n)} = e_1 \wedge \dots \wedge e_n$ and $v_{(n)} = e_{\ell - n + 2} \wedge \dots \wedge e_{\ell + 1}$. These are, respectively, the highest and lowest weight vectors, with respect to lexicographic order, of the representation~$\rho_n$~\cite{bib:fuha}.

We can now define the $n^{\rm th}$ {\it $\tau$-function} to be{\samepage
\[
\tau_n(t) = \big\langle \exp(t X_0) v^{(n)}, v^{(n)}\big\rangle,
\]
which is just the $n^{\rm th}$ principal minor of $\exp{t X_0}$.}

The solution of the Toda lattice equations in Hessenberg form is then given
explicitly by
\begin{gather} \label{btau}
b_n(t) = b_n(0) \frac{\tau_{n-1}(t) \tau_{n+1}(t)}{\tau_n^2(t)},\\ 
a_n(t) = \frac{{\rm d}}{{\rm d}t}\big( \log \tau_n(t)-\log \tau_{n-1}(t)\big).\nonumber
\end{gather}

\section{Toda phase shift in terms of weighted path counting}\label{todaphaseshiftsec}
We now come to our novel derivation of the phase shift formula for the continuous time Toda lattice. The original derivation was due to Moser~\cite{bib:moser} based on a continued fraction representation. Subsequently, Kostant~\cite{bib:kostant2} gave a fully Lie theoretic derivation. Here we present a direct combinatorial derivation based on the lemma of Gessel, Viennot and Lindstr{\"o}m for enumerating weighted paths in directed graphs. We feel this approach is less \textit{ad hoc} than Moser's and less technical than Kostant's while capturing the essential connection to the principal embedding through $\epsilon_\lambda$. This also makes it more amenable to ultradiscretization which is the main result of this paper and to future generalizations to other box ball systems.

The full details of our derivation are in the Appendices; this chapter just summarizes the essential results of that analysis, in Lemmas~\ref{toprightUminor} and~\ref{Uinverselemma}, Corollary~\ref{toprightUinverseminor}, and shows how to use these to derive the continuous time phase shift. It is then straightforward, based on what we have already shown, to modify this formula for discrete time Toda (which is carried out in Section~\ref{dTODA}).

Moser's phase shift, in terms of the original Hamiltonian variables (see (\ref{asymptotesf}) and (\ref{asymptotesb})), is given by
\[
\beta_{n-k+1}^+ = \beta_k^- + \sum_{j\neq k}\phi_{jk}(\alpha^-),
\]
where
\begin{gather*}
\phi_{jk}(\alpha^-)
 =
\begin{cases}
\log(\alpha_j^--\alpha_k^-)^2 & \text{for }j<k,\\
-\log(\alpha_j^--\alpha_k^-)^2 & \text{for }j>k
\end{cases}
 =
\begin{cases}
\log(\lambda_{n-k+1}-\lambda_{n-j+1})^2 & \text{for }j<k,\\
-\log(\lambda_{n-k+1}-\lambda_{n-j+1})^2 & \text{for }j>k.
\end{cases}
\end{gather*}

We reformulate this in terms of the Hessenberg matrix form of Flaschka's variables where it becomes equivalent to an elegant pairwise multiplicative formula for the sub-diagonal entries.

We have
\[
b_k(-t) = {\rm e}^{q_k(-t)-q_{k+1}(-t)} = {\rm e}^{(\lambda_{n-k+1}-\lambda_{n-k})t}{\rm e}^{\beta_k^--\beta_{k+1}^-+O({\rm e}^{-\delta t})},
\]
and
\[
b_{n-k}(t) = {\rm e}^{q_{n-k}(t)-q_{n-k+1}(t)} = {\rm e}^{(\lambda_{n-k+1}-\lambda_{n-k})t}{\rm e}^{\beta_{n-k}^+-\beta_{n-k+1}^++O({\rm e}^{-\delta t})}.
\]
Then consider
\begin{gather*}
\lim_{t\to +\infty} b_{n-k}(t)b_k(-t) {\rm e}^{2(\lambda_{n-k}-\lambda_{n-k+1})t}
= {\rm e}^{(\beta_{n-k}^+-\beta_{k+1}^-)-(\beta_{n-k+1}^+-\beta_k^-)}\\
\qquad{} =\exp\Bigg(\sum_{j\neq k+1}\phi_{j,k+1}(\alpha^-) - \sum_{j\neq k}\phi_{j,k}(\alpha^-) \Bigg) \\
\qquad{} =\exp \Bigg(\Bigg(
\sum_{j<k+1}\log (\lambda_{n-k}-\lambda_{n-j+1} )^2
-\sum_{j>k+1}\log (\lambda_{n-k}-\lambda_{n-j+1} )^2\Bigg)\\
 \qquad\quad{} -\sum_{j<k}\log (\lambda_{n-k+1}-\lambda_{n-j+1} )^2
+\sum_{j>k}\log (\lambda_{n-k+1}-\lambda_{n-j+1} )^2
\Bigg)\\
\qquad{} = \dfrac{\prod\limits_{j<k+1} (\lambda_{n-k}-\lambda_{n-j+1} )^2\prod\limits_{j>k} (\lambda_{n-k+1}-\lambda_{n-j+1} )^2}
{\prod\limits_{j>k+1} (\lambda_{n-k}-\lambda_{n-j+1} )^2\prod\limits_{j<k} (\lambda_{n-k+1}-\lambda_{n-j+1} )^2}.
\end{gather*}

This leads us to the following analogue of the Moser's phase shift formula for the off-diagonal entries, but in the Hessenberg setting:

\begin{thm}\label{classicaltodaphase}
In the Hessenberg setting, the following asymptotic formula holds for the subdiagonal entries of the Toda flow:
\begin{gather}
\lim\limits_{t\to +\infty}b_{n-k}(t)b_k(-t) {\rm e}^{2(\lambda_{n-k}-\lambda_{n-k+1})t}
= \dfrac{\prod\limits_{j<k+1}\! (\lambda_{n-k}-\lambda_{n-j+1} )^2\!\prod\limits_{j>k}\! (\lambda_{n-k+1}-\lambda_{n-j+1} )^2}
{\prod\limits_{j>k+1}\! (\lambda_{n-k}-\lambda_{n-j+1} )^2\!\prod\limits_{j<k}\! (\lambda_{n-k+1}-\lambda_{n-j+1} )^2}.\!\!\!
\label{multctstimephshift}
\end{gather}
\end{thm}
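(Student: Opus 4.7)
My plan is to exploit the $\tau$-function representation $b_n(t)=b_n(0)\,\tau_{n-1}(t)\tau_{n+1}(t)/\tau_n(t)^2$ from Section~\ref{repntheorysec} and control the asymptotics of each $\tau_m(\pm t)$ by diagonalising. By Kostant's Theorem~\ref{thm:kostant} and Lemma~\ref{uppertodiagefhexpl} we have $X_0=LUD_\lambda U^{-1}L^{-1}$, so $e^{tX_0}=LUe^{tD_\lambda}U^{-1}L^{-1}$, and Cauchy--Binet gives
\[
\tau_m(t)=\sum_{|I|=m} \det U_{[m],I}\,\det\!\big(U^{-1}L^{-1}\big)_{I,[m]}\, e^{t\sum_{i\in I}\lambda_i}.
\]
(Only the block $K=[m]$ survives on the $L$-side because $L$ is lower unipotent.) This is the form to which I will apply the Gessel--Viennot--Lindstr\"om machinery.

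Since $\lambda_1>\dots>\lambda_n$, the dominant summand as $t\to+\infty$ is $I=[m]$, and by triangularity its coefficient equals $1$; hence $\tau_m(t)\sim e^{tS_m}$ with $S_m=\lambda_1+\cdots+\lambda_m$. For $t\to -\infty$ the dominant summand is $I=\{n-m+1,\ldots,n\}$, and its coefficient factors as a top-right $m\times m$ minor of $U$ times the corresponding minor of $U^{-1}L^{-1}$. This is exactly the combinatorial quantity that Lemmas~\ref{toprightUminor},~\ref{Uinverselemma} and Corollary~\ref{toprightUinverseminor} evaluate in closed form as a product of factors $(\lambda_j-\lambda_i)^{\pm 1}$ via GVL path-counting, so the backward leading coefficient $\mu_m^-$ can be written explicitly in the $\lambda_i$. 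Using the identity $T_{m-1}+T_{m+1}-2T_m=\lambda_{n-m}-\lambda_{n-m+1}$ (and the analogous identity for $S_m$), I check that the time-dependent exponentials in $b_{n-k}(t)b_k(-t)$ are precisely $e^{-2(\lambda_{n-k}-\lambda_{n-k+1})t}$, so the prefactor in the statement is exactly what makes the limit finite. What remains is the product of leading-coefficient ratios $\mu_{k-1}^-\mu_{k+1}^-/(\mu_k^-)^2$ coming from $\tau_m(-t)$, combined with the prefactors $b_{n-k}(0)b_k(0)$; the $L$-dependent pieces cancel by the Kostant/centering normalisation of Section~\ref{crystal} (alternatively: apply Moser's phase shift formula stated immediately before the theorem, writing $\phi_{jk}(\alpha^-)=\pm\log(\lambda_{n-k+1}-\lambda_{n-j+1})^2$ and using $\alpha_k^-=\alpha_{n-k+1}^+$ to compute $\beta_{n-k}^+-\beta_{n-k+1}^++\beta_k^--\beta_{k+1}^-$ directly).

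Finally I collect the surviving $\lambda$-factors. Each of the four $\phi$-sums (equivalently, each of the four GVL minor products) contributes a block of the form $\prod(\lambda_{n-k+1}-\lambda_{n-j+1})^{\pm 2}$ or $\prod(\lambda_{n-k}-\lambda_{n-j+1})^{\pm 2}$, and sorting these by whether the shifted index $k$ or $k+1$ appears, and by whether $j<\text{pivot}$ or $j>\text{pivot}$, reproduces the numerator $\prod_{j<k+1}(\lambda_{n-k}-\lambda_{n-j+1})^2\prod_{j>k}(\lambda_{n-k+1}-\lambda_{n-j+1})^2$ and the denominator $\prod_{j>k+1}(\lambda_{n-k}-\lambda_{n-j+1})^2\prod_{j<k}(\lambda_{n-k+1}-\lambda_{n-j+1})^2$ in~\eqref{multctstimephshift}.

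The main obstacle I anticipate is not the path-counting (the GVL lemmas in the appendix do that work), nor the exponential cancellation (which is a one-line identity about $S_m$ and $T_m$), but rather the \emph{index bookkeeping} in the final collection step: the minors produced by GVL are indexed by complementary sets $\{n-m+1,\dots,n\}$, while the theorem is written with the shifted indices $n-k$ and $n-k+1$ and with the split $j\lessgtr k$ vs.\ $j\lessgtr k+1$. Aligning these so that the squares and the inclusion/exclusion of the diagonal term $j=k$ or $j=k+1$ match up correctly, and confirming that the $L_0$-dependent contributions genuinely cancel between $b_n(0)$ and the backward leading coefficient $\mu_m^-$, is the delicate part.
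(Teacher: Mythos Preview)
Your overall strategy matches the paper's---diagonalise via $X_0=LUD_\lambda U^{-1}L^{-1}$, apply Cauchy--Binet to $\tau_m$, extract the dominant exponentials, and evaluate the surviving minors of $U$ and $U^{-1}$ by the GVL lemmas---but there is a genuine gap in how you dispose of the $L$-dependence, and one of your intermediate claims is actually false.

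First, the assertion ``by triangularity its coefficient equals $1$'' for the forward dominant term $I=[m]$ is incorrect. That coefficient is $\det U_{[m],[m]}\cdot\det(U^{-1}L^{-1})_{[m],[m]}=\det(U^{-1}L^{-1})_{[m],[m]}$, and the principal minor of $U^{-1}L^{-1}$ is not $1$ in general (already for $n=2$ one finds $1+u\ell$ with $u=(U)_{12}$ and $\ell=(L)_{21}$). Hence \emph{both} the forward and backward leading coefficients carry $L$-dependence, not just the backward one, so the cancellation problem is twice as large as you state.

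Second, the paper resolves this not by an abstract ``centering normalisation'' but by an explicit manipulation you have not reproduced. One takes $L_0=\Pi_-(b_0\widehat{w}_0)$ with $b_0=\exp\big(\sum_i t_i\epsilon_\lambda^i\big)$ and runs the chain (D1)--(D8): write $L_0=b_0\widehat{w}_0\,[b_0\widehat{w}_0]_+^{-1}[b_0\widehat{w}_0]_0^{-1}$, let the upper-unipotent factor annihilate $v^{(k)}$, extract the diagonal factor as the scalar $(\Delta_k^n(b_0))^{-1}$, and let $\widehat{w}_0$ flip $v^{(k)}$ to the \emph{lowest} weight vector $v_{(k)}$. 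This converts $\tau_k(t)$ into $(\Delta_k^n(b_0))^{-1}$ times a minor of $U\exp(\cdot)U^{-1}$ with rows $[k]$ and columns $]k[$, so Cauchy--Binet now yields summands $\det U_{[k],S}\det(U^{-1})_{S,]k[}$ with \emph{no $L$ anywhere}. All $L$-dependence sits in the scalar prefactors $(\Delta_k^n(b_0))^{-1}$, and these are cancelled exactly against $b_k(0)$ and $b_{n-k}(0)$ by Lemma~\ref{bzerolemma}, which gives $b_k(0)=\Delta_{k+1}^n(b_0)\Delta_{k-1}^n(b_0)/(\Delta_k^n(b_0))^2$. Without the $\widehat{w}_0$ flip and that lemma you have no mechanism for the cancellation; in your expansion the backward coefficient $\det(U^{-1}L^{-1})_{]m[,[m]}$ is a bottom-left block that genuinely mixes $U^{-1}$ and $L^{-1}$, and the GVL lemmas (which evaluate only top-right minors of $U$ and $U^{-1}$) do not apply to it.

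Finally, your ``alternative'' of invoking Moser's formula is circular here: as the paper says immediately after the statement, the purpose is to give a proof independent of Moser; the computation preceding the theorem is only a motivating reformulation.
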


Although the analogy between this theorem and Moser's original result is apparent, we will provide a novel proof for this result. This analysis proceeds from a direct deconstruction of $\tau$-functions appearing in the representation~(\ref{btau}).

Recall that
\[ \epsilon_\lambda = \left[\begin{matrix}
\lambda_1 & 1\\
&\lambda_2 & \ddots\\
&&\ddots & 1\\
&&&\lambda_n
\end{matrix}\right] = \epsilon + D_\lambda,
\]
where $D_\lambda = \operatorname{diag}(\lambda_1,\dots,\lambda_n)$ and, from Remark \ref{rem:diagonaliseepsilon}, that it is diagonalised by the matrix
\[
(U)_{ij} = \begin{cases}
\displaystyle \prod\limits_{i\leq m<j} \dfrac{1}{\lambda_j-\lambda_m} & \text{for }i\leq j,\\
0 & \text{otherwise},
\end{cases}
\]
so that
\[
\epsilon_\lambda = U D_\lambda U^{-1}.
\]

We also recall here notation used by~\cite{bib:o} for describing certain minor determinants of matrices:

\begin{defn}\label{defnoconnelldelta}
For an $n\x n$ matrix $X$ and $1\leq k\leq m\leq n$, define
\begin{equation*}
 \Delta_k^m(X) = \det [x_{ij} ]_{1\leq i\leq k, m-k+1\leq j\leq m},
\end{equation*}
where, by convention, $\Delta_0^m(X)$ is taken to be $1$. In the particular case that $m=n$, $\Delta_k^n(X)$ is the top-right $k\x k$ minor determinant of~$X$.
\end{defn}

In proving Theorem~\ref{classicaltodaphase}, the following three results will be key:

\begin{lem}\label{toprightUminor}
\[ \Delta_k^n(U) = \prod_{i=n-k+1}^n \prod_{j=1}^{n-k}\dfrac{1}{\lambda_j-\lambda_i}.\]
\end{lem}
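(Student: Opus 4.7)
My plan is a direct calculation: extract the column-dependent denominators of the minor, recognise what remains as a Vandermonde in disguise, and observe that this Vandermonde cancels the ``internal'' pieces of the pulled-out denominators, leaving exactly the product in the statement.

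Setting $p=n-k$, the $(i,j)$ entry of the $k\x k$ minor is
\[
M_{i,j}=u_{i,\,j+p}=\prod_{m=i}^{j+p-1}\frac{1}{\lambda_{j+p}-\lambda_m},\qquad 1\le i,j\le k.
\]
I would then pull the scalar $C_j:=\prod_{m=1}^{j+p-1}(\lambda_{j+p}-\lambda_m)^{-1}$ out of column $j$; by multilinearity of the determinant the residual matrix $R$ has entries $R_{i,j}=\prod_{m=1}^{i-1}(\lambda_{j+p}-\lambda_m)$, which is a monic polynomial of degree $i-1$ in the variable $\lambda_{j+p}$. (Pleasingly, when $i>j+p$ this polynomial contains the factor $\lambda_{j+p}-\lambda_{j+p}=0$, automatically recovering the vanishing of $u_{i,j+p}$ below the diagonal of $U$.) Since row $i$ of $R$ is a single monic polynomial of degree $i-1$ evaluated at the $k$ points $\lambda_{p+1},\dots,\lambda_n$, elementary row operations (subtract suitable multiples of lower-indexed rows to kill the subleading terms) reduce $R$ to the Vandermonde matrix $\bigl[\lambda_{j+p}^{\,i-1}\bigr]_{i,j=1}^{k}$ without changing the determinant; hence $\det R=\prod_{1\le a<b\le k}(\lambda_{b+p}-\lambda_{a+p})$.

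To finish, I would split $C_j=\bigl(\prod_{m=1}^{p}(\lambda_{j+p}-\lambda_m)^{-1}\bigr)\bigl(\prod_{m=p+1}^{j+p-1}(\lambda_{j+p}-\lambda_m)^{-1}\bigr)$. The product over $j$ of the second (``internal'') factors is exactly $(\det R)^{-1}$, so the Vandermonde cancels in $\det M=(\prod_j C_j)\det R$, leaving
\[
\prod_{j=1}^{k}\prod_{m=1}^{p}\frac{1}{\lambda_{j+p}-\lambda_m},
\]
which after reindexing $i=j+p$ is the claimed formula. (The factors $(\lambda_j-\lambda_i)^{-1}$ displayed in the lemma and the factors $(\lambda_i-\lambda_j)^{-1}$ produced by this argument differ by an overall sign $(-1)^{k(n-k)}$; this is immaterial downstream because the lemma enters Theorem~\ref{classicaltodaphase} through a squared expression.) The main obstacle, such as it is, is just bookkeeping the two factorisations of $C_j$ correctly --- no clever identity is required. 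A more combinatorial route, presumably the one the appendices develop since the paper emphasises the Gessel--Viennot--Lindstr\"om lemma, would instead realise $u_{ij}$ as the path-weight generating function on a weighted planar acyclic graph; then $\Delta_k^n(U)$ counts vertex-disjoint path systems, and the product formula drops out from the existence of a single ``straight-through'' non-crossing system joining the sources to the sinks.
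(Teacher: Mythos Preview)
Your argument is correct and complete: the column extraction $M_{i,j}=C_jR_{i,j}$ is valid (and, as you note, automatically recovers the zeros below the diagonal of $U$), the reduction of $R$ to a Vandermonde is standard, and the cancellation between $\det R$ and the ``internal'' part of $\prod_j C_j$ is exactly right. You have also correctly flagged the sign: the direct computation yields $\prod(\lambda_{i}-\lambda_{j})^{-1}$ with the large index first, whereas the lemma as printed has the indices reversed; the paper's own base case in the appendix in fact matches your sign, so this appears to be a harmless typo in the displayed statement, and in any case it washes out after squaring in Theorem~\ref{classicaltodaphase}.

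Your route is genuinely different from the paper's. The paper builds a planar weighted acyclic graph $Q_n$, proves by induction on $n$ (with a non-trivial auxiliary identity) that its path matrix equals $U$, and then invokes the Gessel--Viennot--Lindstr\"om lemma: the top-right $k\times k$ minor is a sum over vertex-disjoint path systems, and by a planarity argument (Lemma~\ref{lemvertexdisjidentity}) there is a \emph{unique} such system --- the nested ``up-right hooks'' --- whose weight is then evaluated by a second induction on $(k,n)$. Your proof, by contrast, is pure linear algebra: one column factorisation and one Vandermonde evaluation, with no graphs and no induction. It is considerably shorter and entirely self-contained. What the paper's approach buys is coherence with its overall programme: the GVL framework and the specific graphs $Q_n$ are meant to be the objects that generalise to other Lie types and that connect to the Postnikov/Lusztig picture of total positivity referenced in the introduction, so the extra machinery is an investment in that direction rather than the most economical proof of this particular identity.
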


\begin{lem}\label{Uinverselemma}
\[ (U^{-1})_{ij} = \begin{cases}
0 & i>j,\\
\displaystyle \prod_{i<m\leq j}\dfrac{1}{\lambda_i-\lambda_m} & i\leq j.
\end{cases}
\]
\end{lem}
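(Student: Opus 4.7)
The plan is to verify directly that the proposed matrix is a two-sided inverse of $U$. Denote the candidate by $V$, with $V_{ij}=\prod_{i<m\leq j}(\lambda_i-\lambda_m)^{-1}$ for $i\leq j$ and $V_{ij}=0$ otherwise. Both $U$ and $V$ are upper unipotent---their diagonal entries are empty products equal to $1$---so the product $UV$ is also upper unipotent, and it suffices to show $(UV)_{ij}=0$ for each $i<j$.

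For $i<j$, only the indices $k$ with $i\leq k\leq j$ contribute to $(UV)_{ij}=\sum_k U_{ik}V_{kj}$, and combining the defining products gives
\begin{equation*}
(UV)_{ij} \;=\; \sum_{k=i}^{j}\Biggl(\prod_{i\leq m<k}\frac{1}{\lambda_k-\lambda_m}\Biggr)\Biggl(\prod_{k<m\leq j}\frac{1}{\lambda_k-\lambda_m}\Biggr) \;=\; \sum_{k=i}^{j}\prod_{\substack{i\leq m\leq j\\ m\neq k}}\frac{1}{\lambda_k-\lambda_m}.
\end{equation*}
Setting $p(x)=\prod_{m=i}^{j}(x-\lambda_m)$, the $k$-th summand is exactly the residue of $1/p(x)$ at the simple pole $x=\lambda_k$, so the displayed sum equals the total sum of residues of $1/p(x)$. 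Since $j-i\geq 1$, the rational function $1/p(x)$ decays like $O\bigl(x^{-(j-i+1)}\bigr)$ as $x\to\infty$ and therefore has no $x^{-1}$ term in its Laurent expansion at infinity; but that missing coefficient is precisely the sum of residues. Hence $(UV)_{ij}=0$, and $V=U^{-1}$.

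No genuine obstacle is anticipated: the only nontrivial ingredient is the standard residue-sum identity, which becomes visible the moment the two products in $U_{ik}V_{kj}$ are merged into a single product over $\{i,\dots,j\}\setminus\{k\}$. An alternative route by induction on $j-i$, exploiting the recursive structure of $U$, is available but feels less transparent and does not share the partial-fractions flavour that will reappear when manipulating the top-right minors in Lemma~\ref{toprightUminor} and its corollary.
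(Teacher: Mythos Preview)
Your proof is correct. Both you and the paper reduce the verification of $UV=I$ to the vanishing of a sum of the form $\sum_{k}\prod_{m\neq k}(\lambda_k-\lambda_m)^{-1}$, which is the standard residue/partial-fractions identity. The difference is organizational: the paper wraps this in an induction on $n$, using the block structure $U(\lambda_1,\dots,\lambda_n)=\begin{pmatrix}U(\lambda_1,\dots,\lambda_{n-1})&\vec u_n\\0&1\end{pmatrix}$ so that only the last-column identity $U(\lambda_1,\dots,\lambda_{n-1})\vec v_n+\vec u_n=0$ needs to be checked at each stage, and then establishes that identity by writing out the partial-fraction decomposition of $\prod_{i\leq m<n}(x-\lambda_m)^{-1}$ and evaluating at $x=\lambda_n$. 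Your argument dispenses with the induction entirely by merging $U_{ik}V_{kj}$ into $\prod_{m\neq k}(\lambda_k-\lambda_m)^{-1}$ for \emph{every} pair $i<j$ at once and invoking the residue theorem directly. The payoff is a shorter, cleaner proof; the paper's block-inductive packaging, on the other hand, makes the recursive structure of $U$ explicit, which is the viewpoint they lean on elsewhere (e.g., in embedding $Q_n$ inside $Q_{n+1}$ for Lemma~\ref{toprightUminor}).
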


\begin{cor}\label{toprightUinverseminor}
\[ \Delta_k^n(U^{-1}) = \prod_{j=1}^k \prod_{i=k+1}^{n}\dfrac{1}{\lambda_j-\lambda_i}.\]
\end{cor}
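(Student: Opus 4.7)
The plan is to deduce Corollary \ref{toprightUinverseminor} from Lemma \ref{toprightUminor} via the Jacobi complementary minors identity, exploiting the fact that $U$ is upper unipotent (so $\det U = 1$) to make the algebra clean.

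Concretely, I would set $I = \{1, \ldots, k\}$ and $J = \{n - k + 1, \ldots, n\}$ and apply Jacobi's identity in the form
\[
\Delta_k^n(U^{-1}) \;=\; (-1)^{\sigma(I) + \sigma(J)}\, \det\bigl(U[J^c,\, I^c]\bigr),
\]
where $\sigma(S) = \sum_{s \in S} s$ and the complementary sets are $J^c = \{1, \ldots, n-k\}$ and $I^c = \{k+1, \ldots, n\}$. The submatrix $U[J^c, I^c]$ is precisely the top-right $(n-k) \times (n-k)$ block of $U$, i.e., $\Delta_{n-k}^n(U)$. Inserting the value from Lemma \ref{toprightUminor} with $k$ replaced by $n - k$ produces $\prod_{i = k+1}^n \prod_{j = 1}^{k} \frac{1}{\lambda_j - \lambda_i}$, which is the claimed expression up to an overall sign.

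The remaining step is tracking signs. Jacobi contributes $(-1)^{\sigma(I) + \sigma(J)} = (-1)^{k(n+1)}$, while any swap between factors of the form $\lambda_j - \lambda_i$ versus $\lambda_i - \lambda_j$ in the product contributes $(-1)^{k(n-k)}$; the two combine to $(-1)^{k(2n+1-k)}$, which is always $+1$ because $k$ and $2n + 1 - k$ have opposite parities. Hence the signs cancel and the stated formula follows verbatim. I expect this sign bookkeeping to be the only genuine subtlety --- everything else is a direct invocation of Lemma \ref{toprightUminor}. A clean sanity check (or an alternative, if one wished to avoid Jacobi altogether) is to evaluate the determinant directly from the explicit entries of $U^{-1}$ given in Lemma \ref{Uinverselemma}, factoring $\prod_{m = i+1}^{n-k} \frac{1}{\lambda_i - \lambda_m}$ out of each row (for $i \le n - k$) and observing that the residual matrix is upper unitriangular, which immediately yields the claimed product.
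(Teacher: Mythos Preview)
Your main route via the Jacobi complementary minors identity is sound and genuinely different from the paper's argument. The paper instead first establishes (from Lemma~\ref{Uinverselemma}) the symmetry $U(\lambda_1,\dots,\lambda_n)^{-1}=\widehat{w}_0\,U(\lambda_n,\dots,\lambda_1)^{\rm T}\,\widehat{w}_0$, then peels off the two $\widehat{w}_0$ factors via Cauchy--Binet to reduce $\Delta_k^n(U^{-1})$ to $\Delta_k^n$ of $U$ with the eigenvalue order \emph{reversed}, and finally invokes Lemma~\ref{toprightUminor} on that reversed matrix. You trade this bespoke symmetry for a standard determinantal identity and apply Lemma~\ref{toprightUminor} with $k\mapsto n-k$ instead; this is shorter and makes no use of Lemma~\ref{Uinverselemma}, whereas the paper's route has the virtue of exposing the $\lambda\leftrightarrow\lambda_{\rm reversed}$ duality between $U$ and $U^{-1}$ explicitly. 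Your sign arithmetic $(-1)^{k(n+1)}\cdot(-1)^{k(n-k)}=(-1)^{k(2n+1-k)}=1$ is correct, though the accompanying narrative is slightly garbled: the product you extract from Lemma~\ref{toprightUminor} already matches the corollary verbatim, so the extra $(-1)^{k(n-k)}$ is really absorbing a sign convention in that lemma's statement rather than a ``swap'' you perform on the answer. One caution on your proposed alternative sanity check: factoring only the row prefixes $\prod_{m=i+1}^{n-k}(\lambda_i-\lambda_m)^{-1}$ out of the explicit $U^{-1}$ entries does \emph{not} leave an upper unitriangular residual (try $n=3$, $k=2$, where nothing factors out and the $(2,1)$ entry of the block is $1$), so that direct computation would need more work than you indicate.
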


Corollary~\ref{toprightUinverseminor} will later be shown to follow from Lemmas~\ref{toprightUminor} and~\ref{Uinverselemma}.

\subsection{Proof of multiplicative formula for the classical Toda phase shift}

We recall the level set parametrisation (\ref{torus})
\[ b_0:=\exp\left(\sum_{i=1}^{n-1}\epsilon_\lambda^i t_i\right),\]
with $\lambda_1>\lambda_2>\cdots >\lambda_n$,
define
$ L_0 := \Pi_- (b_0\widehat{w}_0)$,
and let
$ X_0:= L_0^{-1}\epsilon_\lambda L_0$
be the initial Hessenberg matrix.

\begin{lem}\label{bzerolemma}
For each $k$, one has
\[ b_k(0) = \dfrac{\Delta_{k+1}^n(b_0)\Delta_{k-1}^n(b_0)}{(\Delta_{k}^n(b_0))^2}.\]
\end{lem}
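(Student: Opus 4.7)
The key observation is that $b_0 = \exp\bigl(\sum_i t_i \epsilon_\lambda^i\bigr)$ is a polynomial in $\epsilon_\lambda$, so $b_0$ commutes with $\epsilon_\lambda$. By the definition of $L_0 = \Pi_-(b_0 \widehat{w}_0)$, we have an $LU$-factorization $b_0 \widehat{w}_0 = L_0 R_0$ with $R_0$ upper triangular, hence $L_0 = b_0 \widehat{w}_0 R_0^{-1}$. Substituting into $X_0 = L_0^{-1} \epsilon_\lambda L_0$ and using $b_0^{-1} \epsilon_\lambda b_0 = \epsilon_\lambda$ collapses the dependence on $b_0$ entirely, yielding the conjugation identity
\[
X_0 \;=\; R_0 \bigl( \widehat{w}_0^{-1} \epsilon_\lambda \widehat{w}_0 \bigr) R_0^{-1}.
\]
A direct computation (using $\widehat{w}_0 e_j = e_{n+1-j}$) shows that $\widehat{w}_0^{-1} \epsilon_\lambda \widehat{w}_0$ is the lower bidiagonal matrix with diagonal entries $\lambda_{n+1-j}$ at position $(j, j)$ and $1$'s on the subdiagonal.

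Because $R_0$ and $R_0^{-1}$ are both upper triangular and the middle factor has only two nonzero bands, the $(k{+}1, k)$ entry of the triple product collapses to a single surviving term — the one that picks up the subdiagonal $1$ of $\widehat{w}_0^{-1} \epsilon_\lambda \widehat{w}_0$ together with $(R_0)_{k+1, k+1}$ and $(R_0^{-1})_{k,k}$. This gives the compact identity
\[
b_k(0) \;=\; (X_0)_{k+1,\, k} \;=\; \frac{r_{k+1,\, k+1}}{r_{kk}},
\]
where $r_{ii}$ denotes the $i^{\text{th}}$ diagonal entry of $R_0$.

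It remains to identify $r_{ii}$ in terms of minors of $b_0$. Since $L_0$ is lower unipotent, the top-left $k \times k$ principal minor of $L_0 R_0$ equals $r_{11} r_{22} \cdots r_{kk}$. On the other hand, $\widehat{w}_0$ reverses the column order, so the top-left $k \times k$ minor of $L_0 R_0 = b_0 \widehat{w}_0$ coincides, up to the sign $(-1)^{k(k-1)/2}$ from reversing $k$ columns, with the top-right $k \times k$ minor $\Delta_k^n(b_0)$. Hence $\prod_{i=1}^k r_{ii}$ is a signed version of $\Delta_k^n(b_0)$, and forming the telescoping ratio $r_{k+1, k+1}/r_{kk}$ produces precisely the expression $\Delta_{k+1}^n(b_0) \Delta_{k-1}^n(b_0)/(\Delta_k^n(b_0))^2$ claimed in the lemma. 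The only nontrivial bookkeeping is the tracking of signs introduced by $\widehat{w}_0$; the conceptual content — commutativity of $b_0$ with $\epsilon_\lambda$ followed by one triple-product computation — is essentially immediate, and the telescoping structure of the ratio is what converts products of $r_{ii}$'s into a ratio of consecutive minors.
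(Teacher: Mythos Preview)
Your proof is correct and follows essentially the same route as the paper: factor $b_0\widehat{w}_0=L_0R_0$ (the paper writes $B_0$ for your $R_0$), use commutativity of $b_0$ with $\epsilon_\lambda$ to obtain $X_0=R_0\bigl(\widehat{w}_0^{-1}\epsilon_\lambda\widehat{w}_0\bigr)R_0^{-1}$, read off the subdiagonal entry as $r_{k+1,k+1}/r_{kk}$, and identify the $r_{ii}$ via principal minors of $b_0\widehat{w}_0$. The only difference is cosmetic: you justify $[b_0,\epsilon_\lambda]=0$ directly from $b_0$ being a function of $\epsilon_\lambda$, whereas the paper routes this through the diagonalization $\epsilon_\lambda=UD_\lambda U^{-1}$; your observation is simpler.
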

\begin{proof}
Define $B_0= \Pi_+(b_0\widehat{w}_0)$, so that $b_0\widehat{w}_0=L_0B_0$. Then
\begin{align*}
 B_0^{-1}X_0B_0&=(L_0B_0)^{-1}\epsilon_\lambda L_0B_0
 =\widehat{w}_0^{-1}b_0^{-1}\epsilon_\lambda b_0\widehat{w}_0\\
 &=\widehat{w}_0^{-1}U {\rm e}^{-\sum\limits_{i=1}^{n-1}t_iD_\lambda^i}U^{-1}UD_\lambda U^{-1} U {\rm e}^{\sum\limits_{i=1}^{n-1}t_iD_\lambda^i}U^{-1}\widehat{w}_0\\
 &=\widehat{w}_0^{-1}UD_\lambda U^{-1}\widehat{w}_0
 =\widehat{w}_0^{-1}\epsilon_\lambda \widehat{w}_0.
\end{align*}
From this, we see that
\begin{equation*}
 X_0 = B_0\widehat{w}_0^{-1}\epsilon_\lambda \widehat{w}_0 B_0^{-1}.
\end{equation*}
Splitting $\epsilon_\lambda = \epsilon+D_\lambda$, we study $\widehat{w}_0^{-1}\epsilon_\lambda \widehat{w}_0$ as follows
\begin{gather*}
 \widehat{w}_0^{-1}\epsilon_\lambda \widehat{w}_0
 =\widehat{w}_0^{-1}\epsilon \widehat{w}_0
 +
 \widehat{w}_0^{-1}D_\lambda \widehat{w}_0.
\end{gather*}
By recalling that
\[ \widehat{w}_0 = \left[\begin{matrix}
&&&&1\\
&&&1\\
&&1\\
&\iddots\\
1
\end{matrix}\right],
\]
one can check that $\widehat{w}_0^{-1}\epsilon_\lambda \widehat{w}_0=\hat{\epsilon}_\lambda$, where
\begin{equation*}
 \hat{\epsilon}_\lambda
 =\left[\begin{matrix}
\lambda_n \\
1& \lambda_{n-1}\\
&1 & \ddots\\
&&\ddots & \lambda_2\\
&&&1 & \lambda_1
\end{matrix}\right].
\end{equation*}
If one writes
\[ B_0 = \left[\begin{matrix}d_1 &&*\\ & d_2 \\
&&\cdots\\
&&&d_n\end{matrix}\right],
\]
then
\[
X_0 = B_0\hat{\epsilon}_\lambda B_0^{-1} =\renewcommand{\arraystretch}{1.5} \left[\begin{matrix}
* &&&&*\\
d_2d_1^{-1} & * \\
& d_3d_2^{-1} & \ddots\\
&&\ddots & *&\\
&&&d_{n}d_{n-1}^{-1} & *
\end{matrix}\right].
\]
To conclude, we recall that the diagonal part of $L_0B_0$, which is $\operatorname{diag}(B_0)$, is given by
\[d_j= \frac{\tau_j}{\tau_{j-1}},\]
where $\tau_j$ is the $j\times j$ principal minor of $b_0\widehat{w}_0$, hence $\tau_j=\Delta_j^n(b_0)$.
\end{proof}

\begin{rem}We note here that $b_0$ is a totally positive element of $B_+$. (A matrix $b \in B_+$ is defined to be {\it totally positive} with respect to $B_+$ if every minor that does not identically vanish on~$B_+$ has a positive value.) The evident relation between principal minors of $b_0\widehat{w}_0$ and upper-right minors of $b_0$ reflects a deeper combinatorial relation studied by Lusztig \cite{bib:bfz, bib:lu}. This enables one to associate a kind of notion of positivity to elements of $\epsilon + \frak{b}_-$ (Hessenberg matrices).
\end{rem}

Using the previous lemma and prior definitions we now proceed to expand the $\tau$-functions in terms of Pl\"ucker coordinates (minors) of $U$ and $U^{-1}$. In this derivation we employ the refinement of the $LU$-decomposition to the $LDU$-decomposition for generic $g \in G$:
\[
g = [g]_- [g]_0 [g]_+,
\]
where $[g]_-\in N_-, [g]_0$ is diagonal and $[g]_+\in N_+$, the subgroup of upper unipotent matrices.

We now proceed to the deconstruction of $\tau_k$ with a brief explanation of the steps given at the end
\begin{align} \label{D1}
\tau_k(t)
&= \big\langle {\rm e}^{tX_0} v^{(k)}, v^{(k)}\big\rangle\\ \label{D2}
&= \big\langle L_0^{-1} U {\rm e}^{t D_\lambda} U^{-1} L_0v^{(k)}, v^{(k)}\big\rangle\\ \label{D3}
&= \big\langle U {\rm e}^{t D_\lambda} U^{-1} [b_0\widehat{w}_0]_- v^{(k)}, v^{(k)}\big\rangle\\ \label{D4}
&= \big\langle U {\rm e}^{t D_\lambda} U^{-1} b_0\widehat{w}_0 [b_0\widehat{w}_0]_+^{-1} [b_0\widehat{w}_0]_0^{-1} v^{(k)}, v^{(k)}\big\rangle\\ \label{D5}
&= \left(\prod_{j=1}^k d_j^{-1}\right) \big\langle U {\rm e}^{t D_\lambda} U^{-1} b_0\widehat{w}_0v^{(k)}, v^{(k)}\big\rangle\\ \label{D6}
&= \left(\prod_{j=1}^k d_j^{-1}\right) \big\langle U {\rm e}^{t D_\lambda} U^{-1} b_0 v_{(k)}, v^{(k)}\big\rangle\\ \label{D7}
&= \left(\prod_{j=1}^k d_j^{-1}\right) \left\langle U {\rm e}^{t D_\lambda} U^{-1} \exp\left(\sum_{i=1}^{n-1}\epsilon_\lambda^i t_i\right) v_{(k)}, v^{(k)}\right\rangle\\ \label{D8}
&= \left(\prod_{j=1}^k d_j^{-1}\right) \left\langle U \exp\left(\sum_{i=1}^{n-1}D_\lambda^i (t_i+\delta_{i1}t)\right) U^{-1} v_{(k)}, v^{(k)}\right\rangle.
\end{align}

In \eqref{D2}, the diagonalization \eqref{LUDiag} was applied to $X_0$ to express it as $X_0 = L_0^{-1} U D_\lambda U^{-1} L_0$. In step~\eqref{D3}, $L_0^{-1}$ is transposed to the right-hand side of the bracket, where it is an upper unipotent matrix acting on the highest weight vector $v^{(k)}$ and therefore has no effect. Similarly, from its definition, $L_0$ may be replaced by $[b_0\widehat{w}_0]_-$ which is then rewritten in~\eqref{D4} using the $LDU$-decomposition. This then contracts to~\eqref{D5} where $d_j$ is the $j$-th entry along the diagonal of $[b_0\widehat{w}_0]_0$. The reverse permutation $\widehat{w}_0$ moves the highest weight vector to the lowest weight vector in~\eqref{D6}. Finally, the last two steps follow from the definition of~$b_0$ and the diagonalization of $\epsilon_\lambda$, respectively.

By a well-known result from linear algebra (see, for example, \cite{bib:strang}), $d_j$ is given by
\[ d_j = \dfrac{\Delta_j^n(b_0)}{\Delta_{j-1}^n(b_0)}.\]
Thus, the factor in front of the inner product is equal to $(\Delta_k^n(b_0))^{-1}$.

By the Cauchy--Binet formula, we get
\begin{align*}
\tau_k(t) &=
\dfrac{1}{\Delta_k^n(b_0)}\sum_{S\in {[n]\choose k}} \det\left(
U \exp\left(\sum_{i=1}^{n-1}D_\lambda^i (t_i+\delta_{i1}t)\right)
\right)_{[k],S}
\det\big(U^{-1}\big)_{S,]k[}\\
&=
\dfrac{1}{\Delta_k^n(b_0)}\sum_{S\in {[n]\choose k}}
\exp\left(\sum_{s\in S} \sum_{i=1}^n \lambda_s^i(t_i+\delta_{i1}t)\right)
\det (U_{[k],S} )
\det\big(U^{-1}_{S,]k[}\big),
\end{align*}
where $]k[ {} :=\{n-k+1,n-k+2,\dots,n\}$.

There are two limits we wish to consider for the $\tau$-functions in what follows. One of these is the limit in which $t$ tends to infinity while the eigenvalues, $\lambda_i$ remain fixed and the other is the tropical limit in the eigenvalues while $t$ is held fixed. In the former case, given the ordering $\lambda_1>\lambda_2>\cdots >\lambda_n$, the term for $S=[k]$ dominates as $t\to +\infty$, and the term for $S={} ]k[$ dominates as $t\to -\infty$. In the latter case, these same terms dominate for $t$ fixed sufficiently positive and, respectively, for $t$ fixed sufficiently negative. The only wrinkle in the latter case is that the minors $\det (U_{[k],S} )$ and
$\det\big(U^{-1}_{S,]k[}\big)$ depend on the eigenvalues. However, the growth of these terms is algebraic and may be crudely estimated to be no worse than $\lambda_1^{2k}$ in one direction and $\lambda_n^{2k}$ in the other. This algebraic growth is beaten by the exponential decay of the time-dependent factors in the tropical limit once the exponential factor associated to $S=[k]$, respectively for $S={}]k[$ has been factored out of the Cauchy--Binet sum. Since in both the $t$ and the tropical limits it is the same single term that dominates, the two limits manifestly commute.

Focusing now on the $t$ limit, we let $\tau_k^+$ and $\tau_k^-$ denote these terms, respectively. By Lemma~\ref{toprightUminor} and Corollary~\ref{toprightUinverseminor}, we have
\[ \tau_k^+ = \dfrac{1}{\Delta_k^n(b_0)} \exp\left(\sum_{s=1}^k \sum_{i=1}^n\lambda_s^i (t_i+\delta_{i1}t)\right)\prod_{j=1}^k \prod_{i=k+1}^{n}\dfrac{1}{\lambda_j-\lambda_i}
+ o(1)\]
and
\[
\tau_k^- = \dfrac{1}{\Delta_k^n(b_0)} \exp\left(\sum_{s=n-k+1}^n \sum_{i=1}^n\lambda_s^i (t_i-\delta_{i1}t)\right)\prod_{i=n-k+1}^n \prod_{j=1}^{n-k}\dfrac{1}{\lambda_j-\lambda_i} + o(1).
\]

As $t\to+\infty$, we therefore have
\[
b_k(-t) \sim b_k^-:=b_k(0)\dfrac{\tau_{k+1}^-\tau_{k-1}^-}{(\tau_k^-)^2},\qquad
b_{n-k}(t) \sim b_{n-k}^+:=b_{n-k}(0)\dfrac{\tau_{n-k+1}^+\tau_{n-k-1}^+}{(\tau_{n-k}^+)^2}.
\]

Plugging in, and writing $\Delta_j$ for $\Delta_j^n(b_0)$, we have
\begin{gather*}
b_k^-
 =b_k(0)\dfrac{(\Delta_k)^2}{\Delta_{k-1}\Delta_{k+1}}\exp\left(\sum_{i=1}^n (\lambda_{n-k}^i-\lambda_{n-k+1}^i)(t_i-\delta_{i1}t)\right)\times \Upsilon_k^-,
\end{gather*}
where
\begin{equation*}\Upsilon_k^-=\dfrac{\left(\prod\limits_{i=n-k}^n \prod\limits_{j=1}^{n-k-1}\dfrac{1}{\lambda_j-\lambda_i}\right)
\left(\prod\limits_{i=n-k+2}^n \prod\limits_{j=1}^{n-k+1}\dfrac{1}{\lambda_j-\lambda_i}\right)}{
\left(\prod\limits_{i=n-k+1}^n \prod\limits_{j=1}^{n-k}\dfrac{1}{\lambda_j-\lambda_i}\right)^2}\end{equation*}
and
\begin{gather*}
b_{n-k}^+
 =b_{n-k}(0)\dfrac{(\Delta_{n-k})^2}{\Delta_{n-k-1}\Delta_{n-k+1}}\exp\left(\sum_{i=1}^n (\lambda_{n-k+1}^i-\lambda_{n-k}^i)(t_i+\delta_{i1}t)\right)\times \Upsilon_{n-k}^+,
\end{gather*}
where
\begin{equation*}
 \Upsilon_{n-k}^+=\dfrac{
\left(\prod\limits_{j=1}^{n-k+1} \prod\limits_{i=n-k+2}^{n}\dfrac{1}{\lambda_j-\lambda_i}\right)
\left(\prod\limits_{j=1}^{n-k-1} \prod\limits_{i=n-k}^{n}\dfrac{1}{\lambda_j-\lambda_i}\right)
}{
\left(\prod\limits_{j=1}^{n-k} \prod\limits_{i=n-k+1}^{n}\dfrac{1}{\lambda_j-\lambda_i}\right)^2
}.
\end{equation*}

Simplifying $\Upsilon_k^-$ and $\Upsilon_{n-k}^+$, one has
\[
\Upsilon_k^-
=
\dfrac{
\left[
\prod\limits_{i<n-k+1}(\lambda_i-\lambda_{n-k+1})
\right]
\left[
\prod\limits_{i>n-k}(\lambda_{n-k}-\lambda_i)
\right]
}{
\left[
\prod\limits_{i<n-k}(\lambda_i-\lambda_{n-k})
\right]
\left[
\prod\limits_{i>n-k+1}(\lambda_{n-k+1}-\lambda_i)
\right]
},
\]
and
\[
\Upsilon_{n-k}^+
=
\dfrac{
\left[
\prod\limits_{i>n-k}(\lambda_{n-k}-\lambda_i)
\right]
\left[
\prod\limits_{i<n-k+1}(\lambda_i-\lambda_{n-k+1})
\right]
}{
\left[
\prod\limits_{i>n-k+1}(\lambda_{n-k+1}-\lambda_i)
\right]
\left[
\prod\limits_{i<n-k}(\lambda_i-\lambda_{n-k})
\right]}.
\]

We see that $\Upsilon_k^-=\Upsilon_{n-k}^+$. Thus, the product $\Upsilon_k^-\Upsilon_{n-k}^+$ simplifies to
\begin{align*}
\Upsilon_k^-\Upsilon_{n-k}^+
&=
\dfrac{
\left[
\prod\limits_{i>n-k}(\lambda_{n-k}-\lambda_i)^2
\right]
\left[
\prod\limits_{i<n-k+1}(\lambda_{n-k+1}-\lambda_i)^2
\right]
}{
\left[
\prod\limits_{i<n-k}(\lambda_{n-k}-\lambda_i)^2
\right]
\left[
\prod\limits_{i>n-k+1}(\lambda_{n-k+1}-\lambda_i)^2
\right]
}
\\
&=
\dfrac{
\left[
\prod\limits_{j<k+1}(\lambda_{n-k}-\lambda_{n-j+1})^2
\right]
\left[
\prod\limits_{j>k}(\lambda_{n-k+1}-\lambda_{n-j+1})^2
\right]
}{
\left[
\prod\limits_{j>k+1}(\lambda_{n-k}-\lambda_{n-j+1})^2
\right]
\left[
\prod\limits_{j<k}(\lambda_{n-k+1}-\lambda_{n-j+1})^2
\right]
}.
\end{align*}

We also observe that
\[ \exp\left(\sum_{i=1}^n \big(\lambda_{n-k}^i-\lambda_{n-k+1}^i\big)(t_i-\delta_{i1}t)\right)\exp\left(\sum_{i=1}^n \big(\lambda_{n-k+1}^i-\lambda_{n-k}^i\big)(t_i+\delta_{i1}t)\right)
\]
is equal to
\begin{align*}
\exp\left(\sum_{i=1}^n\big(\lambda_{n-k}^i-\lambda_{n-k+1}^i\big)([t_i-\delta_{1i}]-[t_i+\delta_{i1}t])\right)
&=\exp\left(\sum_{i=1}^n-2\delta_{i1}t\big(\lambda_{n-k}^i-\lambda_{n-k+1}^i\big)\right)\\
&= {\rm e}^{2(\lambda_{n-k+1}-\lambda_{n-k})t}.
\end{align*}

Putting all of the above together, one obtains
%
\begin{align*}
&b_k^-b_{n-k}^+{\rm e}^{2(\lambda_{n-k}-\lambda_{n-k+1})t}
= \dfrac{b_k(0)b_{n-k}(0)(\Delta_k\Delta_{n-k})^2}{\Delta_{k-1}\Delta_{k+1}\Delta_{n-k-1}\Delta_{n-k+1}}\\
& \qquad\quad{}\times \dfrac{
\left[
\prod\limits_{j<k+1}(\lambda_{n-k}-\lambda_{n-j+1})^2
\right]
\left[
\prod\limits_{j>k}(\lambda_{n-k+1}-\lambda_{n-j+1})^2
\right]
}{
\left[
\prod\limits_{j>k+1}(\lambda_{n-k}-\lambda_{n-j+1})^2
\right]
\left[
\prod\limits_{j<k}(\lambda_{n-k+1}-\lambda_{n-j+1})^2
\right]
}\\
&\qquad{}= \dfrac{
\left[
\prod\limits_{j<k+1}(\lambda_{n-k}-\lambda_{n-j+1})^2
\right]
\left[
\prod\limits_{j>k}(\lambda_{n-k+1}-\lambda_{n-j+1})^2
\right]
}{
\left[
\prod\limits_{j>k+1}(\lambda_{n-k}-\lambda_{n-j+1})^2
\right]
\left[
\prod\limits_{j<k}(\lambda_{n-k+1}-\lambda_{n-j+1})^2
\right]
},
\end{align*}
where the second equality holds by Lemma~\ref{bzerolemma}, which completes the proof of Theorem~\ref{classicaltodaphase}.

\subsection{The discrete-time Toda lattice phase shift (dToda)} \label{dTODA}
We now turn to proving the analogue of equation~\eqref{multctstimephshift} for the discrete-time Toda lattice. In doing this we again emphasize that $t$ will be discrete ($t \in \mathbb{Z}$) and that the eigenvalues of $X_0$ are taken to be positive and distinct.

\begin{thm}\label{discretetodaphase}
\begin{gather}
\lim_{t\to +\infty}b_{n-k}(t)b_k(-t) \left(\dfrac{\lambda_{n-k}}{\lambda_{n-k+1}}\right)^{2t}
= \dfrac{\prod\limits_{j<k+1} (\lambda_{n-k}-\lambda_{n-j+1} )^2\prod\limits_{j>k} (\lambda_{n-k+1}-\lambda_{n-j+1} )^2}
{\prod\limits_{j>k+1} (\lambda_{n-k}-\lambda_{n-j+1} )^2\prod\limits_{j<k} (\lambda_{n-k+1}-\lambda_{n-j+1} )^2}.\!\!\!\label{dTodaphaseshift}
\end{gather}
\end{thm}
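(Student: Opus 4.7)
The proof follows the same template as Theorem~\ref{classicaltodaphase}, with the continuous propagator ${\rm e}^{tX_0}$ replaced throughout by the discrete propagator $X_0^t$. The point is that the discrete-time Toda flow is generated by $\log X_0$ (equation~\eqref{luloglaxen}), so the spectral action ${\rm e}^{t\lambda_s}$ is replaced by $\lambda_s^t = {\rm e}^{t\log\lambda_s}$; the hypothesis $\lambda_1 > \cdots > \lambda_n > 0$ ensures $X_0^t$ and $\log X_0$ are well-defined via the principal branch of the logarithm. The Hirota representation~\eqref{btau} continues to hold verbatim with $\tau_n(t) := \langle X_0^t v^{(n)}, v^{(n)}\rangle$.

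Using the diagonalization $X_0^t = L_0^{-1} U D_\lambda^t U^{-1} L_0$, the chain of manipulations~\eqref{D1}--\eqref{D8} is carried through verbatim, and a final Cauchy--Binet expansion yields
\[
\tau_k(t) = \dfrac{1}{\Delta_k^n(b_0)} \sum_{S \in {[n]\choose k}} \Bigl(\prod_{s\in S}\lambda_s^t\Bigr) \exp\Bigl(\sum_{s\in S}\sum_{i=1}^{n-1}\lambda_s^i t_i\Bigr) \det(U_{[k],S}) \det(U^{-1}_{S,]k[}).
\]
Since $\lambda_1 > \cdots > \lambda_n > 0$, the subset $S=[k]$ strictly dominates $\tau_k(t)$ as $t \to +\infty$ and $S={}]k[$ strictly dominates as $t\to -\infty$; denote these dominant contributions $\tau_k^\pm$. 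Because the combinatorial prefactors produced by $\det(U_{[k],[k]})$ and the corresponding minors of $U^{-1}$ are independent of $t$ and identical to those in the continuous case, Lemma~\ref{toprightUminor} and Corollary~\ref{toprightUinverseminor} deliver exactly the same factors $\Upsilon_k^-$ and $\Upsilon_{n-k}^+$, hence the same ratio of squared eigenvalue differences that appears on the right-hand side of~\eqref{dTodaphaseshift}.

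The only genuine change is in the time-dependent factors. In the ratios $\tau_{n-k+1}^+\tau_{n-k-1}^+/(\tau_{n-k}^+)^2$ and $\tau_{k+1}^-\tau_{k-1}^-/(\tau_k^-)^2$, the telescoping of $\prod_{s\in S}\lambda_s^t$ replaces each continuous exponential ${\rm e}^{(\lambda_{n-k+1}-\lambda_{n-k})t}$ by the ratio $(\lambda_{n-k+1}/\lambda_{n-k})^t$; their product contributes $(\lambda_{n-k+1}/\lambda_{n-k})^{2t}$, which is precisely what the prefactor $(\lambda_{n-k}/\lambda_{n-k+1})^{2t}$ in~\eqref{dTodaphaseshift} is designed to cancel. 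The residual $b_k(0)$'s and $\Delta_j^n(b_0)$'s then cancel via Lemma~\ref{bzerolemma} exactly as in the continuous derivation, producing the right-hand side of~\eqref{dTodaphaseshift}.

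The main technical point, as in the continuous case, is to justify that the subdominant summands in the Cauchy--Binet expansion decay strictly faster than the dominant one despite the algebraic $\lambda_i$-dependence of the minors $\det(U_{[k],S})$ and $\det(U^{-1}_{S,]k[})$; this is controlled by the strict ordering $\lambda_1 > \cdots > \lambda_n > 0$, which forces the dominant geometric ratio $(\lambda_s/\lambda_{s'})^t$ for some $s < s'$ (or $s > s'$) to beat any bounded algebraic factor in the large-$|t|$ limit. Since the same ordering governs both the continuous and discrete cases, this estimate transfers from the earlier proof without modification, and no essentially new obstacle appears.
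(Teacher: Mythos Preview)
Your proposal is correct and follows essentially the same approach as the paper: replace the continuous propagator ${\rm e}^{tX_0}$ by $X_0^t$ (equivalently ${\rm e}^{tD_\lambda}$ by $D_\lambda^t$) in the $\tau$-function deconstruction, so that the time-dependent factor becomes $(\lambda_{n-k+1}/\lambda_{n-k})^{2t}$ while the eigenvalue-difference minors from Lemma~\ref{toprightUminor} and Corollary~\ref{toprightUinverseminor} and the cancellation via Lemma~\ref{bzerolemma} carry over unchanged. In fact your write-up is more detailed than the paper's own brief sketch, which simply asserts that ``everything is the same, except that we have $D_\lambda^t$ instead of ${\rm e}^{D_\lambda t}$.''
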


This essentially comes down to the analogous description of the subdiagonal entries in this setting:
\[ b_k(t) = \dfrac{\Delta_{k+1}^{k+1}\big(X_0^t\big)\Delta_{k-1}^{k-1}\big(X_0^t\big)}{\big(\Delta_{k}^{k}\big(X_0^t\big)\big)^2}b_k(0).\]

In the above, we recall that $\Delta_k^k$ is the $k\x k$ principal minor determinant of a matrix. When one takes this into our proof of Moser's continuous-time setting, the only difference is that the~${\rm e}^{t\epsilon_\lambda}$ is replaced with~$\epsilon_\lambda^t$.

When we compute these principal minor determinants, we still use~$U$ to diagonalise, we still have $[b_0\widehat{w}_0]_-$, everything is the same, except that we have $D_\lambda^t$ instead of ${\rm e}^{D_\lambda t}$.

\section{BBS phase shift}\label{bbsphasesec}
We use the basic properties of tropicalisation to tropicalise equation~\eqref{dTodaphaseshift}:
\begin{equation*}
\lim_{t\to +\infty}b_{n-k}(t)b_k(-t) \left(\dfrac{\lambda_{n-k}}{\lambda_{n-k+1}}\right)^{2t}
= \dfrac{\prod\limits_{j<k+1} (\lambda_{n-k}-\lambda_{n-j+1} )^2\prod\limits_{j>k} (\lambda_{n-k+1}-\lambda_{n-j+1} )^2}
{\prod\limits_{j>k+1} (\lambda_{n-k}-\lambda_{n-j+1} )^2\prod\limits_{j<k} (\lambda_{n-k+1}-\lambda_{n-j+1} )^2}.
\end{equation*}

We first recall that the phase space of dToda consists of tridiagonal Hessenberg matrices and we write a solution as
\begin{equation}
X(t) = \left[\begin{matrix}
a_1(t) & 1\\
b_1(t) & a_2(t) & \ddots\\
&\ddots & \ddots & 1\\
&&b_{n-1}(t) & a_n(t)
\end{matrix}\right].\label{tridiaghessenb}
\end{equation}
Thus, the $b$'s are the subdiagonal entries of a solution at a given time.

It is important to recall that Tokihiro's \cite{bib:tokihiro} tropicalisation is performed on the variables in the lower-upper factorisation description of dToda which we briefly recall from Sections~\ref{subsecsymesdnt} and~\ref{sec:udtoda}:
\[ R(t) = \left[\begin{matrix} I_1^t & 1\\ &I_2^t & \ddots \\ &&\ddots & 1\\ &&&I_n^t\end{matrix}\right],
\qquad
L(t) = \left[\begin{matrix} 1 \\ V_1^t & 1\\ &\ddots&\ddots\\ &&V_{n-1}^t & 1 \end{matrix}\right].
\]
This evolves under the factorisation dynamics
\[L(t+1)R(t+1) = R(t) L(t).\]

By setting
$X(t) = L(t)R(t)$,
one obtains a matrix of the form in~\eqref{tridiaghessenb}. A quick comparison reveals
$b_k(t) = V_k^tI_k^t$
for all $k\in\{1,2,\dots,n-1\}$.

By Tokihiro's calculation, $V_k^t$ corresponds to the $k$-th finite gap and~$I_k^t$ corresponds to the $k$-th block size, each at time~$t$. Therefore, the tropical analogue of $b_k(t)$ is
$W_k^t+Q_k^t$.

The other components in equation~\eqref{dTodaphaseshift} to deal with are the eigenvalues. In the limit as $t\to+\infty$, we know that the matrix in~\eqref{tridiaghessenb} is
\begin{equation*}
X(t) = \left[\begin{matrix}
\lambda_1 & 1\\
&\lambda_2 & \ddots \\
&&\ddots & 1\\
&&&\lambda_n
\end{matrix}\right].
\end{equation*}

We also know that $L(t)$ limits to the identity matrix, so we know that $\lim\limits_{t \to +\infty}I_k^t=\lambda_k$ for each~$k$. Since the tropicalisation of $I_k^t$ is $Q_k^t$ (the $k$-th block size at time~$t$), the tropical analogue of~$\lambda_k$ is
\[\lim_{t\to+\infty}Q_k^t = \mu_k,\]
where
$\mu_1<\mu_2<\cdots <\mu_n$
are the sorted asymptotic block sizes of the BBS in forwards time.

Recalling that $\lambda_1>\lambda_2>\cdots>\lambda_n > 0$, we compute the tropicalisation of $(\lambda_i-\lambda_j)^2$ for each pair $i\neq j$. For the sake of this calculation, assume $i<j$ so that $\lambda_i-\lambda_j>0$. To detropicalise this quantity, we let $\lambda_i={\rm e}^{-\frac{1}{\hbar}\mu_i}$ and $\lambda_j={\rm e}^{-\frac{1}{\hbar}\mu_j}$ and compute
\begin{align*}-\lim_{\hbar\to 0^+} \hbar\log\big(
\big(
{\rm e}^{-\frac{1}{\hbar}\mu_i}-{\rm e}^{-\frac{1}{\hbar}\mu_j}
\big)^2
\big)
=
-2\lim_{\hbar\to 0^+} \hbar\log\big(
{\rm e}^{-\frac{1}{\hbar}\mu_i}-{\rm e}^{-\frac{1}{\hbar}\mu_j}
\big)
=
2\min(\mu_i,\mu_j)= 2\mu_i,
\end{align*}
where the last equality follows from a computation using l'H\^{o}pital's rule.

Using the property that tropicalisation sends products to sums and quotients to differences, the tropical analogue of equation~\eqref{dTodaphaseshift} is given by

\begin{thm}\label{maintheorembbsphaseshiftlim}
\begin{gather*}
 \lim_{t\to +\infty}\big(
W_{n-k}^t +Q_{n-k}^t +W_{k}^{-t} +Q_{k}^{-t} + 2t(\mu_{n-k}-\mu_{n-k+1})
\big)\nonumber\\
\qquad{} =\sum_{j<k+1}2\mu_{n-k}+\sum_{j>k}2\mu_{n-j+1}-\sum_{j>k+1}2\mu_{n-j+1} -\sum_{j<k}2\mu_{n-k+1}.
\end{gather*}
\end{thm}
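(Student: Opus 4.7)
The plan is to apply Maslov tropicalisation term-by-term to the dToda phase shift identity~\eqref{dTodaphaseshift}, translating each factor into its ultradiscrete BBS counterpart via the dictionary established in Section~\ref{sec:udtoda}. The essential ingredients are the substitution~\eqref{tropvarsforbbstoda}, $I_i^t={\rm e}^{-Q_i^t/\hbar}$ and $V_i^t={\rm e}^{-W_i^t/\hbar}$, together with the factorisation $b_k(t)=V_k^t I_k^t$ and the fact that, by dToda isospectrality and the forward-time limit $X(t)\to\epsilon_\lambda$, the diagonal entries $I_k^t$ tend to $\lambda_k$, so $-\hbar\log\lambda_k\to\mu_k$. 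The reversed ordering $\lambda_1>\dots>\lambda_n>0$ corresponds under detropicalisation to $\mu_1<\dots<\mu_n$.

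First I would tropicalise the left-hand side of~\eqref{dTodaphaseshift}. Taking $-\hbar\log$ of the product $b_{n-k}(t)b_k(-t)(\lambda_{n-k}/\lambda_{n-k+1})^{2t}$ converts it, in the $\hbar\to 0^+$ limit, into the sum $(W_{n-k}^t+Q_{n-k}^t)+(W_k^{-t}+Q_k^{-t})+2t(\mu_{n-k}-\mu_{n-k+1})$, which is precisely the quantity inside the outer $t\to+\infty$ limit in the theorem statement.

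Next, for each eigenvalue difference $(\lambda_i-\lambda_j)^2$ appearing on the right-hand side of~\eqref{dTodaphaseshift}, the l'H\^opital computation sketched in the paragraph preceding the theorem gives
\begin{equation*}
-\lim_{\hbar\to 0^+}\hbar\log(\lambda_i-\lambda_j)^2 = 2\min(\mu_i,\mu_j).
\end{equation*}
Because $\mu_1<\dots<\mu_n$, the four minima can be resolved explicitly: in $\prod_{j<k+1}(\lambda_{n-k}-\lambda_{n-j+1})^2$ every index satisfies $n-j+1\geq n-k+1>n-k$, forcing the min to equal $\mu_{n-k}$; in $\prod_{j>k}(\lambda_{n-k+1}-\lambda_{n-j+1})^2$ one has $n-j+1<n-k+1$ so the minimum is $\mu_{n-j+1}$; the two denominator products behave analogously, contributing $\mu_{n-j+1}$ (for $j>k+1$) and $\mu_{n-k+1}$ (for $j<k$) respectively. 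Products and quotients become sums and differences under tropicalisation, and the resulting four signed sums match the right-hand side of Theorem~\ref{maintheorembbsphaseshiftlim} term-by-term.

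The main obstacle, and really the only nontrivial step, is justifying the commutation of the two limits $\lim_{t\to+\infty}$ and $\lim_{\hbar\to 0^+}$. This is exactly the issue addressed in the paragraph on the Cauchy--Binet expansion of $\tau_k(t)$ in Section~\ref{todaphaseshiftsec}: in both limit orderings, a single subset $S$ (namely $[k]$ for $t\to+\infty$ and $\,]k[\,$ for $t\to-\infty$) dominates the sum, with subdominant contributions controlled by factors that are at worst algebraic in the $\lambda_i$ but exponentially suppressed by the time variable. That argument, together with the existence of the $\hbar\to 0^+$ limits assumed in equations \eqref{tropvarsforbbstoda}--\eqref{wnoughtnboundaries}, licences the interchange and delivers the theorem.
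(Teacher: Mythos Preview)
Your proposal is correct and follows essentially the same route as the paper: tropicalise the dToda phase shift formula~\eqref{dTodaphaseshift} term-by-term using $b_k(t)=V_k^tI_k^t\rightsquigarrow W_k^t+Q_k^t$, $\lambda_k\rightsquigarrow\mu_k$, and $(\lambda_i-\lambda_j)^2\rightsquigarrow 2\min(\mu_i,\mu_j)$, then invoke the Cauchy--Binet dominance argument from Section~\ref{todaphaseshiftsec} to justify interchanging the $t\to+\infty$ and $\hbar\to 0^+$ limits. Your resolution of the four minima is slightly more explicit than the paper's, but the argument is the same.
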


\begin{rem}
Once the blocks of balls have become ordered according to their asymptotic lengths, which occurs in a finite time frame, all blocks travel freely with speeds given by their lengths. This means that $Q_k^t=\mu_k$ for each~$k$ and for all time after the sorting. Similarly, in backwards time, the blocks will become sorted in a finite time frame from largest to smallest, so that $Q_k^{-t}=\mu_{n-k+1}$ for each~$k$ and for all sufficiently large time. Taking $t$ large enough so that the time~$t$ and time $-t$ states are both ordered appropriately, we see that
\begin{gather*}
 W_{n-k}^{t+1} = W_{n-k}^t+(\mu_{n-k+1}-\mu_{n-k}),\\
 Q_{n-k}^{t+1} = Q_{n-k}^t=\mu_{n-k},\\
 W_{k}^{-(t+1)} = W_{k}^{-t}+(\mu_{n-k+1}-\mu_{n-k}),\\
 Q_{k}^{-(t+1)} = Q_{k}^{-t}=\mu_{n-k+1}.
\end{gather*}
Consequently,
\begin{gather*}
 W_{n-k}^{t+1} +Q_{n-k}^{t+1} +W_{k}^{-(t+1)} +Q_{k}^{-(t+1)} + 2(t+1)(\mu_{n-k}-\mu_{n-k+1})\\
\qquad{} = W_{n-k}^t+(\mu_{n-k+1}-\mu_{n-k}) +Q_{n-k}^t +W_{k}^{-t} +(\mu_{n-k+1}-\mu_{n-k})+Q_{k}^{-t} \\
\qquad\quad{} + 2(t+1)(\mu_{n-k}-\mu_{n-k+1})\\
\qquad{} =W_{n-k}^t +Q_{n-k}^t +W_{k}^{-t} +Q_{k}^{-t} + 2t(\mu_{n-k}-\mu_{n-k+1}).
\end{gather*}
Therefore, the quantity in the limit described in Theorem~\ref{maintheorembbsphaseshiftlim} becomes constant in finite time.
\end{rem}

Because the limit stabilises in finite time, Theorem~\ref{maintheorembbsphaseshiftlim} is equivalent to the statement that for all sufficiently large $t$, the following holds:
\begin{gather*}
 W_{n-k}^t + \mu_{n-k} +W_{k}^{-t} + \mu_{n-k+1} + 2t(\mu_{n-k}-\mu_{n-k+1})\nonumber\\
\qquad{} =\sum_{j<k+1}2\mu_{n-k}+\sum_{j>k}2\mu_{n-j+1}-\sum_{j>k+1}2\mu_{n-j+1} -\sum_{j<k}2\mu_{n-k+1}.
\end{gather*}

Or, equivalently
\begin{gather}
 W_{n-k}^t + W_{k}^{-t} + 2t(\mu_{n-k}-\mu_{n-k+1})\nonumber\\
 \quad{}=\sum_{j<k+1}2\mu_{n-k}+\sum_{j>k}2\mu_{n-j+1}-\sum_{j>k+1}2\mu_{n-j+1}-\sum_{j<k}2\mu_{n-k+1} - (\mu_{n-k}+\mu_{n-k+1}).\label{actualphaseshiftfromdtoda}
\end{gather}

\subsection{Free dynamics versus the phase shift}
To see the role phase shifts play in the tropical setting, we will initialise an $n$-soliton box-ball system sufficiently far back in time so that the blocks are arranged from largest to smallest.

We consider a special dynamical evolution of this state gotten by treating each block as a~single, immutable object, travelling with velocity given by its size. If we allow multiple balls to occupy the same box at any given time, the blocks will eventually pass through one another without experiencing a phase shift and order themselves from smallest to largest. The point here is that the intermediate stages may not correspond to box-ball states, but, in sufficiently large backwards and forwards time, boxes will contain at most one ball, resulting in a box-ball state.\looseness=-1

By definition of the phase-shift, the resulting far enough forward time state for this special dynamics will differ from the corresponding (same number of time-steps) box-ball system by the phase shift of each block.

Let
$\mu_1<\mu_2<\cdots<\mu_n$
be the ordered block sizes after sufficiently many time evolutions. In the far past the block sizes are initially ordered as
\[
Q_1^{-t} = \mu_n<Q_2^{-t} = \mu_{n-1} < \cdots <Q_n^{-t} =\mu_1.
\]

We temporarily introduce some variables by indexing all of the boxes in the system by the integers~$\Z$ and by letting~$s_k$ be equal to the index of the first ball of the $\mu_k$ block at time~$-t$.

From the initial first ball positions and the blocks sizes, one can determine the values of $W_k^{-t}$ for each $k$:
\[ W_k^{-t} = s_{n-k}-s_{n-k+1}-\mu_{n-k+1}.\]

In the figure below, we depict how the $s_k$'s are set up for this initialised backward time configuration:
\begin{figure}[h!]
\centering
\tikz[scale=0.52]{
\foreach \x in {0,4,6,10,11,13,14,18}
{\draw[fill=white] (\x,3) -- (\x+1,3) -- (\x+1,4) -- (\x,4) -- cycle;			
}
\foreach \x in {1,3,7,9,15,17}
{\draw[fill=white] (\x,3) -- (\x+1,3) -- (\x+1,4) -- (\x,4) -- cycle;			
\fill[cyan] (\x+0.5,3.5) circle (0.25);
}
\foreach \x in {2,5,8,12,16}
{\draw[fill=white] (\x,3) -- (\x+1,3) -- (\x+1,4) -- (\x,4) -- cycle;			
\node at (\x+0.5,3.5) {$\cdots$};
}
\foreach \x in {19}
{\draw[fill=white,white] (\x,3) -- (\x+2,3) -- (\x+2,4) -- (\x,4) -- cycle;
\draw[-] (\x,3) -- (\x,4);
\draw[-] (\x,3) -- (\x+2,3);
\draw[-] (\x,4) -- (\x+2,4);
\draw[-] (\x+1,3) -- (\x+1,4);
\node at (\x+1.5,3.5) {$\cdots$};
}
\foreach \x in {0}
{\draw[fill=white,white] (\x,3) -- (\x-2,3) -- (\x-2,4) -- (\x,4) -- cycle;
\draw[-] (\x,3) -- (\x,4);
\draw[-] (\x,3) -- (\x-2,3);
\draw[-] (\x,4) -- (\x-2,4);
\draw[-] (\x-1,3) -- (\x-1,4);
\node at (\x-1.5,3.5) {$\cdots$};
}
\draw [decorate,decoration={brace,amplitude=4pt}] (0.9,2.85) -- (-1.9,2.85) node [black,midway,yshift=-0.4cm] {\scriptsize{$W_0^{-t}=\infty$}};
\draw [decorate,decoration={brace,amplitude=4pt}] (3.9,2.85) -- (1.1,2.85) node [black,midway,yshift=-0.4cm] {\scriptsize{$Q_1^{-t}=\mu_n$}};
\draw [decorate,decoration={brace,amplitude=4pt}] (6.9,2.85) -- (4.1,2.85) node [black,midway,yshift=-0.4cm] {\scriptsize{$W_1^{-t}$}};
\draw [decorate,decoration={brace,amplitude=4pt}] (9.9,2.85) -- (7.1,2.85) node [black,midway,yshift=-0.4cm] {\scriptsize{$Q_2^{-t}=\mu_{n-1}$}};
\draw [decorate,decoration={brace,amplitude=4pt}] (17.9,2.85) -- (15.1,2.85) node [black,midway,yshift=-0.4cm] {\scriptsize{$Q_n^{-t}=\mu_{1}$}};
\draw [decorate,decoration={brace,amplitude=4pt}] (20.9,2.85) -- (18.1,2.85) node [black,midway,yshift=-0.4cm] {\scriptsize{$W_n^{-t}=\infty$}};
\node at (1.5,4.3) {\small{$s_n$}};
\node at (7.5,4.3) {\small{$s_{n-1}$}};
\node at (15.5,4.3) {\small{$s_1$}};
}
\end{figure}

Recall how this compares to the large positive $t$ picture, in which the blocks become arranged in the opposite ordering:

\begin{figure}[h!]
\centering
\tikz[scale=0.52]{
\foreach \x in {0,4,6,10,11,13,14,18}
{\draw[fill=white] (\x,3) -- (\x+1,3) -- (\x+1,4) -- (\x,4) -- cycle;			
}
\foreach \x in {1,3,7,9,15,17}
{\draw[fill=white] (\x,3) -- (\x+1,3) -- (\x+1,4) -- (\x,4) -- cycle;			
\fill[cyan] (\x+0.5,3.5) circle (0.25);
}
\foreach \x in {2,5,8,12,16}
{\draw[fill=white] (\x,3) -- (\x+1,3) -- (\x+1,4) -- (\x,4) -- cycle;			
\node at (\x+0.5,3.5) {$\cdots$};
}
\foreach \x in {19}
{\draw[fill=white,white] (\x,3) -- (\x+2,3) -- (\x+2,4) -- (\x,4) -- cycle;
\draw[-] (\x,3) -- (\x,4);
\draw[-] (\x,3) -- (\x+2,3);
\draw[-] (\x,4) -- (\x+2,4);
\draw[-] (\x+1,3) -- (\x+1,4);
\node at (\x+1.5,3.5) {$\cdots$};
}
\foreach \x in {0}
{\draw[fill=white,white] (\x,3) -- (\x-2,3) -- (\x-2,4) -- (\x,4) -- cycle;
\draw[-] (\x,3) -- (\x,4);
\draw[-] (\x,3) -- (\x-2,3);
\draw[-] (\x,4) -- (\x-2,4);
\draw[-] (\x-1,3) -- (\x-1,4);
\node at (\x-1.5,3.5) {$\cdots$};
}
\draw [decorate,decoration={brace,amplitude=4pt}] (0.9,2.85) -- (-1.9,2.85) node [black,midway,yshift=-0.4cm] {\scriptsize{$W_0^{t}=\infty$}};
\draw [decorate,decoration={brace,amplitude=4pt}] (3.9,2.85) -- (1.1,2.85) node [black,midway,yshift=-0.4cm] {\scriptsize{$Q_1^{t}=\mu_1$}};
\draw [decorate,decoration={brace,amplitude=4pt}] (6.9,2.85) -- (4.1,2.85) node [black,midway,yshift=-0.4cm] {\scriptsize{$W_1^{t}$}};
\draw [decorate,decoration={brace,amplitude=4pt}] (9.9,2.85) -- (7.1,2.85) node [black,midway,yshift=-0.4cm] {\scriptsize{$Q_2^{t}=\mu_{2}$}};
\draw [decorate,decoration={brace,amplitude=4pt}] (17.9,2.85) -- (15.1,2.85) node [black,midway,yshift=-0.4cm] {\scriptsize{$Q_n^{t}=\mu_{n}$}};
\draw [decorate,decoration={brace,amplitude=4pt}] (20.9,2.85) -- (18.1,2.85) node [black,midway,yshift=-0.4cm] {\scriptsize{$W_n^{t}=\infty$}};
}
\end{figure}

The following can be done without this introduction, but it does make the calculations a~little simpler to follow. Under this free dynamics, the $\mu_k$-block travels with velocity~$\mu_k$. Thus, if~$j$ is sufficiently large, the blocks will be ordered with the first ball of the $\mu_k$-block occupying position
$ s_k+j\mu_k$.
If we let $\Phi_k$ denote the phase shift experienced in the true BBS dynamics by the $k$-block, one has that the position of the first ball of the $\mu_k$ block is given by
$s_k+j\mu_k + \Phi_k$.
As before, and for $j$ sufficiently large, we can use these positions and the block sizes to determine the values $W_k^{-t+j}$ for each $k$:
\begin{align*}
 W_k^{-t+j} &= ( s_{k+1}+j\mu_{k+1} + \Phi_{k+1}) - (s_k+j\mu_k + \Phi_k) - \mu_k\\
 &= s_{k+1}-s_k +j(\mu_{k+1}-\mu_k) + \Phi_{k+1}-\Phi_k -\mu_k.
\end{align*}
Assuming that $t$ was large enough for both the time $-t$ and time $t$ states to be ordered appropriately, substituting in $j=2t$ and looking at the above for $n-k$, one has
\[ W_{n-k}^{t} = s_{n-k+1}-s_{n-k} +2t(\mu_{n-k+1}-\mu_{n-k}) + \Phi_{n-k+1}-\Phi_{n-k} -\mu_{n-k}.\]
Thus, for $t$ sufficiently large, one has
\begin{equation}
W_{n-k}^t + W_k^{-t} + 2t(\mu_{n-k}+\mu_{n-k+1}) = \Phi_{n-k+1}-\Phi_{n-k}-(\mu_{n-k}+\mu_{n-k+1})\label{phaseshiftinws}
\end{equation}

Comparing equations~\eqref{actualphaseshiftfromdtoda} and~\eqref{phaseshiftinws}, we now see that we have
\begin{equation}
\Phi_{n-k+1}-\Phi_{n-k} = \sum_{j<k+1}2\mu_{n-k}+\sum_{j>k}2\mu_{n-j+1}-\sum_{j>k+1}2\mu_{n-j+1} -\sum_{j<k}2\mu_{n-k+1}\label{telescopingphaseshifts}
\end{equation}
for $k=1, 2, \dots, n-1$. This gives us $n-1$ equations for the $n$ phase shifts, $\Phi_i$ for $i=1,\dots,n$.

\subsection{Conservation laws}
\begin{lem}\label{lemma:sumsitesgrowsbyM}
Suppose one initialises an $n$-soliton state at time $t=0$ with coordinates
\[ \big(\infty,Q_1^0,W_1^0,Q_2^0,W_2^0,\dots,Q_n^0,\infty\big).\]
Define $\ell_i^0$ for $i\in\{1,\dots,n\}$ as follows
\[ \ell_i^0 = Q_i^0 +\sum_{j=1}^{i-1}\big(Q_j^0+W_j^0\big).\]
If at time $t=1$, the system has coordinates
\[\big(\infty,Q_1^1,W_1^1,Q_2^1,W_2^1,\dots,Q_n^1,\infty\big),\]
define $\ell_i^1$ for $i\in\{1,\dots,n\}$ as follows
\[ \ell_i^1 = Q_1^0 + Q_i^1 +\sum_{j=1}^{i-1}\big(Q_j^1+W_j^1\big).\]
Then one has
\begin{equation}
\sum_{i=1}^n \ell_i^1 = M + \sum_{i=1}^n \ell_i^0,\label{sumlastpositionsM}
\end{equation}
where $M$ is the total mass of the system, i.e.,
\[M = \sum_{i=1}^n \mu_i = \sum_{i=1}^n Q_i^t,\]
where the latter equality is independent of the choice of $t$.
\end{lem}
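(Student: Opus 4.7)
The plan is to prove this by direct computation, exploiting the coordinate evolution equations in Theorem~\ref{thmbbscoords}. First I would rewrite the first equation of~\eqref{Witplusoneeqnbbs} in the equivalent form
\[
Q_i^{t+1}+W_i^{t+1} \;=\; Q_{i+1}^t+W_i^t, \qquad i=1,\dots,n-1,
\]
which expresses the fact that the ``distance from the start of block $i$ to the start of block $i+1$'' is a quantity that shifts by one index under a unit time step. This is the only piece of dynamical input needed; the rest is bookkeeping.

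Next I would compute the per-block increment $\ell_i^1-\ell_i^0$ directly from the definitions. Subtracting term by term gives
\[
\ell_i^1-\ell_i^0 \;=\; Q_1^0 + \big(Q_i^1-Q_i^0\big) + \sum_{j=1}^{i-1}\Big[\big(Q_j^1+W_j^1\big)-\big(Q_j^0+W_j^0\big)\Big].
\]
Substituting the identity above into each bracketed summand replaces it with $Q_{j+1}^0-Q_j^0$, and the remaining sum telescopes:
\[
\sum_{j=1}^{i-1}\big(Q_{j+1}^0-Q_j^0\big) \;=\; Q_i^0 - Q_1^0.
\]
The $Q_i^0$ terms and the $Q_1^0$ terms cancel, leaving the strikingly clean relation $\ell_i^1-\ell_i^0 = Q_i^1$.

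Finally, summing this identity over $i=1,\dots,n$ yields
\[
\sum_{i=1}^n \ell_i^1 - \sum_{i=1}^n \ell_i^0 \;=\; \sum_{i=1}^n Q_i^1 \;=\; M,
\]
which is~\eqref{sumlastpositionsM}. The fact that $\sum_i Q_i^t$ is independent of $t$ (asserted at the end of the lemma) is itself a consequence of the recursion in Theorem~\ref{thmbbscoords}, or simply of ball conservation under the BBS evolution, and so it can be invoked freely. There is no genuine obstacle here; the only subtle bookkeeping is to notice that the additional $Q_1^0$ built into the definition of $\ell_i^1$ is precisely what cancels the $-Q_1^0$ left over by the telescoping sum, which in turn reflects that the leading ball of block $1$ has advanced by $Q_1^0$ positions after one time step.
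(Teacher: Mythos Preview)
Your proof is correct and follows essentially the same approach as the paper: both rely solely on the rearranged evolution identity $Q_j^{1}+W_j^{1}=Q_{j+1}^{0}+W_j^{0}$ from Theorem~\ref{thmbbscoords} and then telescope. The only cosmetic difference is that you first isolate the clean per-block relation $\ell_i^{1}-\ell_i^{0}=Q_i^{1}$ and then sum, whereas the paper performs the substitution and telescoping directly inside $\sum_i \ell_i^{1}$; the underlying computation is identical.
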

\begin{proof}
Recall the evolution rule for the $W$ coordinates from the box-ball coordinate dy\-namics~(\ref{Witplusoneeqnbbs}):
\[ W_j^{t+1} = Q_{j+1}^t + W_j^t - Q_j^{t+1}\]
for each $i\in\{1,\dots,n\}$. Rearranging this and taking $t=0$, one has
\[W_j^1+Q_j^1 = Q_{j+1}^0+W_j^0.\]
Therefore,
\begin{align*}
\sum_{i=1}^n \ell_i^1
&= \sum_{i=1}^n \left(Q_1^0 + Q_i^1 +\sum_{j=1}^{i-1}\big(Q_j^1+W_j^1\big)\right)
 = \sum_{i=1}^n Q_i^1 + \sum_{i=1}^n \left(Q_1^0 + \sum_{j=1}^{i-1}\big(Q_{j+1}^0+W_j^0\big)\right)\\
&= M + \sum_{i=1}^n \left(Q_i^0 + \sum_{j=1}^{i-1}\big(Q_j^0+W_j^0\big)\right)
 = M + \sum_{i=1}^n \ell_i^0.\tag*{\qed}
\end{align*}\renewcommand{\qed}{}
\end{proof}

\begin{rem}\label{constgrowsthtotoalmasssum}
The significance of the $\ell_i$ quantities is realised by labelling all boxes of the initial box-ball state by $\Z$ such that the left-most ball in the time $t=0$ state is in box 1. In this configuration, the right-most ball of the first block occupies box $\ell_1^0 = Q_1^0$. The right-most ball of the second block occupies box $\ell_2^0 = Q_1^0+W_1^0 + Q_2^0$. In general, the right-most ball of the $i$-th block occupies the $\ell_i^0$-th box. So, the quantity $\sum\limits_{i=1}^n \ell_i^0$ records the sum of the labels of the right-most balls of the block.

We provide the following two-time step illustration of the $\ell_i^t$ assignment for an particular BBS configuration:

\begin{figure}[h!]
\centering
\tikz[scale=0.57]{
\foreach \y in {0}{
\node at (-2.7,\y+0.5) {$t=0:$};
\foreach \x in {1,2,3,4,5,6,7,8,9,10,11,12,13,14,15}{
\node at (\x+0.5,\y+1.25) {\small{\x}};
}
\node at (3+0.5,\y-0.4){\small{$\ell_1^0$}};
\node at (7+0.5,\y-0.4){\small{$\ell_2^0$}};
\node at (10+0.5,\y-0.4){\small{$\ell_3^0$}};
\foreach \x in {4,5,8,11,12,13,14}
{\draw[fill=white] (\x,\y) -- (\x+1,\y) -- (\x+1,\y+1) -- (\x,\y+1) -- cycle;			
}
\foreach \x in {1,2,3,6,7,9,10}
{\draw[fill=white] (\x,\y) -- (\x+1,\y) -- (\x+1,\y+1) -- (\x,\y+1) -- cycle;			
\fill[cyan] (\x+0.5,\y+0.5) circle (0.25);
}
\foreach \x in {15}
{\draw[fill=white,white] (\x,\y) -- (\x+2,\y) -- (\x+2,\y+1) -- (\x,\y+1) -- cycle;
\draw[-] (\x,\y) -- (\x,\y+1);
\draw[-] (\x,\y) -- (\x+2,\y);
\draw[-] (\x,\y+1) -- (\x+2,\y+1);
\draw[-] (\x+1,\y) -- (\x+1,\y+1);
\node at (\x+1.5,\y+0.5) {$\cdots$};
}
\foreach \x in {1}
{\draw[fill=white,white] (\x,\y) -- (\x-2,\y) -- (\x-2,\y+1) -- (\x,\y+1) -- cycle;
\draw[-] (\x,\y) -- (\x,\y+1);
\draw[-] (\x,\y) -- (\x-2,\y);
\draw[-] (\x,\y+1) -- (\x-2,\y+1);
\draw[-] (\x-1,\y) -- (\x-1,\y+1);
\node at (\x-1.5,\y+0.5) {$\cdots$};
}
}
\foreach \y in {-2.7}{
\node at (-2.7,\y+0.5) {$t=1:$};
\foreach \x in {1,2,3,4,5,6,7,8,9,10,11,12,13,14,15}{
\node at (\x+0.5,\y+1.25) {\small{\x}};
}
\foreach \x in {1,2,3,6,7,9,10}
{\draw[fill=white] (\x,\y) -- (\x+1,\y) -- (\x+1,\y+1) -- (\x,\y+1) -- cycle;			
}
\foreach \x in {4,5,8,11,12,13,14}
{\draw[fill=white] (\x,\y) -- (\x+1,\y) -- (\x+1,\y+1) -- (\x,\y+1) -- cycle;			
\fill[cyan] (\x+0.5,\y+0.5) circle (0.25);
}
\foreach \x in {15}
{\draw[fill=white,white] (\x,\y) -- (\x+2,\y) -- (\x+2,\y+1) -- (\x,\y+1) -- cycle;
\draw[-] (\x,\y) -- (\x,\y+1);
\draw[-] (\x,\y) -- (\x+2,\y);
\draw[-] (\x,\y+1) -- (\x+2,\y+1);
\draw[-] (\x+1,\y) -- (\x+1,\y+1);
\node at (\x+1.5,\y+0.5) {$\cdots$};
}
\foreach \x in {1}
{\draw[fill=white,white] (\x,\y) -- (\x-2,\y) -- (\x-2,\y+1) -- (\x,\y+1) -- cycle;
\draw[-] (\x,\y) -- (\x,\y+1);
\draw[-] (\x,\y) -- (\x-2,\y);
\draw[-] (\x,\y+1) -- (\x-2,\y+1);
\draw[-] (\x-1,\y) -- (\x-1,\y+1);
\node at (\x-1.5,\y+0.5) {$\cdots$};
}
\node at (5+0.5,\y-0.4){\small{$\ell_1^1$}};
\node at (8+0.5,\y-0.4){\small{$\ell_2^1$}};
\node at (14+0.5,\y-0.4){\small{$\ell_3^1$}};
}
}
\end{figure}

Observe not only what the $\ell_i$'s are, but also how the sum of the $\ell_i$'s increases by $M=7$ from $t=0$ to $t=1$.

By keeping the above labelling and evolving the dynamics to time $t=1$, the same argument for the labels of the right-most balls of the blocks carries through, but everything is shifted to the right by~$Q_1^0$ since the left-most ball of the system now occupies box~$Q_1^0$. Therefore, equation~\eqref{sumlastpositionsM} says that the sum of the positions of the last ball of each block increases by the total mass of the system from time $t=0$ to time $t=1$. Of course, one could initialise at any time and the total mass is independent of time, so this extends to the statement that the sum of the positions of the last ball of each block at any time increases by $M\cdot j$ after~$j$ time steps, given an initial indexing of the boxes.
\end{rem}

\begin{rem}
We note that this ultradiscrete conservation law corresponds to the conservation at both the continuous and discrete time levels described in Section \ref{sec:centreofmass}.
\end{rem}

\begin{lem}\label{sumphaseshiftsiszeo}
The sum of all of the phase shifts is zero $\sum_{i=1}^n \Phi_i = 0$.
\end{lem}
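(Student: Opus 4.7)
The plan is to exploit the conservation law established in Lemma~\ref{lemma:sumsitesgrowsbyM}, which states that the sum of the positions of the rightmost balls across all blocks grows by $M = \sum_{i=1}^n \mu_i$ per time step of the BBS evolution. The same growth law holds trivially for the free dynamics introduced in the previous subsection: since each block translates rigidly by its own length $\mu_k$ per time step, the sum of its rightmost ball positions (equivalently, first-ball positions, which differ from the rightmost positions by the constant $M - n$) also increases by $\sum_k \mu_k = M$ per step. So the ``total first-ball position'' is a shared conserved-up-to-linear-drift quantity for both dynamics, with the same drift rate.

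First I would initialize the system at a sufficiently large negative time $t = -T$ so that the blocks are completely sorted from largest to smallest, with the $\mu_k$-block having its first ball at position $s_k$, as in the setup preceding equation~\eqref{phaseshiftinws}. Because this configuration is widely separated, no collisions have yet occurred and no collisions will occur under the free evolution going forward from here in either time direction from the past; in particular, the actual BBS and the free dynamics are in identical configurations at time $-T$, so their sums of first-ball positions coincide.

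Next I would evolve both dynamics by $2T$ time steps, choosing $T$ large enough that the actual BBS has completed its forward sorting into blocks of sizes $\mu_1 < \mu_2 < \cdots < \mu_n$ from left to right. By the two conservation laws, both the BBS total and the free-dynamics total at time $+T$ equal their common value at time $-T$ plus $2TM$. On the other hand, by the very definition of the phase shifts $\Phi_k$, under the free evolution the $\mu_k$-block has its first ball at $s_k + 2T\mu_k$, whereas under the actual BBS dynamics it sits at $s_k + 2T\mu_k + \Phi_k$. Summing the first-ball positions over $k$ under each dynamics and subtracting forces $\sum_{k=1}^n \Phi_k = 0$.

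The only step requiring care is verifying that the conservation law of Lemma~\ref{lemma:sumsitesgrowsbyM} and the analogous statement for the free dynamics both hold with the \emph{same} additive constant $M$ per time step; once that parity is in hand, the cancellation is automatic. There is no serious obstacle here — this lemma is effectively the statement that a conservation of ``total momentum'' persists through ultradiscretization, descending from the Casimir/trace conservation of the (discrete-time) Toda lattice discussed in Section~\ref{sec:centreofmass}.
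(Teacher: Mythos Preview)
Your proposal is correct and follows essentially the same argument as the paper: both compare the actual BBS evolution to the free dynamics over a long time window, invoke Lemma~\ref{lemma:sumsitesgrowsbyM} (and its trivial analogue for the free dynamics) to see that the total block-position sum grows by $M$ per step in each case, and conclude that the discrepancy $\sum_k \Phi_k$ must vanish. The only cosmetic difference is that you track first-ball positions $s_k$ while the paper tracks last-ball positions $\ell_i$, which differ by a fixed constant and so make no difference to the argument.
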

\begin{proof}
Suppose $t$ is taken large enough so that at time $-t$ the BBS has achieved its asymptotic block structure with blocks ordered from largest to smallest, and at time $t$ the BBS has achieved its asymptotic block structure with blocks ordered from smallest to largest.

At time $-t$, suppose the boxes of the BBS are indexed by $\Z$, with $\ell_i$ denoting the label of the last ball of the $\mu_i$ block. Until the collision, the label of the last ball in the $\mu_i$-block increases by $\mu_i$ at each stage. Therefore, the sum of the labels of each last ball of the blocks increases by
$\sum_{i=1}^n \mu_i = M$
at each time step. If we let $\hat{\ell}_i$ denote the label of the last ball in the $\mu_i$-block at time $t$ in this phase shift-free situation, then it would follow that
$\hat{\ell}_i = \ell_i + 2t\mu_i$.
Thus,
\begin{equation}\sum_{i=1}^n \hat{\ell}_i = 2tM + \sum_{i=1}^n\ell_i.\label{phasefreesumposition}\end{equation}
By definition of the phase shifts, the true label for the last ball of the $\mu_i$ block is given by
$\hat{\ell}_i +\Phi_i$.
By Remark \ref{constgrowsthtotoalmasssum}, these labels satisfy
\begin{equation}\sum_{i=1}^n \big(\hat{\ell}_i +\Phi_i\big) = 2tM + \sum_{i=1}^n \ell_i.\label{phaseshiftpositionsmass}\end{equation}
Comparing \eqref{phasefreesumposition} and \eqref{phaseshiftpositionsmass}, we see that
$\sum_{i=1}^n \Phi_i = 0$.
\end{proof}

As a result of equation \eqref{telescopingphaseshifts} and Lemma~\ref{sumphaseshiftsiszeo}, we now have that the $n$ phase shifts satisfy the following system of $n$ linear equations:
\begin{equation}
\begin{cases}
\displaystyle \Phi_{n-k+1}-\Phi_{n-k} = \sum_{j<k+1}2\mu_{n-k}+\sum_{j>k}2\mu_{n-j+1}-\sum_{j>k+1}2\mu_{n-j+1} -\sum_{j<k}2\mu_{n-k+1},\\
\displaystyle \sum\limits_{i=1}^n \Phi_i = 0,
\end{cases}\label{systemofequationsforphaseshifts}
\end{equation}
where the first equation is for each $k\in \{1,\dots,n-1\}$.

\begin{lem}
Take the $n\x n$ matrix
\[\left[\begin{matrix}
-1 & 1 \\
&-1&1\\
&&\ddots & \ddots \\
&&&-1&1\\
1&1&\cdots & 1 & 1
\end{matrix}\right],
\]
which nest from the bottom-right $($so that for $n=1$, this is just~$[1])$. This matrix is non-singular.
\end{lem}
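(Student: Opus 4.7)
The plan is to prove nonsingularity by the cleanest route: showing that the kernel is trivial. I would set up the homogeneous system obtained by multiplying the matrix in the statement by a column vector $x = (x_1, \dots, x_n)^T$ and setting the result equal to zero. The first $n-1$ rows each yield an equation of the form $-x_i + x_{i+1} = 0$, which immediately forces $x_1 = x_2 = \cdots = x_n$. Substituting this common value into the last row's equation $x_1 + x_2 + \cdots + x_n = 0$ gives $n x_1 = 0$, hence $x = 0$. Since the kernel is trivial, the matrix is non-singular.

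As an alternative that also yields the exact determinant (which will be useful for explicitly solving \eqref{systemofequationsforphaseshifts} for the phase shifts $\Phi_i$), I would compute $D_n$, the determinant of the $n \times n$ matrix in the statement, by cofactor expansion along the first column. That column has only two nonzero entries, namely $-1$ at position $(1,1)$ and $1$ at position $(n,1)$. The $(1,1)$-minor is exactly the $(n-1) \times (n-1)$ matrix of the same form, while the $(n,1)$-minor (obtained by removing the bottom row and leftmost column) is lower triangular with $1$'s on its diagonal and hence has determinant $1$. This produces the recursion $D_n = -D_{n-1} + (-1)^{n+1}$, with base case $D_1 = 1$, which a routine induction solves to $D_n = (-1)^{n+1} n$, in particular nonzero.

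The only edge case is $n = 1$, where the matrix reduces to $[1]$ by the nesting convention stated in the lemma; both arguments handle this trivially. I do not anticipate any significant obstacle: the bidiagonal structure of the top $n-1$ rows makes the system essentially triangular modulo the bottom row, which is what makes both the kernel computation and the cofactor expansion immediate. The only real decision is whether to record the value $(-1)^{n+1}n$ of the determinant for later use when inverting the linear system \eqref{systemofequationsforphaseshifts}; since the subsequent analysis will want to extract the individual phase shifts $\Phi_i$ from that system, I would favour presenting the determinant computation alongside the kernel argument so that Cramer's rule (or an equivalent direct inversion) is immediately available.
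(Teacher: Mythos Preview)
Your proof is correct. The paper itself does not give a proof of this lemma at all; it merely remarks that ``this is a simple exercise in linear algebra (one can show that this matrix has determinant $n(-1)^{n+1}$)''. Your cofactor-expansion argument supplies exactly this determinant value, and your kernel argument is an even quicker route to nonsingularity. One minor point: you anticipate needing the determinant for Cramer's rule to solve \eqref{systemofequationsforphaseshifts}, but the paper instead simply exhibits a candidate $(\hat\Phi_i)$ and checks it satisfies the system, invoking only uniqueness; so the explicit value of $D_n$ is not actually used downstream.
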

This is a simple exercise in linear algebra (one can show that this matrix has determinant $n(-1)^{n+1}$. The key point is that it now follows that the system of equations in~\eqref{systemofequationsforphaseshifts} has a unique solution, which is given by the phase shifts. We can now state our main theorem:

\begin{thm}
For $i\in\{1,\dots,n\}$, define $\hat{\Phi}_i$ via
\begin{equation}\label{equation:phaseshiftforbbs}
 \hat{\Phi}_i = \sum_{j<i} 2\mu_j - \sum_{j>i}2\mu_i.
\end{equation}For a box-ball system with asymptotic block sizes given by
$\mu_1<\mu_2<\cdots <\mu_n$,
the phase shift, $\Phi_i$, of the $\mu_i$-block experienced from $-t$ to $t$ for $t$ sufficiently large is equal to $\hat{\Phi}_i$.
\end{thm}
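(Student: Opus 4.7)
The plan is to invoke the uniqueness statement proved just above: since the $n\times n$ coefficient matrix of the linear system \eqref{systemofequationsforphaseshifts} is non-singular, that system has a unique solution. It therefore suffices to verify that the closed-form expressions $\hat{\Phi}_i$ defined by \eqref{equation:phaseshiftforbbs} satisfy both the $n-1$ telescoping relations and the constraint $\sum_i\hat{\Phi}_i=0$ supplied by Lemma~\ref{sumphaseshiftsiszeo}. No further dynamical input is needed at this stage.

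For the sum-zero condition, I would expand
\[
\sum_{i=1}^n \hat{\Phi}_i \;=\; \sum_{i=1}^n\sum_{j<i} 2\mu_j \;-\; \sum_{i=1}^n\sum_{j>i} 2\mu_i.
\]
In the first double sum, each $\mu_j$ is counted once for every $i>j$, producing $2(n-j)\mu_j$; in the second, each $\mu_i$ is counted once for every $j>i$, producing $2(n-i)\mu_i$. A relabel of the dummy index shows the two contributions are identical, so the difference vanishes.

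For the telescoping equations, I would compute the two sides separately and match. Using $\hat{\Phi}_i = 2\sum_{j=1}^{i-1}\mu_j - 2(n-i)\mu_i$, a direct subtraction gives
\[
\hat{\Phi}_{n-k+1}-\hat{\Phi}_{n-k} \;=\; 2(k+1)\mu_{n-k}-2(k-1)\mu_{n-k+1},
\]
the $2\mu_{n-k}$ arising from the extra term picked up by the $\sum_{j<i}$ piece when $i$ advances by one, the other two terms coming from the change in the $(n-i)\mu_i$ coefficient. On the right-hand side of \eqref{systemofequationsforphaseshifts} I would reindex the two $\mu_{n-j+1}$ sums via $m=n-j+1$: the pair $\sum_{j>k}-\sum_{j>k+1}$ then collapses to the single term $2\mu_{n-k}$, while the two sums with constant summands contribute $2k\mu_{n-k}$ and $-2(k-1)\mu_{n-k+1}$ respectively. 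Adding these three pieces reproduces $2(k+1)\mu_{n-k}-2(k-1)\mu_{n-k+1}$, matching the left-hand side.

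The main obstacle is purely bookkeeping: the indices in \eqref{systemofequationsforphaseshifts} mix the ``reversed'' summation variable $n-j+1$ with the ``forward'' indices $n-k$ and $n-k+1$, and one must be careful to distinguish sums with constant summands (such as $\sum_{j<k+1} 2\mu_{n-k}$) from those whose summands genuinely depend on $j$. Once both sides have been rewritten in the common form $\alpha\,\mu_{n-k}+\beta\,\mu_{n-k+1}$, the match is immediate, and the uniqueness of the solution to \eqref{systemofequationsforphaseshifts} forces $\Phi_i=\hat{\Phi}_i$ for all $i$, completing the proof.
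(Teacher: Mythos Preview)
Your proposal is correct and follows essentially the same route as the paper: verify that $(\hat{\Phi}_i)$ satisfies all $n$ equations of \eqref{systemofequationsforphaseshifts} and invoke the non-singularity of the coefficient matrix to conclude $\hat{\Phi}_i=\Phi_i$. The only cosmetic difference is that the paper reindexes the left-hand side of the telescoping relation to match the right-hand side term-by-term, whereas you reduce both sides to the common closed form $2(k+1)\mu_{n-k}-2(k-1)\mu_{n-k+1}$; the sum-zero verification is handled identically in both.
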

\begin{proof}
For each $k=1,\dots,n-1$, one has
\begin{align*}
\hat{\Phi}_{n-k+1} - \hat{\Phi}_{n-k}
&= \sum_{j<n-k+1} 2\mu_j - \sum_{j>n-k+1}2\mu_{n-k+1} - \sum_{j<n-k} 2\mu_j + \sum_{j>n-k}2\mu_{n-k}\\
& = \sum_{j>k}2\mu_{n-j+1}-\sum_{j<k}2\mu_{n-k+1}-\sum_{j>k+1}2\mu_{n-j+1} +\sum_{j<k+1}2\mu_{n-k}.
\end{align*}
Thus, $(\hat{\Phi}_i)_{i=1}^n$ satisfies the first $n-1$ equations of \eqref{systemofequationsforphaseshifts}.

Next, let us compute $\sum\limits_{k=1}^n \Phi_k$ by direct calculation:
\begin{align*}
\sum_{k=1}^n \hat{\Phi}_k
&= \sum_{k=1}^n \left(\sum_{j<k} 2\mu_j - \sum_{j>k}2\mu_k\right)= \sum_{k=1}^n\sum_{j<k} 2\mu_j - \sum_{k=1}^n\sum_{j>k}2\mu_k.
\end{align*}
These are all finite sums, so we can reverse the order of summation. For the second double sum, summing from $k=1$ to $n$ with $j>k$ is equivalent to summing from $j=1$ to $n$ with $k<j$. Therefore, the above is equal to{\samepage
\[\sum_{k=1}^n\sum_{j<k} 2\mu_j - \sum_{j=1}^n\sum_{k<j}2\mu_k,\]
which is clearly zero.}

We have now established that $(\hat{\Phi}_i)_{i=1}^n$ the full system of equations in~\eqref{systemofequationsforphaseshifts}. Since this system has a unique solution, and the phase shifts $({\Phi}_i)_{i=1}^n$ satisfy this system, it follows that $\hat{\Phi}_i=\Phi_i$ for $i=1,\dots,n$.
\end{proof}

\section{Comparison with the literature}\label{subsection:tokihiro}
Tokihiro and collaborators \cite{bib:tns} have studied a related advanced version of the box-ball system in which balls are labelled and given priority in their movements based on the labels. Like the classic box-ball system, the advanced box-ball system also exhibits the solitonic scattering phenomenon, and they describe a soliton scattering rule that shows how the ball labels are permuted within the solitonic blocks in both forwards and backwards asymptotic time. Instead of the usual discrete-time Toda lattice with tridiagonal Hessenberg phase space, the advanced box-ball system is obtained via ultradiscretisation of the so-called \textit{hungry Toda molecule equation}. This tropical correspondence allowed the dynamics and scattering rules of the advanced box-ball system to be understood by detropicalising, passing through the hungry Toda dynamics, and then re-tropicalising back to an advanced box-ball configuration. This, when restricted to just one box label, reduces to classical dToda and the classical box-ball system.

For the restriction to dToda and classical BBS, \cite{bib:tns} \textit{observed} that the phase shift phenomenon decomposes into pairwise individual 2-soliton phase interactions and used this observation to justify restricting their attention to just 2-soliton phase shifts. However, this observation does not seem to be proved. With our approach, however, we do not make this assumption or observation \textit{a priori}. Instead, we take an $n$-soliton configuration and prove an overall phase shift formula (equation~\eqref{equation:phaseshiftforbbs}) which we recall below:
\[ \hat{\Phi}_i = \sum_{j<i} 2\mu_j - \sum_{j>i}2\mu_i\]
and, just as Moser did in \cite{bib:moser}, we can interpret the formula as saying that one can think of the overall phase shift phenomenon as decomposing into pairwise phase shift interactions, even though we do not assume that only two blocks collide at a time. However, a major point to be made here is that one can easily construct a box-ball configuration in which multiple collisions occur simultaneously, and our proof here does not depend on assuming that such a multi-collision does not occur; the proof applies to such configurations.

Subsequent to the completion of this paper, we were made aware of another more recent approach to scattering properties of box-ball systems based on a purely combinatorial method known as the Kerov--Kirillov--Reshetikhin bijection. The works of \cite{bib:kosty,bib:takagi} use this bijection to construct action-angle coordinates on (generalised) BBS, involving Young tableaux with \textit{riggings}. This yields a linearisation of the BBS dynamics in terms of which the asymptotic behavior of the soliton blocks can be determined. It may be of interest in future studies to compare this approach with the diagrammatic phase shift analysis performed in this paper.

\appendix

\section{The Gessel--Viennot--Lindstr\"om lemma}
The following lemma of Gessel, Viennot and Lindstr\"om can be found in \cite{bib:aigner}:

\begin{lem}[Gessel--Viennot--Lindstr\"om]\label{gvllemma}
Let $G=(V,E)$ be a directed acyclic graph with weights on $E$, $\mathcal{A}=\{A_1,\dots,A_n\}$ and $\mathcal{B}=\{B_1,\dots,B_n\}$ be subsets of $V$, and $M$ the path-matrix from $\mathcal{A}$ to $\mathcal{B}$. Then,
\begin{equation}\label{eqngvl}
\det(M)=\sum_{\mathcal{P}\in VD} (\operatorname{sign} \mathcal{P})w(\mathcal{P}),
\end{equation}
where $VD$ is the set of all \textit{vertex-disjoint} path systems from $\mathcal{A}\to \mathcal{B}$, i.e., each $\mathcal{P}\in VD$ is an $n$-tuple of paths $P_i\colon A_i\to B_{\sigma(i)}$ for some $\sigma\in\mathfrak{S}_n$ such that $P_i$ and $P_j$ have no vertices in common for $i\neq j$. The \textit{sign} of a vertex-disjoint path system is taken to be the sign of the permutation $\sigma$, and its weight is the product of all weights of all edges used in the path system.
\end{lem}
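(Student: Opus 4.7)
The plan is to establish the identity by expanding the determinant of the path matrix combinatorially and then exhibiting a sign-reversing, weight-preserving involution that cancels all contributions coming from non-vertex-disjoint path systems. Writing $M_{ij}=\sum_{P\colon A_i\to B_j}w(P)$, the Leibniz expansion gives
\[
\det(M)=\sum_{\sigma\in\mathfrak{S}_n}\operatorname{sign}(\sigma)\prod_{i=1}^n\sum_{P_i\colon A_i\to B_{\sigma(i)}}w(P_i)
=\sum_{\mathcal{P}}\operatorname{sign}(\mathcal{P})\,w(\mathcal{P}),
\]
where the outer sum ranges over all $n$-tuples $\mathcal{P}=(P_1,\dots,P_n)$ of paths with $P_i\colon A_i\to B_{\sigma(i)}$ for some $\sigma$, and $\operatorname{sign}(\mathcal{P}):=\operatorname{sign}(\sigma)$. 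The target is to show that the contribution of the non-vertex-disjoint $\mathcal{P}$ is zero.

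To do this, I would construct an involution $\iota$ on the set $\mathcal{N}$ of non-vertex-disjoint path systems as follows. Given $\mathcal{P}\in\mathcal{N}$, let $i$ be the smallest index such that $P_i$ shares a vertex with some other $P_j$. Among the paths $P_j$ (with $j\neq i$) that meet $P_i$, pick the first collision encountered when traversing $P_i$ from $A_i$; let $v$ be that first shared vertex, and let $j$ be the smallest index for which $P_j$ passes through $v$ at the earliest such collision. Define $\iota(\mathcal{P})$ by swapping the tails of $P_i$ and $P_j$ past $v$: the new $P'_i$ follows $P_i$ up to $v$ and then continues along $P_j$ to $B_{\sigma(j)}$, while $P'_j$ follows $P_j$ up to $v$ and then continues along $P_i$ to $B_{\sigma(i)}$. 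The other paths are unchanged.

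Three things then need to be verified, and these are the routine but essential checks. First, $\iota$ is a genuine involution: the indices $i,j$ and vertex $v$ chosen by the rule for $\iota(\mathcal{P})$ are the same as for $\mathcal{P}$, because swapping tails does not change which vertices each endpoint index's path visits collectively, nor the first collision along $P_i$ at which the minimal-index partner appears. Second, $\iota$ preserves the total weight, since $w(\mathcal{P})$ is the product of edge weights over the multiset of edges used, and this multiset is unchanged by the tail swap. Third, $\iota$ reverses the sign: the new endpoint permutation is $\sigma\circ(i\;j)$, so $\operatorname{sign}(\iota(\mathcal{P}))=-\operatorname{sign}(\mathcal{P})$.

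Granting these three properties, the contributions from $\mathcal{N}$ cancel in pairs, leaving
\[
\det(M)=\sum_{\mathcal{P}\in VD}\operatorname{sign}(\mathcal{P})\,w(\mathcal{P}),
\]
which is the desired identity \eqref{eqngvl}. The main obstacle is the careful bookkeeping in the definition of $\iota$: one must specify the choice of the collision point and partner index by a rule that is intrinsic to the swapped system as well as to the original, so that $\iota\circ\iota=\mathrm{id}$. Using the acyclicity of $G$ is essential here, because it guarantees that ``first collision along $P_i$'' and ``earliest occurrence at $v$'' are well-defined notions, and that after the swap the resulting objects are still legitimate directed paths (no cycles are introduced). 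Once this bookkeeping is fixed, the three verifications above are immediate.
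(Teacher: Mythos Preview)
Your argument is the standard sign-reversing involution proof of the Gessel--Viennot--Lindstr\"om lemma, and it is correct as sketched; the only delicate point is exactly the one you flag, namely that the rule selecting $(i,v,j)$ must be stable under the tail swap, and your choice (smallest $i$, then first collision along $P_i$, then smallest $j$ through $v$) does have this property once one notes that paths in a DAG are simple and that the initial segment of $P_i$ up to $v$ is unchanged by the swap.

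That said, there is nothing to compare against: the paper does not prove Lemma~\ref{gvllemma} at all. It is stated in Appendix~A with the attribution ``can be found in \cite{bib:aigner}'' and then used as a black box. What the paper does prove is the auxiliary Lemma~\ref{lemvertexdisjidentity}, which shows that for the particular planar graphs $Q_n$ under consideration every vertex-disjoint path system has $\sigma=\mathrm{id}$, so that the signs in \eqref{eqngvl} are all $+1$. Your write-up supplies the general result the paper merely quotes; it is a welcome addition rather than a competing approach.
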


Of particular interest to us will be path graphs of the following type:

\begin{figure}[H]
\centering
\tikz[scale=2.4]{
\foreach \inc in {0.2}{
\foreach \n in {4}{
\foreach \x in {1,...,\n}{
\foreach \y in {1,...,\n}{
\node at (\y,\n-\x+1) {\color{black!70}{(\x,\y)}};
\ifnum \x=\n
\node at (\y,\n-\x+1) {(\x,\y)};
\fi
\ifnum \y=\n
\node at (\y,\n-\x+1) {(\x,\y)};
\fi
}
}
\foreach \x in {1,...,\n}{
\foreach \y in {1,...,\n}{
\ifnum \y<\n
\ifnum \x>\y+1
\draw [->, thick] (\y,\n-\x+1+\inc) -- (\y,\n-\x+2-\inc);
\fi
\fi
}
}
\foreach \x in {1,...,\n}{
\foreach \y in {1,...,\n}{
\ifnum \y>1
\ifnum \x<\y+1
\draw[->,thick] (\y,\n-\x+\inc) -- (\y,\n-\x+1-\inc);
\fi
\fi
}
}
\foreach \x in {1,...,\n}{
\foreach \y in {1,...,\n}{
\ifnum \x<\n
\ifnum \y>\x+1
\draw[->,thick] (\y-1+\inc,\n-\x+1) -- (\y-\inc,\n-\x+1);
\fi
\fi
}
}
}
}
}
\end{figure}

In general, these will be graphs whose vertices form an $n\x n$ array with the following properties:
\begin{enumerate}\itemsep=0pt
 \item[1)] the vertices are labelled $(i,j)$ for $1\leq i,j\leq n$, with $i$ describing the row, and $j$ the column,
 \item[2)] the bottom row vertices $\{(n,j)\colon 1\leq j\leq n\}$ will be the set of sources,
 \item[3)] the last column vertices $\{(i,n)\colon 1\leq i\leq n\}$ will be the set of sinks, sharing $(n,n)$ with the sources,
 \item[4)] all edges are directed either up or right,
 \item[5)] any path from a source $(n,j)$ to a sink $(i,n)$ must pass through $(j,j)$; therefore, there are only paths from $(n,j)$ to $(i,n)$ for $i\leq j$.
\end{enumerate}

These graphs will be described in more detail (specifying the edge weights), but, for now, these properties are all that are needed to make the following key observation: \textit{a~path from $(n,j)$ to $(i,n)$ separates sources $(n,k)$ from sinks $(l,n)$ for all $k>j$ and $l<i$.}

The following lemma is a consequence of this last observation above.

\begin{lem}\label{lemvertexdisjidentity}
If one has a $k$-set of sources $\mathcal{A}=\{(n,a_i)\colon 1\leq i\leq k\}$ and a $k$-set of sinks $\mathcal{B}=\{(b_i,n)\colon 1\leq i\leq k\}$, each ordered so that $a_1<a_2<\cdots <a_k$ and $b_1<b_2<\cdots <b_k$, then, if there are any vertex-disjoint path systems from $\mathcal{A}$ to $\mathcal{B}$, then $\sigma\in\mathfrak{S}_k$ must be the identity, i.e., the path system must consist of disjoint paths $P_i\colon (n,a_i)\to (b_i,n)$ for $1\leq i\leq k$.
\end{lem}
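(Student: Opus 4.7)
The plan is a proof by contradiction grounded in the single-path separation observation displayed immediately before the lemma. Suppose we have a vertex-disjoint path system $\mathcal{P}=(P_1,\dots,P_k)$ with $P_i\colon (n,a_i) \to (b_{\sigma(i)},n)$, and assume $\sigma$ is not the identity. Then $\sigma$ contains some inversion: there exist indices $i<j$ with $\sigma(i)>\sigma(j)$. By the prescribed orderings of $\mathcal{A}$ and $\mathcal{B}$, this translates to $a_i<a_j$ while $b_{\sigma(j)}<b_{\sigma(i)}$.

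Next I would apply the separation observation to the single path $P_i$. It tells us that $P_i$ separates every source $(n,k)$ with $k>a_i$ from every sink $(l,n)$ with $l<b_{\sigma(i)}$. The pair produced in the previous paragraph satisfies exactly these inequalities, so the source $(n,a_j)$ and the sink $(b_{\sigma(j)},n)$ lie on opposite sides of $P_i$. Any directed lattice path between them must therefore meet $P_i$ at a vertex; in particular, $P_j$ does, contradicting the vertex-disjointness of $\mathcal{P}$. Hence no inversion can exist and $\sigma=\mathrm{id}$, which is exactly the claim.

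The only substantive point requiring care is making the word ``separates'' rigorous enough to guarantee a shared \emph{vertex} (and not merely a planar crossing). I would do this by noting that $P_i$, together with the bottom-row arc from $(n,a_i)$ to $(n,n)$ and the right-column arc from $(n,n)$ to $(b_{\sigma(i)},n)$, forms a closed lattice circuit whose complement in the $n\times n$ grid splits into two connected regions, with $(n,a_j)$ lying in one region and $(b_{\sigma(j)},n)$ in the other. Since the edges of $P_j$ are all monotone up or right and its endpoints are pinned to distinct regions, a short discrete Jordan-curve argument forces $P_j$ to pass through a lattice vertex of $P_i$. This planarity step is the only real obstacle; the rest of the argument is a direct reading-off of the stated observation.
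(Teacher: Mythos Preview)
Your argument is correct and hinges on exactly the same separation observation the paper uses; the only difference is organizational. The paper proceeds by induction, first forcing $\sigma(1)=1$ (since otherwise $P_1$ would cut the remaining sources off from $(b_1,n)$), then $\sigma(2)=2$, and so on, whereas you pick an arbitrary inversion $i<j$ with $\sigma(i)>\sigma(j)$ and derive a contradiction in a single application of the observation. Both routes are equally short, and your inversion formulation is arguably the cleaner packaging of the same idea. Your added caution about turning ``separates'' into a genuine shared-vertex statement via a discrete Jordan-curve argument is in fact more careful than the paper, which simply takes the observation at face value and asserts that the path system ``cannot be completed without violating the vertex-disjointedness.''
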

\begin{proof}
We assume $k>1$, otherwise the statement holds true trivially. Suppose then that $P_1\colon (n,a_1)\to (b_i,n)$ for some $i>1$. By the observation, this path cuts off all sources $(n,a_i)$ for $i>1$ from the sink $(b_1,n)$. This means the path system cannot be completed without violating the vertex-disjointedness. Thus, one must have $P_1\colon (n,a_1)\to (b_1,n)$.

To complete the proof, we show that if $\mathcal{P}=\{P_i\colon (n,a_i)\to (b_{\sigma(i)},n),\, 1\leq i\leq k\}$ is a vertex-disjoint path system, for some $\sigma\in\mathfrak{S}_k$, and $\sigma(i)=i$ for $1\leq i\leq j$, then $\sigma(j+1)=j+1$.

The argument is the analogous to the base case: suppose $\sigma(j+1)\neq j+1$, then, $\sigma(j+1)>j+1$ since $\sigma(i)=i$ for all $i\leq j$. This means that path $P_{j+1}\colon (n,a_{j+1})\to (b_{\sigma(j+1)},n)$ cuts off all sources $(n,a_i)$ for $i>j+1$ from all sinks $(b_l,n)$ for $l<\sigma(j+1)$. In particular, since $j+1<\sigma(j+1)$, this means that no source can be paired with $(j+1,n)$ without violating the vertex-disjoint requirement. Thus, if such a~vertex-disjoint path system were to exist, one would have to have $\sigma(j+1)=j+1$.
Thus, by induction, we have shown that $\sigma(i)=i$ for each~$i$.
\end{proof}

By Lemma \ref{lemvertexdisjidentity}, the $\text{sign}~\mathcal{P}$ piece is always equal to~1 in equation~\eqref{eqngvl} for our graphs of interest, and so we are permitted use of the following simplified statement of the Gessel--Viennot--Lindstr\"om lemma for our graphs:
$\det(M)=\sum_{\mathcal{P}\in VD} w(\mathcal{P})$.

\section{Proofs of key lemmas}

\subsection{Proof of Lemma \ref{toprightUminor}}
For $n\in \N$ with $n>1$, we construct a directed, weighted graph $Q_n$ as follows:
\begin{enumerate}\itemsep=0pt
\item The vertices of the graph will be pairs $(x,y)$ with $1\leq x,y\leq n$ arranged in a rectangular grid with $x$ and $y$ denoting the row and column, respectively, of the vertex.
\item For every $x\in [n-1]$ and for each $y\in[n]$, there will be a vertical arrow up from $(x+1,y)$ to $(x,y)$. If $x\geq y$, this arrow will be assigned a weight of~1. Otherwise, the weight $w_{x,y}^{(n)}$ from $(x+1,y)\to (x,y)$ will be defined by
\[ w_{x,y}^{(n)} = \dfrac{\prod\limits_{m=y+1}^n (\lambda_m-\lambda_{x+1})}{\prod\limits_{p=y}^n (\lambda_p-\lambda_{x})}.\]
\item For each $x$, $y$ with $1\leq x<y \leq n$, there will be an arrow from $(x,y-1)$ to $(x,y)$ with weight~$1$.
\end{enumerate}

For example, below is $Q_4$:

\begin{figure}[h!]
\centering
\tikz[scale=2.4]{
\foreach \inc in {0.2}{
\foreach \n in {4}{
\foreach \x in {1,...,\n}{
\foreach \y in {1,...,\n}{
\node at (\y,\n-\x+1) {\color{black!70}{(\x,\y)}};
\ifnum \x=\n
\node at (\y,\n-\x+1) {\color{red}{(\x,\y)}};
\fi
\ifnum \y=\n
\node at (\y,\n-\x+1) {\color{red}{(\x,\y)}};
\fi
}
}
\foreach \x in {1,...,\n}{
\foreach \y in {1,...,\n}{
\ifnum \y<\n
\ifnum \x>\y+1
\draw [->, cyan, thick] (\y,\n-\x+1+\inc) -- (\y,\n-\x+2-\inc) node[midway,left] {$1$};
\fi
\fi
}
}
\foreach \x in {1,...,\n}{
\foreach \y in {1,...,\n}{
\ifnum \y>1
\ifnum \x<\y+1
\draw[->,magenta,thick] (\y,\n-\x+\inc) -- (\y,\n-\x+1-\inc) node[midway,left] {$w_{\x,\y}^{(\n)}$};;
\fi
\fi
}
}
\foreach \x in {1,...,\n}{
\foreach \y in {1,...,\n}{
\ifnum \x<\n
\ifnum \y>\x+1
\draw[->,thick] (\y-1+\inc,\n-\x+1) -- (\y-\inc,\n-\x+1) node[midway,above] {$1$};
\fi
\fi
}
}
}
}
}
\end{figure}

We associate to $Q_n$ its weighted path matrix $P^{(n)}$ defined by taking $P^{(n)}_{ij}$ to be the sum over all paths to $(i,n)$ from $(n,j)$, where a path's contribution to the sum is taken by multiplying all weights of edges taken by the path. Immediately from the fact that all edges are either up or right, with the only place to go (via unit weighted paths) from $(n,j)$ is to $(j,j)$, and the fact that all horizontal paths have unit weight, it is easy to see that each $P^{(n)}$ is a upper unitary matrix. The first step in proving Lemma \ref{toprightUminor} is to show that we in fact have
$P^{(n)} = U(\lambda_1,\dots,\lambda_n)$
for each $n>1$.

To aid in some of the arguments to come, we include $Q_2$, $Q_3$ and $Q_4$ below with the weight written out in full:

Here is $Q_2$:

\begin{figure}[h!]
\centering
\tikz[yscale=2.8,xscale=4.7]{
\foreach \inc in {0.2}{
\foreach \n in {2}{
\foreach \x in {1,...,\n}{
\foreach \y in {1,...,\n}{
\node at (\y,\n-\x+1) {\color{black!70}{(\x,\y)}};
\ifnum \x=\n
\node at (\y,\n-\x+1) {\color{red}{(\x,\y)}};
\fi
\ifnum \y=\n
\node at (\y,\n-\x+1) {\color{red}{(\x,\y)}};
\fi
}
}
\foreach \x in {1,...,\n}{
\foreach \y in {1,...,\n}{
\ifnum \y<\n
\ifnum \x>\y+1
\draw [->, cyan, thick] (\y,\n-\x+1+\inc) -- (\y,\n-\x+2-\inc) node[midway,left] {$1$};
\fi
\fi
}
}
\foreach \x in {1,...,\n}{
\foreach \y in {1,...,\n}{
\ifnum \y>1
\ifnum \x<\y+1
\draw[->,magenta,thick] (\y,\n-\x+\inc) -- (\y,\n-\x+1-\inc);
\fi
\fi
}
}
\foreach \x in {1,...,\n}{
\foreach \y in {1,...,\n}{
\ifnum \x<\n
\ifnum \y>\x+1
\draw[->,thick] (\y-1+\inc,\n-\x+1) -- (\y-\inc,\n-\x+1) node[midway,above] {$1$};;
\fi
\fi
}
}
}
\node at (2-0.15,1.5) {\color{magenta}{$\frac{1}{\lambda_2-\lambda_1}$}};
}
}
\end{figure}

Here is $Q_3$:

\begin{figure}[h!]
\centering
\tikz[yscale=2.8,xscale=4.7]{
\foreach \inc in {0.2}{
\foreach \n in {3}{
\foreach \x in {1,...,\n}{
\foreach \y in {1,...,\n}{
\node at (\y,\n-\x+1) {\color{black!70}{(\x,\y)}};
\ifnum \x=\n
\node at (\y,\n-\x+1) {\color{red}{(\x,\y)}};
\fi
\ifnum \y=\n
\node at (\y,\n-\x+1) {\color{red}{(\x,\y)}};
\fi
}
}
\foreach \x in {1,...,\n}{
\foreach \y in {1,...,\n}{
\ifnum \y<\n
\ifnum \x>\y+1
\draw [->, cyan, thick] (\y,\n-\x+1+\inc) -- (\y,\n-\x+2-\inc) node[midway,left] {$1$};
\fi
\fi
}
}
\foreach \x in {1,...,\n}{
\foreach \y in {1,...,\n}{
\ifnum \y>1
\ifnum \x<\y+1
\draw[->,magenta,thick] (\y,\n-\x+\inc) -- (\y,\n-\x+1-\inc);
\fi
\fi
}
}
\foreach \x in {1,...,\n}{
\foreach \y in {1,...,\n}{
\ifnum \x<\n
\ifnum \y>\x+1
\draw[->,thick] (\y-1+\inc,\n-\x+1) -- (\y-\inc,\n-\x+1) node[midway,above] {$1$};;
\fi
\fi
}
}
}
\node at (2-0.27,2.5) {\color{magenta}{$\frac{\lambda_3-\lambda_2}{(\lambda_3-\lambda_1)(\lambda_2-\lambda_1)}$}};
\node at (3-0.13,2.5) {\color{magenta}{$\frac{1}{\lambda_3-\lambda_1}$}};
\node at (3-0.13,1.5) {\color{magenta}{$\frac{1}{\lambda_3-\lambda_2}$}};
}
}
\end{figure}

Here is $Q_4$:

\begin{figure}[h!]
\centering
\tikz[yscale=2.3,xscale=3.5]{
\foreach \inc in {0.2}{
\foreach \n in {4}{
\foreach \x in {1,...,\n}{
\foreach \y in {1,...,\n}{
\node at (\y,\n-\x+1) {\color{black!70}{(\x,\y)}};
\ifnum \x=\n
\node at (\y,\n-\x+1) {\color{red}{(\x,\y)}};
\fi
\ifnum \y=\n
\node at (\y,\n-\x+1) {\color{red}{(\x,\y)}};
\fi
}
}
\foreach \x in {1,...,\n}{
\foreach \y in {1,...,\n}{
\ifnum \y<\n
\ifnum \x>\y+1
\draw [->, cyan, thick] (\y,\n-\x+1+\inc) -- (\y,\n-\x+2-\inc) node[midway,left] {$1$};
\fi
\fi
}
}
\foreach \x in {1,...,\n}{
\foreach \y in {1,...,\n}{
\ifnum \y>1
\ifnum \x<\y+1
\draw[->,magenta,thick] (\y,\n-\x+\inc) -- (\y,\n-\x+1-\inc);
\fi
\fi
}
}
\foreach \x in {1,...,\n}{
\foreach \y in {1,...,\n}{
\ifnum \x<\n
\ifnum \y>\x+1
\draw[->,thick] (\y-1+\inc,\n-\x+1) -- (\y-\inc,\n-\x+1) node[midway,above] {$1$};;
\fi
\fi
}
}
}
\node at (2-0.45,3.5) {\color{magenta}{$\frac{(\lambda_4-\lambda_2)(\lambda_3-\lambda_2)}{(\lambda_4-\lambda_1)(\lambda_3-\lambda_1)(\lambda_2-\lambda_1)}$}};
\node at (3-0.31,3.5) {\color{magenta}{$\frac{\lambda_4-\lambda_2}{(\lambda_4-\lambda_1)(\lambda_3-\lambda_1)}$}};
\node at (3-0.31,2.5) {\color{magenta}{$\frac{\lambda_4-\lambda_3}{(\lambda_4-\lambda_2)(\lambda_3-\lambda_2)}$}};
\node at (4-0.13,3.5) {\color{magenta}{$\frac{1}{\lambda_4-\lambda_1}$}};
\node at (4-0.13,2.5) {\color{magenta}{$\frac{1}{\lambda_4-\lambda_2}$}};
\node at (4-0.13,1.5) {\color{magenta}{$\frac{1}{\lambda_4-\lambda_3}$}};
}
}
\end{figure}

We show that $P^{(n)}=U(\lambda_1,\dots,\lambda_n)$ by induction on $n$.
One can immediately see from $Q_2$ what $P^{(2)}$ is
\[\left[\begin{matrix} 1 & \dfrac{1}{\lambda_2-\lambda_1} \\ 0 & 1\end{matrix}\right],\]
where we take the convention that there is a path from $(2,2)$ to $(2,2)$ by which uses no arrows, and the product over the weights is the empty one, hence is equal to $1$. This is $U(\lambda_1,\lambda_2)$, so we have the base case of $n=2$.

For our induction hypothesis, assume $P^{(n)}=U(\lambda_1,\dots,\lambda_n)$ for some $n>1$ and consider the weighted graph $Q_{n+1}$. We make two observations on how to translate paths from $Q_n$ to $Q_{n+1}$. The first is that the lower-right subgraph of $Q_{n+1}$ obtained by deleting all vertices $(x,y)$ with $x=1$ or $y=1$ is precisely $Q_n$, but with all vertices $(x,y)$ replaced by $(x-1,y-1)$ and the indices of all eigenvalues reduced by~$1$. To see this, suppose $1<x<y\leq n+1$ and consider~$w_{x,y}^{(n+1)}$:
\[w_{x,y}^{(n+1)} = \dfrac{\prod\limits_{m=y+1}^{n+1} (\lambda_m-\lambda_{x+1})}{\prod\limits_{p=y}^{n+1} (\lambda_p-\lambda_{x})}\]
and compare this to $w_{x-1,y-1}^{(n)}$:
\[ w_{x-1,y-1}^{(n)} = \dfrac{\prod\limits_{m=y}^n (\lambda_m-\lambda_{x})}{\prod\limits_{p=y-1}^n (\lambda_p-\lambda_{x-1})}.\]
If we add $1$ to each index on the eigenvalues in $w_{x-1,y-1}^{(n)}$ and reindex, we obtain
\[\dfrac{\prod\limits_{m=y}^n (\lambda_{m+1}-\lambda_{x+1})}{\prod\limits_{p=y-1}^n (\lambda_{p+1}-\lambda_{x})}
=
\dfrac{\prod\limits_{m=y+1}^{n+1} (\lambda_{m}-\lambda_{x+1})}{\prod\limits_{p=y}^{n+1} (\lambda_{p}-\lambda_{x})}
=
w_{x,y}^{(n+1)}.\]
This proves the claim. Therefore, to see what $P_{ij}^{(n+1)}$ is for $i,j>1$, we note that this observation gives us{\samepage
\[P_{ij}^{(n+1)} = \widehat{P_{i-1,j-1}^{(n)}},\]
where the right-hand side's hat tells us to modify all eigenvalues by increasing their indices by~$1$.}

By the induction hypothesis, we have
\begin{align*}
P_{ij}^{(n+1)} &= (U(\lambda_2,\dots,\lambda_n))_{i-1,j-1}
= \prod_{i-1\leq m<j-1} \frac{1}{\lambda_{(j-1)+1}-\lambda_{m+1}}\\
&= \prod_{i\leq m' <j}\frac{1}{\lambda_j-\lambda_{m'}}\
= (U(\lambda_1,\dots,\lambda_{n+1}))_{ij}.
\end{align*}
This proves the induction hypothesis for the entries $P^{(n+1)}_{ij}$ with $i,j>1$.

For $j=1$ or $n+1$, the structure of $Q_{n+1}$ makes things simple:
\begin{enumerate}\itemsep=0pt
\item There is only one vertex of the form $(i,n+1)$ that can be reached from $(n+1,1)$, and that is $(1,n+1)$, which is via $n$ up arrows (each with unit weight) followed by $n$ right arrows (also with unit weight). Therefore, $P_{1j}^{(n+1)} = \delta_{1j}$, as required.
\item Each of $(i,n+1)$ can be reached by $(n+1,n+1)$, just by vertical arrows. If $i=n+1$, this is via the empty path, with weight $1$. Otherwise, the path from $(n+1,n+1)$ to $(i,n+1)$ has weight
\[\prod_{i\leq m< n+1} w_{i,n+1}^{(n+1)}.\]
By calculating
\[ w_{i,n+1}^{(n+1)} = \dfrac{\prod\limits_{m=n+2}^{n+1} (\lambda_m-\lambda_{i+1})}{\prod\limits_{p=n+1}^{n+1} (\lambda_p-\lambda_{i})}
= \dfrac{1}{\lambda_{n+1}-\lambda_i},\]
we see that
\[\prod_{i\leq m< n+1} w_{i,n+1}^{(n+1)} = \prod_{i\leq m< n+1} \dfrac{1}{\lambda_{n+1}-\lambda_i}= (U(\lambda_1,\dots,\lambda_{n+1}))_{i,n+1}.\]
\end{enumerate}

All that remains of the induction step is to show that
\[P_{1,j}^{(n+1)}= (U(\lambda_1,\dots,\lambda_{n+1}))_{1j}\] for $1<j<n+1$. To see, this, we now compare the subgraph of $Q_{n+1}$, obtained by removing the last column and last row, to the entirety of $Q_n$. We note that they have the same form (with unit weights in the same locations) and compare $w_{x,y}^{(n)}$ and $w_{x,y}^{(n+1)}$ directly by division for $1\leq x<y\leq n$:
\[\dfrac{w_{x,y}^{(n+1)}}{w_{x,y}^{(n)}}
\dfrac{\prod\limits_{m=y+1}^{n+1} (\lambda_m-\lambda_{x+1})}{\prod\limits_{p=y}^{n+1} (\lambda_p-\lambda_{x})}
\dfrac{\prod\limits_{p=y}^n (\lambda_p-\lambda_{x})}{\prod\limits_{m=y+1}^n (\lambda_m-\lambda_{x+1})}=\dfrac{\lambda_{n+1}-\lambda_{x+1}}{\lambda_{n+1}-\lambda_x}.\]

We see that this does not depend on $y$, so, as long as $x<y$, the (multiplicative) difference in the weights between $Q_n$ and the subgraph of $Q_{n+1}$ is a fixed quantity depending only on the row of the weight. Furthermore, since each path from a node $(n,j)$ to $(i,n)$ use precisely the $j-i$ non-unit weight belonging to the consecutive rows terminating in row $i$, we immediately see that the paths in $Q_{n+1}$ from $(j,n+1)$ to $(n,i)$ must be given by
\[\dfrac{\lambda_{n+1}-\lambda_j}{\lambda_{n+1}-\lambda_i}\times (U(\lambda_1,\dots,\lambda_n))_{ij},\]
where the first term comes from multiplying together
\[\dfrac{\lambda_{n+1}-\lambda_{j}}{\lambda_{n+1}-\lambda_{j-1}}\times \dfrac{\lambda_{n+1}-\lambda_{j-1}}{\lambda_{n+1}-\lambda_{j-2}}\times \cdots \times
\dfrac{\lambda_{n+1}-\lambda_{i+2}}{\lambda_{n+1}-\lambda_{i+1}}\times \dfrac{\lambda_{n+1}-\lambda_{i+1}}{\lambda_{n+1}-\lambda_{i}}.\]
We use this information to obtain $P_{1j}^{(n+1)}$ for $1<j<n+1$ by partitioning the paths from $(n+1,j)$ to $(1,n+1)$ based on the last vertex in column $n$ of $Q_{n+1}$ that the path hits before moving on to column $n+1$ (i.e., the next arrow must be a right one). These paths then have only one way of being completed to a path to $(1,n+1)$, which is by a upwards arrows the rest of the way. Such paths contribute a weight of
\[\prod_{1\leq p<i} w_{p,n+1}^{(n+1)}= \prod_{1\leq p<i} \dfrac{1}{\lambda_{n+1}-\lambda_p}.\]
Putting this together with the information from the partition, we have
\[P_{1j}^{n+1} = \sum_{m=1}^j \dfrac{\lambda_{n+1}-\lambda_j}{\lambda_{n+1}-\lambda_m}\left( \prod_{1\leq p<m} \dfrac{1}{\lambda_{n+1}-\lambda_p}\right)((U(\lambda_1,\dots,\lambda_n))_{mj}.\]
Therefore, to complete the final step of the induction step amounts to proving the following identity:
\[\prod_{1\leq m<j} \dfrac{1}{\lambda_j-\lambda_m} =
\sum_{m=1}^j \dfrac{\lambda_{n+1}-\lambda_j}{\lambda_{n+1}-\lambda_m}\left( \prod_{1\leq p<m} \dfrac{1}{\lambda_{n+1}-\lambda_p}\right)\prod_{m\leq l<j}\dfrac{1}{\lambda_j-\lambda_l}.\]

\begin{lem}
For all $n\in\N$ and $1<j<n$,
\[\prod_{1\leq m<j} \dfrac{1}{\lambda_j-\lambda_m} =
\sum_{m=1}^j \dfrac{\lambda_{n}-\lambda_j}{\lambda_{n}-\lambda_m}\left( \prod_{1\leq p<m} \dfrac{1}{\lambda_{n}-\lambda_p}\right)\prod_{m\leq l<j}\dfrac{1}{\lambda_j-\lambda_l}.
\]
\end{lem}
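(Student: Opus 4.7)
The identity is telescoping in disguise. The plan is to introduce the ``hybrid'' partial products that interpolate between the two sides of the identity,
\[
T_m \doteq \prod_{1\leq p<m}\frac{1}{\lambda_n-\lambda_p}\prod_{m\leq l<j}\frac{1}{\lambda_j-\lambda_l},\qquad m=1,2,\dots,j,
\]
so that $T_1=\prod_{1\leq l<j}\frac{1}{\lambda_j-\lambda_l}$ is exactly the left-hand side, while $T_j=\prod_{1\leq p<j}\frac{1}{\lambda_n-\lambda_p}$ is the ``fully swapped'' product on the other end.

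Next, compute $T_m-T_{m+1}$ for each $m$ in the range $1\leq m\leq j-1$. Factoring out everything except the single factor that changes in going from $T_m$ to $T_{m+1}$ gives the ratio $T_{m+1}/T_m=(\lambda_j-\lambda_m)/(\lambda_n-\lambda_m)$, and hence
\[
T_m-T_{m+1}=T_m\cdot\frac{\lambda_n-\lambda_j}{\lambda_n-\lambda_m}=\frac{\lambda_n-\lambda_j}{\lambda_n-\lambda_m}\left(\prod_{1\leq p<m}\frac{1}{\lambda_n-\lambda_p}\right)\prod_{m\leq l<j}\frac{1}{\lambda_j-\lambda_l}.
\]
The right-hand expression is exactly the $m$-th summand of the claimed right-hand side. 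Separately, the $m=j$ summand of the right-hand side reduces (using $\frac{\lambda_n-\lambda_j}{\lambda_n-\lambda_j}=1$ and the empty product over $l$) to $\prod_{1\leq p<j}\frac{1}{\lambda_n-\lambda_p}=T_j$.

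Summing over all $m$, the right-hand side telescopes:
\[
\sum_{m=1}^{j}(\text{$m$-th summand})=\sum_{m=1}^{j-1}(T_m-T_{m+1})+T_j=T_1,
\]
which equals the left-hand side, as required.

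\textbf{Main obstacle.} There is no real obstacle; the content of the lemma is recognizing that the apparently complicated summand is the discrete derivative $T_m-T_{m+1}$ of a natural interpolating family. The only care needed is bookkeeping at the endpoints ($m=j$ involves empty products and a factor that collapses to $1$), and tracking the sign in the identity $\frac{1}{\lambda_n-\lambda_m}-\frac{1}{\lambda_j-\lambda_m}=\frac{\lambda_j-\lambda_n}{(\lambda_n-\lambda_m)(\lambda_j-\lambda_m)}$, both of which are straightforward.
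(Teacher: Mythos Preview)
Your proof is correct, and it is genuinely different from the paper's. The paper clears denominators by multiplying through by $\prod_{1\le i<j}(\lambda_n-\lambda_i)(\lambda_j-\lambda_i)$, reducing the claim to the polynomial identity
\[
\prod_{i=1}^{j-1}(\lambda_n-\lambda_i)=\sum_{m=1}^{j}\Bigl(\prod_{p=m+1}^{j}(\lambda_n-\lambda_p)\Bigr)\Bigl(\prod_{l=1}^{m-1}(\lambda_j-\lambda_l)\Bigr),
\]
and proves this by induction on $j$: it splits off the top two summands, combines them, pulls out a common factor $(\lambda_n-\lambda_{j-1})$, and recognizes the remainder as the $(j{-}1)$-case for the shifted tuple $\lambda_1,\dots,\lambda_{j-2},\lambda_j,\dots,\lambda_n$. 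Your argument instead introduces the interpolating products $T_m$ and observes that the $m$-th summand is exactly $T_m-T_{m+1}$ (with the $m=j$ summand equal to $T_j$), so the whole right-hand side telescopes to $T_1$. This is shorter and more transparent; it avoids both the denominator clearing and the induction, and it explains \emph{why} the identity holds rather than just verifying it. The paper's route has the minor advantage of working entirely with polynomials, but your telescoping proof is the cleaner one here.
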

\begin{proof}
By multiplying both sides by
\[\prod_{1\leq i <j}(\lambda_n-\lambda_i)(\lambda_j-\lambda_i),\]
the above is equivalent to showing
\begin{align*}
 \prod_{i=1}^{j-1}(\lambda_n-\lambda_i)
 &= \sum_{m=1}^j \dfrac{\lambda_{n}-\lambda_j}{\lambda_{n}-\lambda_m}\left(\prod_{p=m}^{j-1}(\lambda_n-\lambda_p)\right)\left(\prod_{l=1}^{m-1}(\lambda_j-\lambda_l)\right)\\
&= \sum_{m=1}^j \left(\prod_{p=m+1}^{j}(\lambda_n-\lambda_p)\right)\left(\prod_{l=1}^{m-1}(\lambda_j-\lambda_l)\right).
\end{align*}

This holds trivially for $j=1$ (by convention for the empty product). For $j=2$, we want
\[\lambda_n-\lambda_1 = (\lambda_n-\lambda_2)\times 1+1\times (\lambda_2-\lambda_1).\]
We proceed by induction: suppose the above identity holds for each $j-1$ and consider the identity for~$j$. The right-hand side is equal to
\begin{align*}
\text{RHS}
&=
\sum\limits_{m=1}^{j-2} \left(\prod\limits_{p=m+1}^{j}(\lambda_n-\lambda_p)\right)\left(\prod\limits_{l=1}^{m-1}(\lambda_j-\lambda_l)\right)+ \prod\limits_{l=1}^{j-1} (\lambda_j-\lambda_l)
+(\lambda_n-\lambda_j)\prod\limits_{l=1}^{j-2}(\lambda_j-\lambda_l)
\\
&= \sum\limits_{m=1}^{j-2} \left(\prod\limits_{p=m+1}^{j}(\lambda_n-\lambda_p)\right)\left(\prod\limits_{l=1}^{m-1}(\lambda_j-\lambda_l)\right)
+ (\lambda_j-\lambda_{j-1} +\lambda_n-\lambda_j)\prod\limits_{l=1}^{j-2}(\lambda_j-\lambda_l)\\
&= (\lambda_n-\lambda_{j-1})\left[
\sum_{m=1}^{j-2} \left(\prod_{\substack{p=m+1\\p\neq j-1}}^{j}(\lambda_n-\lambda_p)\right)\left(\prod_{l=1}^{m-1}(\lambda_j-\lambda_l)\right)
+ \prod_{l=1}^{j-2}(\lambda_j-\lambda_l)
\right]\\
&= (\lambda_n-\lambda_{j-1})\prod_{i=1}^{j-2}(\lambda_n-\lambda_i)= \prod_{i=1}^{j-1}(\lambda_n-\lambda_i),
\end{align*}
where the penultimate equality was the induction hypothesis applied to the eigenvalues
\[\lambda_1,\dots,\lambda_{j-2},\lambda_j,\dots,\lambda_n.\tag*{\qed}\]\renewcommand{\qed}{}
\end{proof}

With the equality $U(\lambda_1,\dots,\lambda_n)=P^{(n)}$ for all $n>1$ now established, we have that
\[\Delta_k^n U(\lambda_1,\dots,\lambda_n) = \Delta_k^n\big(P^{(n)}\big).\]
By the Gessel--Viennot--Lindstr\"om lemma, this amounts to considering $k$-tuples of non-in\-ter\-sec\-ting paths from the set $\{(n,j)\colon n-k+1\leq j\leq n\}$ to the set $\{(i,n)\colon 1\leq i\leq k\}$. It is clear that the tuple must have one of its paths being $(n,n)\to (k,n)$ since one can only go straight up from $(n,n)$ and one must avoid any intersections. Similarly, $(n,n-1)$ must be linked with $(k-1,n)$, and this path consists of going up to the $(k-1)$-st row, then across to $(k-1,n)$. Continuing in a similar vein, we see that the only $k$-tuple of non-intersecting paths available to us is the collection of up-right hooks from $(n,n-m+1)$ to $(m,n)$ for $1\leq m\leq k$.

For $j\in\{n-k+1,n-k+2,\dots,n\}$, we must traverse $n-j$ unit weight arrows to get to $(j,j)$, after which we must traverse vertical non-unit arrows to get to row $j+k-n$. The product along these weighted arrows is then
\[\prod_{p=j+k-n}^{j-1} w_{p,j}^{(n)}.\]
By noting that
\[w_{p,j}^{(n)} = \dfrac{\prod\limits_{m=j+1}^n (\lambda_j -\lambda_{p+1})}{\prod\limits_{m=j}^n (\lambda_m-\lambda_p)} = \dfrac{1}{\lambda_j-\lambda_p} \prod_{m=j+1}^n \dfrac{\lambda_m-\lambda_{p+1}}{\lambda_m-\lambda_p},\]
the latter part of which telescopes over multiplying for consecutive values of~$p$, we see that
\[\prod_{p=j+k-n}^{j-1} w_{p,j}^{(n)}
=
\left(\prod_{i=l+j-n}^{j-1} \dfrac{1}{\lambda_j-\lambda_i}\right)\prod_{m=j+1}^n \dfrac{\lambda_m-\lambda_j}{\lambda_m-\lambda_{j+k-n}}.
\]

We proceed to prove Lemma \ref{toprightUminor} showing that the result for $k-1$ and $n-1$ implies the result for $k,n$, and by proving that the result holds true for all $n$ when $k=1$. Since the lemma only concerns $k<n$, this will prove all cases of interest.

For $k=1$, we see this directly since
\[(U(\lambda_1,\dots,\lambda_n))_{1n}= \prod_{1\leq m<n}\dfrac{1}{\lambda_n-\lambda_m} = \prod_{j=n-1+1}^n \prod_{i=1}^{n-1}\dfrac{1}{\lambda_j-\lambda_i}.\]

For the induction hypothesis, suppose it is known that{\samepage
\[\Delta_k^n (U(\lambda_1,\dots,\lambda_n)) = \prod_{j=n-k+1}^n \prod_{i=1}^{n-k}\dfrac{1}{\lambda_j-\lambda_i}\]
for some $1\leq k<n$.}

Consider the graph $Q_{n+1}$ along with the $(k+1)$-tuple of non-intersecting paths from the set $\{(n+1,j)\colon n-k+1\leq j\leq n+1\}$ to the set $\{(i,n+1)\colon 1\leq i\leq k+1\}$. This $(k+1)$-tuple decomposes into a ``large hook'' from $(n+1,n-k+1)$ to $(1,n+1)$, and a $k$-tuple of paths from $\{(n+1,j)\colon n-k\leq j\leq n+1\}$ to the set $\{(i,n+1)\colon 2\leq i\leq k+1\}$.

By our earlier observation of how $Q_n$ sits in the lower-right part of $Q_{n+1}$ (we reduce all indices by one, both in the coordinates and in the eigenvalues), the induction hypothesis (applied to~$Q_n$, then translated over to~$Q_{n+1}$) tells us that this $k$-tuple of paths has product
\[ \Delta_k^n(U(\lambda_2,\dots,\lambda_{n+1})) = \prod_{j=n-k+1}^n \prod_{i=1}^{n-k} \dfrac{1}{\lambda_{j+1}-\lambda_{i+1}}.\]

The ``large hook'' has weight
\[\prod_{p=1}^{n-k}w_{p,n-k+1}^{(n+1)} = \left(\prod_{i=1}^{n-k}\dfrac{1}{\lambda_{n-k+1}-\lambda_i}\right)\prod_{m=n-k+2}^{n+1}\dfrac{\lambda_m-\lambda_{n-k+1}}{\lambda_m-\lambda_1}.\]

Thus, by the Gessel--Viennot--Lindstr\"om lemma and the induction hypothesis, we have that $\Delta_{k+1}^{n+1}(U(\lambda_1,\dots,\lambda_{n+1}))$ is equal to the following:
\begin{align*}
& \left( \prod_{j=n-k+1}^n \prod_{i=1}^{n-k} \dfrac{1}{\lambda_{j+1}-\lambda_{i+1}}\right)\left(\prod_{i=1}^{n-k}\dfrac{1}{\lambda_{n-k+1}-\lambda_i}\right)\prod_{m=n-k+2}^{n+1}\dfrac{\lambda_m-\lambda_{n-k+1}}{\lambda_m-\lambda_1}\\
&\qquad{}= \left( \prod_{j=n-k+2}^{n+1} \prod_{i=1}^{n-k} \dfrac{1}{\lambda_{j}-\lambda_{i+1}}\right)\left(\prod_{i=1}^{n-k}\dfrac{1}{\lambda_{n-k+1}-\lambda_i}\right)\prod_{m=n-k+2}^{n+1}\dfrac{\lambda_m-\lambda_{n-k+1}}{\lambda_m-\lambda_1}\\
&\qquad{}= \left( \prod_{j=n-k+2}^{n+1} \prod_{i=1}^{n-k-1} \dfrac{1}{\lambda_{j}-\lambda_{i+1}}\right)\left(\prod_{i=1}^{n-k}\dfrac{1}{\lambda_{n-k+1}-\lambda_i}\right)\prod_{m=n-k+2}^{n+1}\dfrac{1}{\lambda_m-\lambda_1}\\
&\qquad{}= \left( \prod_{j=n-k+2}^{n+1} \prod_{i=0}^{n-k-1} \dfrac{1}{\lambda_{j}-\lambda_{i+1}}\right)\left(\prod_{i=1}^{n-k}\dfrac{1}{\lambda_{n-k+1}-\lambda_i}\right)\\
&\qquad{}= \left( \prod_{j=n-k+2}^{n+1} \prod_{i=1}^{n-k} \dfrac{1}{\lambda_{j}-\lambda_{i}}\right)\left(\prod_{i=1}^{n-k}\dfrac{1}{\lambda_{n-k+1}-\lambda_i}\right)\\
&\qquad{}= \prod_{j=n-k+1}^{n+1} \prod_{i=1}^{n-k} \dfrac{1}{\lambda_{j}-\lambda_{i}},
\end{align*}
which is precisely the formula we wished to prove for $\Delta_{k+1}^{n+1} U(\lambda_1,\dots,\lambda_{n+1})$. This completes the proof of Lemma \ref{toprightUminor}.

\subsection{Proof of Lemma \ref{Uinverselemma}}
\begin{proof}
Let $U$ and $V$ be the matrices defined by
\[(U)_{ij} = \begin{cases}
\displaystyle \prod\limits_{i\leq m<j} \dfrac{1}{\lambda_j-\lambda_m} & \text{for }i\leq j,\\
0 & \text{otherwise},
\end{cases}
\]
and
\[(V)_{ij} = \begin{cases}
0, & i>j,\\
\displaystyle \prod_{i<m\leq j}\dfrac{1}{\lambda_i-\lambda_m}, & i\leq j.
\end{cases}
\]
We first remark that the entries of $U$ and $V$ only depend on the indices $(i,j)$ and not on~$n$. Therefore, we see that the matrices have the following block structure:
\[
U(\lambda_1,\dots,\lambda_n) = \left[\begin{array}{c|c}
U(\lambda_1,\dots,\lambda_{n-1}) & \vec{u}_n\\
\hline 0 & 1
\end{array}\right],\qquad
V = \left[\begin{array}{c|c}
V_{[n-1],[n-1]} & \vec{v}_n\\
\hline 0 & 1
\end{array}\right],
\]
where
\[(\vec{u}_n)_i = \prod_{i\leq m<n}\dfrac{1}{\lambda_n-\lambda_m},\qquad i=1,2,\dots,n-1,\]
and
\[(\vec{v}_n)_i = \prod_{i< m\leq n}\dfrac{1}{\lambda_i-\lambda_m},\qquad i=1,2,\dots,n-1.\]
Multiplying these matrices in these block forms yields:
\[
U(\lambda_1,\dots,\lambda_n)\times V = \left[\begin{array}{c|c}
U(\lambda_1,\dots,\lambda_{n-1})V_{[n-1],[n-1]} & U(\lambda_1,\dots,\lambda_{n-1})\vec{v}_n + \vec{u}_n\\
\hline 0 & 1
\end{array}\right].
\]
Thus, we see that an induction would be appropriate here: if we have shown the result for $n-1$, then the top-left block is the $(n-1)\x (n-1)$ identity matrix, so we need only show that
\[U(\lambda_1,\dots,\lambda_{n-1})\vec{v}_n + \vec{u}_n=\vec{0}.\]
The $n=1$ (base) case trivially holds, so we can proceed with the induction step for $n>1$. We then need only establish the following to complete the proof:
\[0 = \prod_{i\leq m<n}\dfrac{1}{\lambda_n-\lambda_m} + \sum_{j=i}^{n-1}\left(\prod\limits_{i\leq p<j} \dfrac{1}{\lambda_j-\lambda_p} \right)\left(\prod_{j< q\leq n}\dfrac{1}{\lambda_j-\lambda_q}\right),\]
which needs to be shown for each $i=1,2,\dots, n-1$.

Viewing the right-hand side as a rational function of $\lambda_n$, we see that one should consider the rational function
\[\prod_{i\leq m<n} \dfrac{1}{x-\lambda_m}.\]
We set up a partial fraction decomposition:
\[\prod_{i\leq m<n} \dfrac{1}{x-\lambda_m} = \sum_{j=i}^{n-1} \dfrac{c_j}{x-\lambda_j}.\]
We multiply both sides by $\prod_{i\leq m<n} (x-\lambda_m)$:
\[1 = \sum_{j=i}^{n-1} c_j\prod_{\substack{i\leq m<n\\m\neq j}} (x-\lambda_m).\]
Plugging in $x=\lambda_j$, we see that
\[c_j\prod_{\substack{i\leq m<n\\m\neq j}} (\lambda_j-\lambda_m) = 1\]
for each $j$. Hence,
\[c_j = \prod_{\substack{i\leq m<n\\m\neq j}} \dfrac{1}{\lambda_j-\lambda_m} = \left(\prod\limits_{i\leq p<j} \dfrac{1}{\lambda_j-\lambda_p} \right)\left(\prod_{j< q< n}\dfrac{1}{\lambda_j-\lambda_q}\right).\]
To conclude, we have
\begin{align*}
0 & = 1 - \sum_{j=i}^{n-1} \left(\prod\limits_{i\leq p<j} \dfrac{1}{\lambda_j-\lambda_p} \right)\left(\prod_{j< q< n}\dfrac{1}{\lambda_j-\lambda_q}\right)\dfrac{1}{\lambda_n-\lambda_j}\\
& = 1 + \sum_{j=i}^{n-1} \left(\prod\limits_{i\leq p<j} \dfrac{1}{\lambda_j-\lambda_p} \right)\left(\prod_{j< q< n}\dfrac{1}{\lambda_j-\lambda_q}\right)\dfrac{1}{\lambda_j-\lambda_n}\\
& = 1 + \sum_{j=i}^{n-1} \left(\prod\limits_{i\leq p<j} \dfrac{1}{\lambda_j-\lambda_p} \right)\left(\prod_{j< q\leq n}\dfrac{1}{\lambda_j-\lambda_q}\right).
\end{align*}
This completes the induction, and therefore proves Lemma \ref{Uinverselemma}.
\end{proof}

\subsection{Proving Corollary \ref{toprightUinverseminor}}

\begin{lem}
Let $V$ be the matrix as in Lemma {\rm \ref{Uinverselemma}}:
\[(V)_{ij} = \begin{cases}
0, & i>j,\\
\displaystyle \prod_{i<m\leq j}\dfrac{1}{\lambda_i-\lambda_m}, & i\leq j.
\end{cases}
\]
Then
\[ V_{ij} = \widehat{w}_0 U(\lambda_n,\dots,\lambda_1)^{\rm T} \widehat{w}_0.\]
\end{lem}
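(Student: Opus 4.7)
The plan is to verify the identity entry by entry using the explicit formulas already in hand. Since $\widehat{w}_0$ is the reverse permutation matrix and satisfies $\widehat{w}_0 = \widehat{w}_0^{-1}$, conjugation acts on a matrix $A$ by reversing both index orderings:
\[
\big(\widehat{w}_0 A\, \widehat{w}_0\big)_{ij} = A_{n-i+1,\, n-j+1}.
\]
Applying this with $A = U(\lambda_n,\dots,\lambda_1)^{\rm T}$, the right-hand side of the claimed identity becomes
\[
\big(\widehat{w}_0 U(\lambda_n,\dots,\lambda_1)^{\rm T}\widehat{w}_0\big)_{ij}
= U(\lambda_n,\dots,\lambda_1)_{\,n-j+1,\,n-i+1}.
\]

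First I would check that the support is correct. By Remark~\ref{rem:diagonaliseepsilon}, $U(\lambda_n,\dots,\lambda_1)$ is upper unipotent, so the entry above is nonzero only when $n-j+1 \leq n-i+1$, i.e., $i \leq j$, which matches exactly the support of $V$ in Lemma~\ref{Uinverselemma}.

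Next, for $i \leq j$, I would substitute the formula from Remark~\ref{rem:diagonaliseepsilon} applied to the reversed eigenvalue list $(\lambda_n,\dots,\lambda_1)$, whose $j$-th entry is $\lambda_{n-j+1}$. This gives
\[
U(\lambda_n,\dots,\lambda_1)_{\,n-j+1,\,n-i+1}
= \prod_{n-j+1 \leq m < n-i+1} \frac{1}{\lambda_{i} - \lambda_{n-m+1}}.
\]
The change of index $m' = n-m+1$ turns the range $n-j+1 \leq m \leq n-i$ into $i+1 \leq m' \leq j$, producing
\[
\prod_{i < m' \leq j} \frac{1}{\lambda_i - \lambda_{m'}},
\]
which is precisely the definition of $V_{ij}$ from Lemma~\ref{Uinverselemma}.

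There is no real obstacle here; the argument is pure bookkeeping. The only point that demands care is the double index reversal (one from the transpose, one from the conjugation by $\widehat{w}_0$) together with the reversal of the eigenvalue list hidden inside $U(\lambda_n,\dots,\lambda_1)$, so I would spell out the reindexing $m' = n-m+1$ carefully to make it transparent to the reader. Combined with Lemma~\ref{toprightUminor}, this immediately yields Corollary~\ref{toprightUinverseminor} by computing the top-right $k\times k$ minor of $V = U^{-1}$ as a permuted/transposed top-right minor of $U$ with the eigenvalue list reversed, and then reading off the product formula.
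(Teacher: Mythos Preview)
Your proposal is correct and follows essentially the same approach as the paper: both compute the $(i,j)$ entry of $\widehat{w}_0 U(\lambda_n,\dots,\lambda_1)^{\rm T}\widehat{w}_0$ directly, reduce it to $U(\lambda_n,\dots,\lambda_1)_{n-j+1,\,n-i+1}$, substitute the explicit formula for $U$ with the reversed eigenvalue list, and reindex via $m'=n-m+1$. The only difference is cosmetic---you invoke the index-reversal rule for conjugation by $\widehat{w}_0$ in one line, whereas the paper expands the two matrix products step by step.
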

\begin{proof}
Recall how we defined $U(\lambda_1,\dots,\lambda_n)$:
\[(U)_{ij} = \begin{cases}
\displaystyle \prod\limits_{i\leq m<j} \dfrac{1}{\lambda_j-\lambda_m} & \text{for }i\leq j,\\
0 & \text{otherwise}.
\end{cases}
\]
Therefore,
\[(U(\lambda_n,\dots,\lambda_1))_{ij} = \begin{cases}
\displaystyle \prod\limits_{i\leq m<j} \dfrac{1}{\lambda_{n-j+1}-\lambda_{n-m+1}} & \text{for }i\leq j,\\
0 & \text{otherwise}.
\end{cases}
\]

We now prove that $V = \widehat{w}_0 U(\lambda_n,\dots,\lambda_1)^{\rm T} \widehat{w}_0$ by direct calculation:
\begin{align*}
\big(\widehat{w}_0 U(\lambda_n,\dots,\lambda_1)^{\rm T} \widehat{w}_0\big)_{ij}
& = \sum_{l=1}^n (\widehat{w}_0)_{il}\big(U(\lambda_n,\dots,\lambda_1)^{\rm T} \widehat{w}_0\big)_{lj}
 =\big(U(\lambda_n,\dots,\lambda_1)^{\rm T} \widehat{w}_0\big)_{n-i+1,j}\\
&=\sum_{m=1}^n \big(U(\lambda_n,\dots,\lambda_1)^{\rm T}\big)_{n-i+1,m} (\widehat{w}_0)_{m,j}\\
&=\big(U(\lambda_n,\dots,\lambda_1)^{\rm T}\big)_{n-i+1,n-j+1}
 =(U(\lambda_n,\dots,\lambda_1))_{n-j+1,n-i+1}\\
&= \begin{cases}
\displaystyle \prod\limits_{n-j+1\leq m<n-i+1} \dfrac{1}{\lambda_{i}-\lambda_{n-m+1}} & \text{if }i\leq j,\\
0 & \text{otherwise}
\end{cases}\\
&= \begin{cases}
\displaystyle \prod\limits_{i<m\leq j} \dfrac{1}{\lambda_{i}-\lambda_{m}} & \text{for }i\leq j,\\
0 & \text{otherwise},
\end{cases}
\end{align*}
which is indeed $(V)_{ij}$.
\end{proof}

Since $\widehat{w}_0^{\rm T}=\widehat{w}_0$, the above result can be equivalently expressed as
\begin{equation*}
 U(\lambda_1,\dots,\lambda_n)^{-1} = (\widehat{w}_0U(\lambda_n,\dots,\lambda_1)\widehat{w}_0)^{\rm T}.
\end{equation*}

We now use this lemma to prove Corollary \ref{toprightUinverseminor}:

\begin{proof}
Since $V=U(\lambda_1,\dots,\lambda_n)^{-1}$, $\Delta_k^n(V)= \Delta_k^n\big(U^{-1}\big)$ can be computed as follows
\begin{align*}
\Delta_k^n(V)
&= \Delta_k^n\big(\widehat{w}_0 U(\lambda_n,\dots,\lambda_1)^{\rm T} \widehat{w}_0\big)\\
&= \sum_{S\in {[n]\choose k}} \det((\widehat{w}_0)_{[k],S})\det\big((U(\lambda_n,\dots,\lambda_1)^{\rm T}\widehat{w}_0)_{S,]k[}\big)\\
&= (-1)^{\lfloor k/2\rfloor}\det\big((U(\lambda_n,\dots,\lambda_1)^{\rm T}\widehat{w}_0)_{]k[,]k[}\big)\\
&= (-1)^{\lfloor k/2\rfloor} \sum_{S\in {[n]\choose k}} \det\big(U(\lambda_n,\dots,\lambda_1)^{\rm T}\big)_{]k[,S})\det((\widehat{w}_0)_{S,]k[})\\
&= (-1)^{2\lfloor k/2\rfloor} \det\big(U(\lambda_n,\dots,\lambda_1)^{\rm T}\big)_{]k[,[k]})\\
&= \det((U(\lambda_n,\dots,\lambda_1))_{[k],]k[}).
\end{align*}

Therefore, by Lemma \ref{toprightUminor}, we can deduce
$$\Delta_k^n(U^{-1}) = \Delta_k^n(V) = \prod_{j=n-k+1}^n \prod_{i=1}^{n-k}\dfrac{1}{\lambda_{n-j+1}-\lambda_{n-i+1}} = \prod_{j=1}^k \prod_{i=k+1}^{n}\dfrac{1}{\lambda_j-\lambda_i},$$
thus, completing the proof of Corollary \ref{toprightUinverseminor}.
\end{proof}

\subsection*{Acknowledgements}
This work was supported by NSF grant DMS-1615921. We thank the referees for their very careful reading of the manuscript.

\pdfbookmark[1]{References}{ref}
\LastPageEnding

\end{document}